\documentclass[acmsmall,nonacm]{acmart}
\usepackage[utf8]{inputenc}

\usepackage{ulem}

\newcommand{\FF}{\mathbb{F}}

\usepackage{subcaption}

\usepackage[noend]{algpseudocode}
\usepackage{amsmath,amsfonts,amsthm,algorithm}
\usepackage{wrapfig}
\usepackage{graphicx}
\usepackage{textcomp}
\usepackage{xcolor}
\usepackage{tikz}
\usepackage{multicol}
\usepackage{url,subcaption,svg,tcolorbox,arydshln}
\usetikzlibrary{decorations.pathreplacing}

\usepackage[table]{xcolor}
\definecolor{lightgray}{gray}{0.96}
\usepackage{framed}

\begin{document}

%%
%% The "title" command has an optional parameter,
%% allowing the author to define a "short title" to be used in page headers.
\title{Byzantine Consensus in Directed Graphs\\
with Message Authentication}

%%
%% The "author" command and its associated commands are used to define
%% the authors and their affiliations.
%% Of note is the shared affiliation of the first two authors, and the
%% "authornote" and "authornotemark" commands
%% used to denote shared contribution to the research.
\author{Nitin H. Vaidya}
\email{nitin.vaidya@georgetown.edu}
\affiliation{%
  \institution{Georgetown University}
  \country{Washington DC, USA}
}

\author{Lewis Tseng}
\affiliation{%
  \institution{UMass Lowell}
  \country{USA}}
\email{lewistseng@acm.org}

%%
%% By default, the full list of authors will be used in the page
%% headers. Often, this list is too long, and will overlap
%% other information printed in the page headers. This command allows
%% the author to define a more concise list
%% of authors' names for this purpose.
% \renewcommand{\shortauthors}{Vaidya and Tseng}

%%
%% The abstract is a short summary of the work to be presented in the
%% article.
\begin{abstract}
~\\~\\
\noindent
  ABSTRACT:\footnote{Revised May 2026. This is a  revised version of a manuscript prepared in February 2026. In particular, Lemma \ref{l_component_size} and Table \ref{t:undirected} are added, the Related Work section is revised, and some other minor edits have been made.}
  
  We consider the problem of reaching consensus in communication networks that are modeled by
directed graphs. We assume the existence of a message authentication mechanism (such as digital signatures)
to verify the integrity of messages. We identify the necessary and sufficient conditions on the {\it directed} communication graph for the following problems to be solvable: (i) exact consensus in synchronous systems; and (ii) approximate consensus in
asynchronous systems.
\end{abstract}

\maketitle

\thispagestyle{empty} 

\newpage 
\setcounter{page}{1}

\setlength{\parsep}{2pt}
\section{Introduction}
\label{s:intro}

Byzantine consensus is an important distributed primitive, and over forty years of research has been conducted on this problem \cite{lamport_agreement,lamport_agreement2,dolev_82_BG}. In recent years, there has been work on Byzantine consensus in systems where the underlying network is modeled by a directed graph (e.g., \cite{Tseng_podc2015,Tseng_PODC20, Sundaram_journal,Ding2021RelayIABC}). Although the necessary and sufficient conditions for Byzantine consensus {\it in the absence} of message authentication in directed graphs are known \cite{Tseng_podc2015,Tseng_PODC20}, the necessary and sufficient conditions under message authentication were not previously established for directed graphs. This paper addresses this gap. We consider a distributed system under the following assumptions:
\begin{itemize}

    \item The communication network is modeled by a directed graph $G(V,E)$ where each vertex in $V$ represents a node in the distributed  system, and each directed edge in $E$ represents a directed communication link. In particular, a node $u\in V$ can send messages directly to a node $v\in V$ if and only if $(u,v)\in E$. We will use the terms {\it node} and {\it vertex} interchangeably.
    \item Let us define $n=|V|$, the number of nodes in the system.
    
    \item Up to $f$ of the $n$ nodes may be Byzantine faulty. We assume that $n>f>0$.
    
    \item Each node $u \in V$ has a binary input (i.e., 0 or 1) denoted as $x_u$.

    \item Each node $u\in V$ produces an output denoted as $y_u$.

     \item A message authentication mechanism is available -- in particular, we assume the availability of digital signatures using public-key cryptography \cite{lamport_agreement2}. With message authentication, a faulty node cannot tamper a message from a non-faulty node without the tampering  being detectable.
\end{itemize}
In a synchronous system, a Byzantine consensus algorithm \cite{lamport_agreement,lamport_agreement2} should satisfy the requirements below.
\begin{itemize}
    \item \textit{Termination}: each non-faulty node eventually outputs a value.

    \item \textit{Validity}: the output at each non-faulty node must equal the input of some non-faulty node.

    \item \textit{Agreement}: the outputs of the non-faulty nodes must be identical.
\end{itemize}

In an asynchronous system,
due to the FLP impossibility \cite{FLP_one_crash}, we consider $\epsilon$-approximate Byzantine consensus \cite{AA_Dolev_1986}, with $\epsilon>0$, which requires the algorithm to satisfy the  requirements below.
\begin{itemize}
    \item \textit{Termination}: each non-faulty node eventually outputs a value.

    \item \textit{Validity}: the output at each non-faulty node must be in the convex hull (or the range) of the inputs of the non-faulty nodes.

    \item \textit{$\epsilon$-agreement}: the outputs of the non-faulty nodes must be within $\epsilon$ of each other.
\end{itemize}

\section{Our Main Results}
\label{s:main}

We assume that all the nodes know the graph $G(V,E)$.
We obtain the necessary and sufficient conditions on the communication graph $G(V,E)$ for Byzantine consensus to be achievable in synchronous and asynchronous systems {\it with message authentication}, assuming $n>f>0$. We first introduce some terminology  used to present the results in this section. Additional terminology and notation will be introduced in Section \ref{s_terminology}.

Given a set $F\subset V$ with $|F|\leq f$, let $G_F$ denote the subgraph of graph $G$ induced by the nodes in set\footnote{Set subtraction: Given sets $P$ and $Q$, $P-Q=\{ x~|~x\in P, ~~x\not\in Q\}$. For example, if $P=\{a,b,c\}$ and $Q=\{b,c,d\}$, then $P-Q=\{a\}$.} $V-F$. Thus, $V-F$ is the set of vertices in $G_F$, and the set of edges in $G_F$ is the subset of those edges in $E$ that have  both the endpoints in $V-F$. Thus, the set of edges in $G_F$ is $\{(u,v)~|~u,v\in V-F~\mbox{~and~}~ (u,v)\in E\}$.

Now we define ``reach sets'', and three  ``reach conditions''. The definition of {\it reach sets} uses paths in graph $G_F$ defined above. Reach sets were introduced in prior work \cite{Tseng_PODC20} on related consensus problems. The reach conditions below use a parameter $\rho$. Reach conditions equivalent to the case of $\rho=1$ were also introduced in the prior work \cite{Tseng_PODC20}. We will summarize the results in \cite{Tseng_PODC20} soon.

\begin{definition}[Reach set]\label{def:reach}
	Given a set $F\subset V$, $|F|\leq f$, and $u\in V-F$, define $reach_u(F)$ as follows.
	$$reach_u(F)=\{w\in V-F ~|~ \text{ $w$ has a directed path to $u$ in graph } G_{F}\}$$
\end{definition}
Thus, $reach_u(F)$ defined above is the set of nodes that have paths to node $u$ in the subgraph $G_F$ of $G(V,E)$ induced by the nodes in $V-F$. Thus, the set $reach_u(F)$ includes node $u$, but it does not include any nodes from $F$. For example, in Figure \ref{fig:1-reach}, assuming $F=\{4\}$, $reach_{3}(F)=\{1,2,3\}$ and $reach_{5}(F)=\{1,2,3,5,6\}$.

\begin{definition}[Reach Conditions]
\label{def:reach1} We define three {\it reach conditions}. Note that the $k$-reach condition below uses $k$ subsets of $V$ of size at most $f$.\footnote{In the reach conditions, $u$ and $v$ may be distinct or identical nodes. Similarly, sets $F,F_1,F_2$ may possibly have empty or non-empty pairwise intersections.}
	 \begin{itemize}
		\item \textbf{1-reach with parameter $\rho$:} For any $F\subset V$ such that $|F|\leq f$ and any $u,v\in V-F$, \vspace*{5pt} \\ \hspace*{1.7in}  $|reach_u(F)\cap reach_v(F)|\geq \rho $
		\item \textbf{2-reach with parameter $\rho$:} For any 
        subsets $F_1$ and $F_2$ of $V$ such that $|F_1|,|F_2|\leq f$, and any $u\in V-F_1$ and any $v\in V-F_2$, \vspace*{5pt}  \\ \hspace*{1.7in} $|reach_u(F_1)\cap reach_v(F_2)|\geq \rho$
		
		\item \textbf{3-reach with parameter $\rho$:}
        For any 
        subsets $F$, $F_1$ and $F_2$ of $V$ such that $|F|,|F_1|,|F_2|\leq f$, and any $u\in V-F-F_1$ and any $v\in V-F-F_2$, \vspace*{5pt} \\ \hspace*{1.7in}
		$|reach_u(F\cup F_1)\cap reach_v(F\cup F_2)|\geq \rho$
	\end{itemize}
	
\end{definition}

\subsection{Tight Conditions for Byzantine Consensus with Message Authentication}
\label{s:results}

For synchronous and asynchronous systems both, we have obtained two equivalent necessary and sufficient conditions for Byzantine consensus in directed graphs with message authentication. We present one of these conditions for synchronous and asynchronous consensus here. The alternate conditions are presented in Sections \ref{s:sync} and \ref{s:async}. All the requisite proofs are included later in the paper.

\begin{framed}
\noindent{\tt Exact Consensus in Synchronous Systems:}

\begin{theorem}
\label{t_reach_sync}
In a synchronous system with message authentication, exact Byzantine consensus in the presence of up to $f$ Byzantine faulty nodes is possible if and only if graph $G(V,E)$ satisfies the 1-reach condition with parameter $\rho=f+1$.
\end{theorem}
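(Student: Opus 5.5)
The argument has two directions: an indistinguishability argument for necessity and an explicit algorithm for sufficiency.

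\textbf{Necessity.} Suppose the 1-reach condition with $\rho=f+1$ fails. Then there are $F$ with $|F|\le f$ and $u,v\in V-F$ such that $S=reach_u(F)\cap reach_v(F)$ has $|S|\le f$. Let $A=reach_u(F)-S$ and $B=reach_v(F)-S$; these are disjoint. The key structural fact is that in $G_F$ no node outside $reach_u(F)$ has an edge into $reach_u(F)$, and similarly for $v$. So, apart from messages from $F$, the nodes of $A\cup S$ hear only from each other, and likewise for $B\cup S$.

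We then build three executions.
\begin{itemize}
\item In $\alpha$, every node has input $0$ and the faulty set is $F$. The faulty nodes crash, i.e.\ stay silent. Validity forces $u$ to output $0$.
\item In $\beta$, every node has input $1$ and the faulty set is $F$, again silent. Validity forces $v$ to output $1$.
\item In $\gamma$, nodes in $A$ have input $0$, nodes in $B$ have input $1$, and the faulty set is $S$, which is allowed since $|S|\le f$. Nodes in $F$ are correct but, because of synchrony and the topology, we want them to be irrelevant to $u$ and $v$.
\end{itemize}

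The obstacle is that $F$ is nonfaulty in $\gamma$ and can send messages into $reach_u(F)$. To avoid this, choose the faulty set in $\gamma$ to be $F$ instead. A nonempty $S$ then has to be run honestly, yet $S$ must look like input $0$ to $u$ and input $1$ to $v$. Signatures stop a single node from equivocating consistently, since forwarded signed messages would expose it. The fix is to make the faulty set in $\gamma$ equal to $S$ while $F$ stays silent by being slow. But slowness is not allowed in a synchronous system, so instead exploit symmetry. Let $F$ behave exactly as it would toward $A$ in an all-$0$ run and toward $B$ in an all-$1$ run. Since $F$ is correct in $\gamma$ it cannot do this, so instead declare $F\cup S$ partly faulty.

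Because $|F\cup S|$ may exceed $f$, the clean route is this: in $\gamma$, let the faulty set be $S$ and let $F$ be correct; in $\alpha$ and $\beta$, let $F$ be faulty and behave exactly as $F$ does in $\gamma$. Signatures only prevent forging correct nodes' messages, and every message $F$ sends in $\gamma$ is signed by nodes that are also present in $\alpha$ and $\beta$. The faulty $S$ in $\gamma$ plays its $\alpha$-behaviour toward $A$ and its $\beta$-behaviour toward $B$, using only its own keys. We then check that $u$'s view in $\gamma$ matches its view in $\alpha$ and $v$'s matches $\beta$. Agreement in $\gamma$ is then violated. This simulation, making sure no signature of a correct node must be forged, is the main obstacle, and it is where I would be most careful with round-by-round induction.

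\textbf{Sufficiency.} Use a Dolev--Strong style algorithm with signature chains over $f+1$ rounds, flooding signed inputs along all directed paths, and run a \emph{per-candidate-fault-set} decision. Each node $u$ considers every $F$ with $|F|\le f$. For the true faulty set $F^*$, every correct node in $reach_u(F^*)$ gets its signed input to $u$ along fault-free paths. Validity follows by having $u$ output $b$ only if some set whose members provably hold $b$ cannot be all faulty; $f+1$ signed claims of $b$ certify a correct holder.

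For agreement, use $|reach_u(F^*)\cap reach_v(F^*)|\ge f+1$: the two nodes share at least $f+1$ common sources. Messages that are relayed and signed by a common node in that intersection become visible to both, so both compute the same set of ``certified'' values with a deterministic tie-break. I would formalize this through the alternate condition of Section~\ref{s:sync}. Proving that the views agree even when faulty nodes are adversarial in timing of forwarding is the delicate step, handled with the standard Dolev--Strong round-acceptance rule.
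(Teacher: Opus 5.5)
Both directions have genuine gaps.

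\textbf{Necessity.} Your final construction fails for two independent reasons.

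First, it asks $F$, which is faulty in $\alpha$, to send exactly what the correct $F$ sends in $\gamma$. The in-closure of $reach_u(F)$ holds only inside $G_F$. Nodes of $B$ (or of $V-F-reach_u(F)$) may have edges into $F$, and a correct $F$ in $\gamma$ will relay their signed messages attesting input $1$ into $reach_u(F)$. In $\alpha$ those nodes are correct with input $0$ and never sign that content, so $F$ would have to forge it. The criterion ``signed by nodes that are also present in $\alpha$'' is not the relevant one.

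Second, if $u\in S$ then $reach_u(F)\subseteq reach_v(F)$, so $A=\emptyset$ and $u$ itself is faulty in $\gamma$; symmetrically for $v$. This always happens when $u=v$, which the 1-reach condition allows. That case ($|reach_u(F)|\le f$, i.e.\ a source component that is too small; think of $n=2$, $f=1$) is exactly where $\rho=f+1$ comes from, and there your $\gamma$ has no two correct nodes that can disagree.

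The missing idea is that the faulty party should only \emph{suppress} messages, never forge them. Note that $S$ is in-closed in $G_F$: any $w\in V-F$ with an edge into $S$ lies in $reach_u(F)\cap reach_v(F)$. So $S$ can play the role of the paper's $S_F$.
\begin{itemize}
\item If $S=\emptyset$, your $\alpha,\beta,\gamma$ work with $F$ silent in all three.
\item If $S\neq\emptyset$, use the executions of Appendix~\ref{a:sync:necessity}. In the bad run, $F$ is faulty, cuts all communication between $F$ and $S$ in both directions, and otherwise behaves correctly with input $1$. $S$ has input $0$ and $V-F-S$ has input $1$. Then $S$ cannot distinguish this from a run where $F$ is faulty and all correct nodes have input $0$. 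Meanwhile $V-F-S$ cannot distinguish it from a run where $S$ is the faulty set, cutting itself off from $F$, and all correct nodes have input $1$.
\item If $V=F\cup S$, use the two-sided argument in which either side could be the faulty one.
\end{itemize}
The disagreement is between $S$ and $V-F-S$, not between a ``$u$ side'' and a ``$v$ side''.

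\textbf{Sufficiency.} The Dolev--Strong acceptance rule works because a correct node that accepts a chain in round $r$ can relay it to \emph{every} correct node by round $r+1$. In a directed graph this fails. Under Condition~S, only the nodes of $S_{F^*}$ are guaranteed fault-free paths to everyone (Lemma~\ref{l_sf_reach_Fstar}). A correct node outside $S_{F^*}$ may have no path to another correct node, so a faulty node can hand $u$ a signed chain that $v$ never sees. Your ``same certified set'' claim then breaks.

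Knowing that $u$ and $v$ share $f+1$ correct sources does not help, since $u$ cannot tell which of its information is shared without knowing $F^*$. In addition, a rule requiring $f+1$ matching signed claims need not certify any value: the guaranteed $f+1$ correct sources may be split between $0$ and $1$.

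The paper's remedy is to handle the candidate sets \emph{sequentially}, carrying a state between them.
\begin{itemize}
\item For each $F$ with $|F|=f$ in turn, values are flooded only from $I_F\cup S_F$, along paths avoiding $F$.
\item The state is updated by a majority over a set of values in which correct nodes form a strict majority in \emph{every} iteration. When $|S_F|\le 2f$, this needs the Hall-type matching Lemma~\ref{l_match_size}. This preserves validity iteration by iteration.
\item In the first iteration with $F\supseteq F^*$, all of $S_F$ is correct and reaches everyone. So all correct nodes obtain identical $Y$ sets and hence identical states.
\item Later iterations preserve this agreement, because every update returns some correct node's current state.
\end{itemize}
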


\noindent{\tt $\epsilon$-Approximate Consensus in Asynchronous Systems:}

\begin{theorem}
\label{t_reach_async}
In an asynchronous system with message authentication, $\epsilon$-approximate Byzantine consensus in the presence of up to $f$ Byzantine faulty nodes, with any $\epsilon>0$, is possible if and only if graph $G(V,E)$ satisfies the 2-reach condition with parameter $\rho=f+1$.
\end{theorem}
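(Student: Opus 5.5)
We prove the two directions of Theorem \ref{t_reach_async} separately: an indistinguishability argument for necessity, and an iterative algorithm built on authenticated flooding for sufficiency.

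\textbf{Necessity.} Suppose the 2-reach condition with $\rho=f+1$ fails. Then there exist $F_1,F_2$ with $|F_1|,|F_2|\le f$, $u\in V-F_1$, $v\in V-F_2$, such that $R=reach_u(F_1)\cap reach_v(F_2)$ satisfies $|R|\le f$. We build three executions.

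\begin{itemize}
\item In $E_1$, the nodes of $F_1$ crash (or stay silent forever, which asynchrony allows). All nodes in $reach_u(F_1)-R$ have input $0$, and the nodes of $R$ are Byzantine: they behave correctly toward $u$'s side as if their input were $0$. The output of $u$ depends only on messages originating in $reach_u(F_1)$, since no other node has a path to $u$ in $G_{F_1}$ and the edges from $F_1$ are silent. By validity, $u$ must output $0$.
\item In $E_2$, symmetrically, $F_2$ is silent, $reach_v(F_2)-R$ has input $1$, and the nodes of $R$ behave as if their input were $1$. Then $v$ outputs $1$.
\item In $E_3$, the faulty set is $R$, with $|R|\le f$. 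All other nodes are correct, but messages from $F_1$ to $reach_u(F_1)$ and from $F_2$ to $reach_v(F_2)$ are delayed until after $u$ and $v$ decide. The nodes of $R$ equivocate: they run the input-$0$ behavior toward the $u$ side and the input-$1$ behavior toward the $v$ side.
\end{itemize}

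Authentication does not prevent this, because the equivocating nodes are faulty and sign their own inconsistent messages. Non-faulty nodes in $reach_u(F_1)-R$ never forward signed $R$-messages to $v$'s side before $u$ decides, since all such paths pass through $F_1$, whose messages are delayed. Hence $u$ sees exactly what it saw in $E_1$ and outputs $0$, while $v$ sees exactly what it saw in $E_2$ and outputs $1$. This violates $\epsilon$-agreement for $\epsilon<1$. The delicate point is that the non-faulty part of $reach_u(F_1)\cap$ (things $v$ hears) must be exactly $R$ or silenced nodes, which is what the definitions guarantee. Validity also needs inputs to be consistent in $E_3$, so the inputs of non-faulty nodes in $E_3$ must include both $0$ and $1$; we choose them that way.

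\textbf{Sufficiency.} We use an iterative algorithm in rounds (asynchronous rounds tagged by round number).

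In each round $t$, every node floods its signed state value. Signatures stop forgery of a non-faulty node's value, so flooded messages carry provenance. Node $u$ waits until, for some guessed set $F$ with $|F|\le f$, it has received round-$t$ values from all of $reach_u(F)$ via paths avoiding $F$. Some such $F$ always works, namely the actual faulty set, since every path inside $G_F$ is then fault-free and eventually delivers. Node $u$ then updates its value to the midpoint of the minimum and maximum of the values it received, after trimming appropriately.

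Because faulty nodes cannot forge values, each value a non-faulty node receives is either a genuine non-faulty value or some faulty node's own signed value. The latter is a problem for validity. To handle it, $u$ should use only values it can certify: it discards values from a set of at most $f$ nodes, namely the extremes. Alternatively, it keeps per-source values and applies a "witness" trimming: a value is accepted only if it lies within the range of non-faulty values, ensured by discarding the $f$ largest and $f$ smallest per-source values. I would need to check which trimming is correct.

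\textbf{Contraction.} For non-faulty $u,v$ with chosen sets $F_1,F_2$, the 2-reach condition gives at least $f+1$ common nodes in $reach_u(F_1)\cap reach_v(F_2)$. Since the values are signed, $u$ and $v$ hold the same value from each common node; even a faulty node cannot make them differ without its equivocation being detected through the shared signed chain. At least one common node is non-faulty. From this, the updated values of $u$ and $v$ overlap in a way that shrinks the range of non-faulty values by a constant factor, e.g. $1/2$ or $1-1/2^n$, per round. This gives $\epsilon$-agreement after $O(\log(1/\epsilon))$ rounds.

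\textbf{Main obstacle.} I expect the main obstacle to be this contraction-plus-validity step: showing that a common, possibly faulty, source's value is seen identically by $u$ and $v$, and that trimming keeps values within the non-faulty hull. Relaying signed transcripts, so that equivocation is detectable and the equivocator is excluded, is what I would try first.
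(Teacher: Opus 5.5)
\textbf{The sufficiency direction has a genuine gap.} Your contraction step rests on the claim that $u$ and $v$ ``hold the same value from each common node.'' That claim is false. A faulty node in $reach_u(F_1)\cap reach_v(F_2)$ can sign $0$ for one side and $1$ for the other, and the asynchronous scheduler can keep the two signatures apart until both $u$ and $v$ have finished the round. So equivocation detection cannot be waited for. A common \emph{correct} node does not rescue the argument either, because its value can be trimmed at both nodes. Concretely, take the complete graph on $\{1,2,3,4\}$ with $f=1$ (it satisfies 2-reach with $\rho=2$), with node $4$ faulty and correct values $0,\tfrac{1}{2},1$ at nodes $1,2,3$. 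Node $4$ signs $0$ for node $1$ and $1$ for node $2$, and all relays are delayed. Node $1$ can then finish with guess $F=\{3\}$, holding $0,\tfrac{1}{2},0$ from nodes $1,2,4$. Node $2$ can finish with guess $F=\{1\}$, holding $\tfrac{1}{2},1,1$ from nodes $2,3,4$. Trimming one value at each end leaves $\{0\}$ and $\{1\}$, and repeating the schedule keeps the correct states at $0,1,1$ forever.

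The missing idea is the paper's witness mechanism over source components. Node $v$ guesses $F_v$ and waits until, for \emph{every} $w\in S_{F_v}$, the latest set $X_{v,w}$ signed by $w$ contains exactly one value of each node of $S_{F_v}$. It then removes the $\phi$ detected equivocators, removes $f-\phi$ values at each end of $X_{v,v}$, and takes the midpoint. For correct $u,v$ there is a correct $w\in S_{F_u}\cap S_{F_v}$. Since $w$'s set only grows, w.l.o.g.\ $X_{v,w}\subseteq X_{u,w}\subseteq X_{u,u}$. The rank-$(f+1)$ value among the $S_{F_v}$-values in $X_{v,w}$ then has values of $f$ distinct nodes below it and $f$ above it at both $u$ and $v$; this is where $|S_{F_v}|\ge 2f+1$ is used. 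So it survives, or is bracketed by surviving values, at both nodes, hence $[m_u,M_u]\cap[m_v,M_v]\neq\emptyset$ and the range halves. Reach sets cannot play the role of $S_F$ here. A node of $reach_u(F^*)$ generally cannot hear all of $reach_u(F^*)$ (e.g.\ if $u$ is a sink), so the witness wait would not terminate. By contrast, $S_{F^*}$ is strongly connected and reaches every node through fault-free relays (Lemma~\ref{l_sf_reach_Fstar}). You therefore also need the translation of 2-reach into a unique $S_F$ with $|S_F|\ge 2f+1$ and $|S_F\cap S_{F'}|\ge f+1$ (Lemma~\ref{l:async:1}).

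\textbf{The necessity direction is repairable, but has gaps as written.} First, in $E_1$ you make $F_1$ crash \emph{and} $R$ Byzantine, which can be $2f$ faults. Also, validity forces $u$ to output $0$ only if every correct node has input $0$. Let $F_1$ be the only (crashed) faulty set and give all of $V-F_1$ input $0$, symmetrically in $E_2$; $R$ is faulty only in $E_3$. More seriously, $u$ or $v$ may lie in $R$ (e.g.\ $u=v$ and $F_1=F_2$). Then $u$ is faulty in $E_3$ and its output proves nothing, and $reach_u(F_1)-R$ may even be empty. Argue instead about correct witnesses $u'\in reach_u(F_1)-R$ and $v'\in reach_v(F_2)-R$. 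Every edge entering $reach_u(F_1)$ from outside starts in $F_1$, and every edge entering $reach_v(F_2)$ from outside starts in $F_2$. So delay the messages from $F_1$ into $reach_u(F_1)-R$ and from $F_2$ into $reach_v(F_2)-R$. Do not delay messages into $R$, which must receive in time the signed messages it replays. This makes $u'$ see $E_1$ and $v'$ see $E_2$. Your justification swaps the roles: paths from $u$'s side into $reach_v(F_2)$ enter through $F_2$, and the relevant deadline is $v$'s decision. If $reach_u(F_1)\subseteq R$, then $|reach_u(F_1)|\le f$. Replacing $(F_2,v)$ by $(reach_u(F_1),y)$ for any $y\notin reach_u(F_1)$ gives an empty intersection, which is the good case. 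With these repairs, your single three-execution argument is a valid and shorter alternative to the paper's route. The paper first proves 2-reach equivalent to Condition A and then refutes each part of Condition A separately: two source components; $|S_F|\le 2f$ via the $L/R$ split; and $|S_F\cap S_{F'}|\le f$ via the $A,X,B$ equivocation argument, which is essentially your $E_3$.
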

% }
\end{framed}

For comparison with prior work on consensus in directed graphs, Table \ref{t:directed} summarizes the previous results as well as our results (in the rightmost column). The previous work has obtained tight conditions for tolerating crash\footnote{For crash failures, the validity condition for exact consensus requires that the output of the non-faulty nodes should equal the input of some node, and the validity condition for $\epsilon$-approximate consensus requires that the output be in the range of the inputs of all the nodes.} failures \cite{Tseng_podc2015}, and for Byzantine failures {\it without} message authentication \cite{Tseng_podc2015,Tseng_PODC20}. 
\cite{Tseng_PODC20} also obtained reach
conditions equivalent to the graph conditions obtained in \cite{Tseng_podc2015}.
This paper obtains results for Byzantine failures {\it with} message authentication. 

\begin{table}[h]
\begin{center}
\begin{tabular}{|p{55pt}|p{75pt}|p{110pt}|p{105pt}|} \hline
& Up to $f$ \newline Crash faults
& Up to $f$ \newline Byzantine faults {\bf without} \newline   message authentication & \cellcolor{lightgray} Up to $f$\newline Byzantine faults {\bf with}\newline  message authentication\\ \hline
Synchronous \newline
Systems & {1-reach} with $\rho = 1$ \newline \hspace*{20pt}\cite{Tseng_podc2015} & { 3-reach} with $\rho = 1$  \newline \hspace*{20pt}\cite{Tseng_podc2015}
& \cellcolor{lightgray} 1-reach with $\rho=f+1$ {\bf (This paper)}\\ \hline
Asynchronous \newline
Systems & { 2-reach} with $\rho=1$ \newline\hspace*{20pt}\cite{Tseng_podc2015} & { 3-reach} with $\rho=1$ \newline \hspace*{20pt}\cite{Tseng_PODC20}
& \cellcolor{lightgray} 2-reach with $\rho=f+1$ {\bf (This paper)}
\\ \hline
\end{tabular}
\end{center}
\caption{Necessary and sufficient conditions for consensus in directed graphs}
\label{t:directed}
\end{table}

It is easy to show the following total order on the above graph conditions, where $P \Leftarrow Q$ means that condition $Q$ implies condition $P$ (or, equivalently, condition $Q$ is stronger than $P$). Appendix \ref{a:reach} proves this order. Shown in bold below are the conditions in the rightmost column in Table \ref{t:directed}. The order below is strict: in particular, if $P\Leftarrow Q$ below then there exists a graph that satisfies condition $P$ but not condition $Q$.

\begin{tabular}{p{50pt}p{10pt}p{60pt}p{10pt}p{50pt}p{10pt}p{60pt}p{10pt}p{50pt}}
1-reach \newline with $\rho=1$ &
$\Leftarrow$ & {\bf 1-reach\newline with $\rho=f+1$} &
$\Leftarrow$ & 2-reach \newline with $\rho=1$ & 
$\Leftarrow$ & {\bf 2-reach \newline with $\rho=f+1$} &
$\Leftarrow$ & 3-reach \newline with $\rho=1$ 
\end{tabular}

For $f=1$, Figure \ref{fig:1-reach} provides an example of a graph that satisfies the 1-reach condition with $\rho=f+1=2$ (but does not satisfy 2-reach with $\rho=1$), and Figure \ref{fig:2-reach} provides an example of a graph that satisfies 2-reach with $\rho=f+1=2$ (but does not satisfy 3-reach with $\rho=1$).
\begin{figure}[!htb]
    \scriptsize 
    \centering
    \begin{subfigure}[b]{0.35\textwidth}
        \includegraphics[height=3.75cm,width=.8\linewidth]{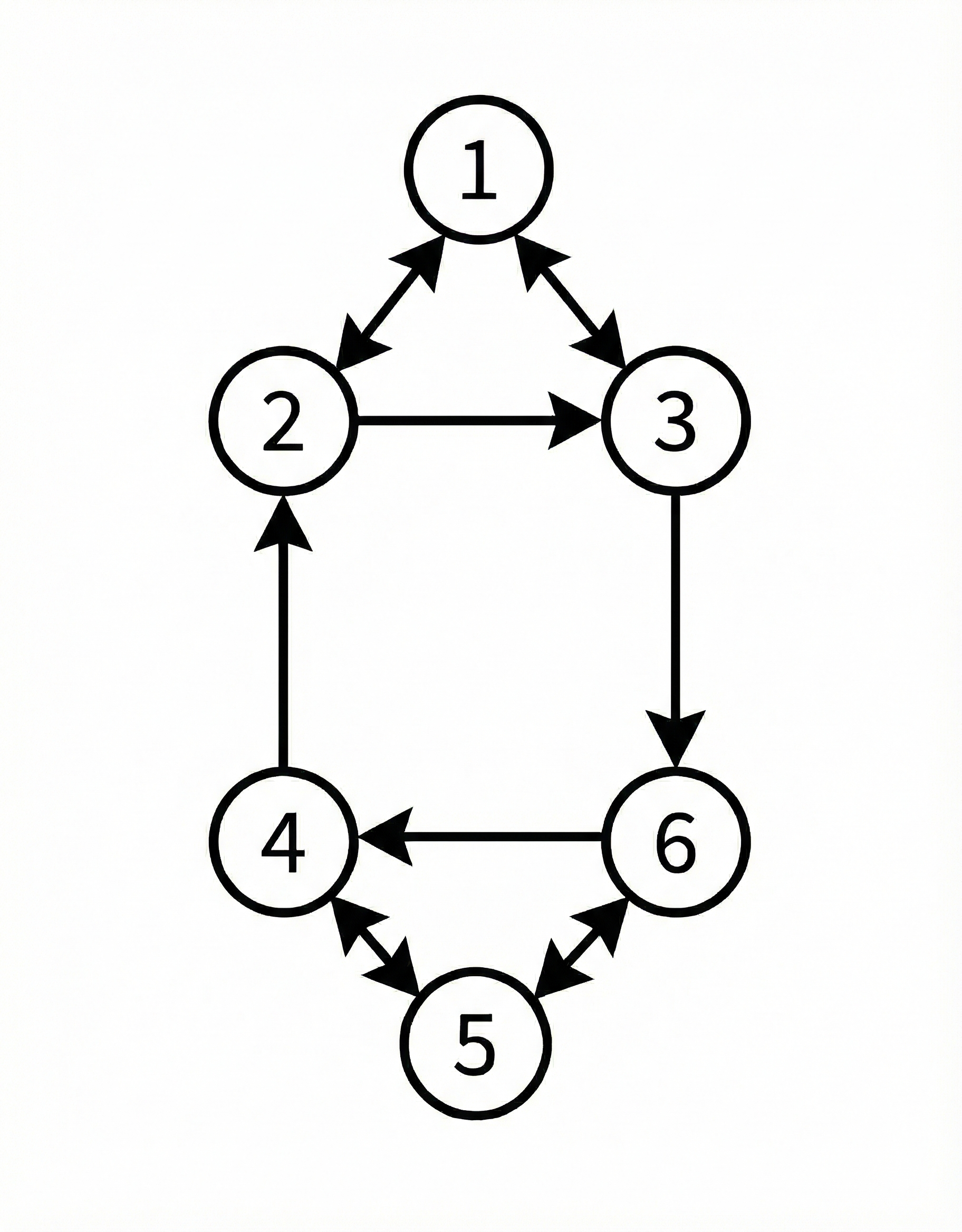}
        \caption{1-reach with $\rho=f+1=2$}
        \label{fig:1-reach}
    \end{subfigure} \hfill
    \begin{subfigure}[b]{0.35\textwidth}
        \includegraphics[height=3.75cm,width=.8\linewidth]{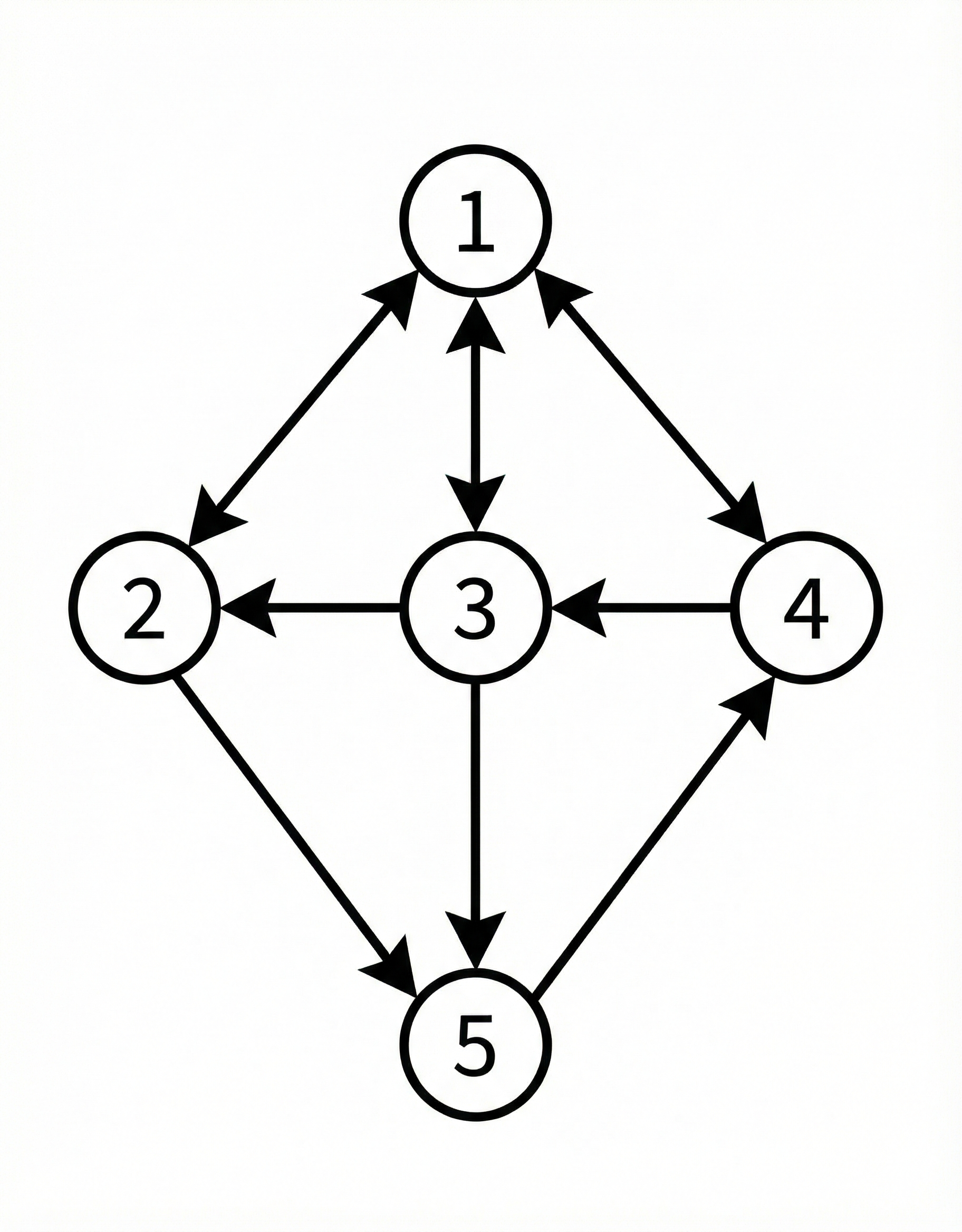}
        \caption{2-reach with $\rho=f+1=2$}
        \label{fig:2-reach}
    \end{subfigure}\hfill
    \caption{Example graphs for $f=1$ }
    \label{fig:example}
\end{figure}

For comparison, we summarize in Table \ref{t:undirected} the necessary and sufficient conditions for undirected graphs, where $\kappa$ denotes the vertex connectivity of graph $G(V,E)$ (assuming that the graph is undirected). Not surprisingly, for undirected graphs, the conditions listed in Table \ref{t:directed} are equivalent to the corresponding conditions in Table \ref{t:undirected}.

\begin{table}[h]
\begin{center}
\begin{tabular}{|p{55pt}|p{75pt}|p{110pt}|p{105pt}|} \hline
& Up to $f$ \newline Crash faults & Up to $f$ \newline Byzantine faults {\bf without} \newline   message authentication & Up to $f$\newline Byzantine faults {\bf with}\newline  message authentication\\ \hline
Synchronous \newline
Systems & $n>f$ and $\kappa>f$ \newline \hspace*{20pt}\cite{welch_book} & $n>3f$ and $\kappa>2f$  \newline \hspace*{20pt}\cite{dolev_82_BG}
&  $n>2f$ and $\kappa>f$  \newline \hspace*{20pt}\cite{Byz_Lamport_1982,DolevS83_AuthByz,welch_book} \\ \hline
Asynchronous \newline
Systems & $n>2f$ and $\kappa>f$ \newline\hspace*{20pt}\cite{welch_book,AA_Fekete_aoptimal} & $n>3f$ and $\kappa>2f$ \newline \hspace*{20pt}\cite{dolev_82_BG,AA_Dolev_1986,impossible_proof_lynch}
&  $n>3f$ and $\kappa>f$  \newline \hspace*{20pt}\cite{AA_Dolev_1986,DolevS83_AuthByz,welch_book,GhineaLW23}
\\ \hline
\end{tabular}
\end{center}
\caption{Necessary and sufficient conditions for consensus in undirected graphs. $\kappa$ denotes vertex connectivity.}
\label{t:undirected}
\end{table}

\section{Related Work}
\label{sec:related}

We summarized some of the most closely related work above. We now briefly discuss some other work on Byzantine consensus, and reliable communication in the presence of Byzantine faults.
There is work on using message authentication for exact consensus in {\it undirected} graphs \cite{lamport_agreement2,Bansal_disc11} as well as {\it approximate} Byzantine consensus in complete graphs \cite{DBLP:conf/podc/GhineaLW22}.

Lamport et al. \cite{lamport_agreement2} considered undirected graphs where all the nodes can add a cryptographic signature to their messages. Several past works have considered {\it partial authentication}, where only a subset of the nodes are able to add tamper-proof cryptographic signatures to the messages. Bansal et al. \cite{Bansal_disc11} considered undirected graphs under {\it partial} authentication (\cite{cryptoeprint:2008/287} considered a related problem in complete graphs). More recently, for {\it complete graphs}, Lenzen, Loss, Shi, and Wagner~\cite{LenzenLSW26PartialAuth} have also studied Byzantine consensus with {\it partial} authentication. These past works on partial authentication differ somewhat in the assumptions they make.
In other related research, there is significant work on reliable communication with message authentication in directed graphs (e.g., \cite{DBLP:journals/ipl/KishoreVS18,Srinathan06}). Other work 
(e.g., \cite{DBLP:conf/icdcn/PatraCR10,DBLP:journals/jacm/BadanidiyuruPCSR12,PPS17}) has explored reliable message transmission without message authentication.

For complete graphs, {\it approximate} consensus with signatures has been studied in synchronous systems in \cite{Round-optimal_BA_Eurocrypt22,OptimalClockSyncSign_PODC22}. 
Ghinea, Liu-Zhang, and Wattenhofer \cite{DBLP:conf/podc/GhineaLW22} recently introduced a novel class of approximate consensus algorithms  with message authentication. Their algorithms tolerate up to $t_s<n/2$ corruptions in synchronous runs while also tolerating $t_a<n/3$ in asynchronous runs. 
Their subsequent work \cite{GhineaLW23} considers $D$-dimensional approximate consensus in {\it complete graphs}. \cite{Vaidya_BVC,Vaidya_incomplete,Mendes_STOC13} previously introduced multi-dimensional vector consensus (including exact and $\epsilon$-approximate consensus). \cite{SamirAsyncConsensusArxiv19} studies approximate consensus in incomplete undirected graphs under the {\it local broadcast} communication model. The {\it local broadcast} model used in \cite{SamirAsyncConsensusArxiv19} assumes that all outgoing neighbors of a node receive all its transmissions identically (which constrains a faulty node's ability to equivocate). The algorithms in \cite{Vaidya_BVC,Vaidya_incomplete,Mendes_STOC13,SamirAsyncConsensusArxiv19} do not use message authentication.

For directed graphs, there is work on iterative approximate consensus algorithms that use a {\it restricted algorithm structure} to achieve approximate  consensus. These algorithms, in each iteration, allow only one-hop communication using local topology information, and carry a small amount of state across iterations. While these algorithms are simple and have a low memory requirement, as a trade-off, the communication graphs need to satisfy more restrictive (i.e., stronger) properties. Tight graph conditions for iterative approximate Byzantine consensus on directed graphs (in the absence of message authentication) have been obtained in 
\cite{vaidya_PODC12,Sundaram_journal,sv17}.
Ding \cite{Ding2020RelayABC,Ding2021RelayIABC} studied iterative algorithms for approximate Byzantine consensus with message authentication, but under graph conditions that are {\it not tight}. Tight conditions for exact Byzantine consensus in directed graphs under the {\it local broadcast} communication model, and without message authentication, are identified in \cite{khan_et_al:LIPIcs.OPODIS.2019.30}.

\section{Graph Terminology and Notations}
\label{s_terminology}

Recall that, for $F\subset V$, we defined $G_F$ as the subgraph of graph $G$ induced by the vertices in $V-F$. We also defined {\it reach sets} and {\it reach conditions} in Section \ref{s:main}.
For the ease of reference, this section summarizes additional graph-related terminology and notations, which will be used later in the paper.

\begin{definition}[Edge set $E_{P,Q}$]
Given $P,Q\subseteq V$, let $E_{P,Q}$ denote the subset of $E$ consisting of edges from the nodes in $P$ to the nodes in $Q$. That is, $E_{P,Q}=\{(p,q)~|~(p,q)\in E, ~p\in P,~q\in Q\}$.
\end{definition}

Recall that a {\it strongly connected component} in a given graph is a {\it maximal} set of vertices such that each pair of vertices in that set has paths to each other in the given graph. Thus, if a graph contains more than one strongly connected component, then any two distinct strongly connected components are disjoint. A non-empty graph will include at least one strongly connected component. Also, for some graph $H$, if a subset of vertices $X$ in $H$ does not have any incoming edges from vertices outside $X$, then the subgraph of $H$ induced by $X$ will contain at least one strongly connected component of graph $H$.

\begin{definition}[Source Component]
\label{def:sourceComponent}    
A {\it source component} in a given graph is a strongly connected component that does
not have incoming links from any nodes outside that component.
\end{definition}

\begin{itemize}

\item If $(u,v)\in E$, then $u$ is an {\it incoming} neighbor of $v$, and $v$ is an {\it outgoing} neighbor of $u$.
\item For $P\subseteq V$, if there exists $w\in V-P$ and $u\in P$ such that $(w,u)\in E$, then $w$ is an {\it incoming} neighbor of set $P$.

\item For $P\subseteq V$, if there exists $u\in P$ and $v\in V-P$ such that $(u,v)\in E$, then $v$ is an {\it outgoing} neighbor of set $P$.

\item A node in a directed path is said to be an {\it internal node} if it is neither the source (i.e., the first node) nor the destination (i.e., the last node) in the path. For example, in Figure \ref{fig:1-reach}, in the directed path $1\rightarrow 3 \rightarrow 6 \rightarrow 5$, nodes $3$ and $6$ are internal nodes; node $1$ is the source node, and node 5 is the destination node.

\item We will consider a bipartite graph consisting of two disjoint sets of nodes, say $P$ and $Q$, and a set of directed links from nodes in $P$ to nodes in $Q$. Suppose that the set of links is called $L$, and the bipartite graph is called $B$. Then we will denote this bipartite graph as $B(P,Q,L)$.

\end{itemize}

\section{Synchronous Byzantine Consensus with Message Authentication}
\label{s:sync}

We now present an alternate version of the necessary and sufficient conditions for exact consensus in synchronous systems. The equivalence of the alternate condition in Theorem \ref{t_authByzSync} below and the 1-reach condition with $\rho=f+1$ (used in Theorem \ref{t_reach_sync}) is proved in Appendix \ref{a:sync_equiv}.

A {\it source component} is defined in Section \ref{s_terminology}.
As defined in Section \ref{s:main}, $G_F$ denotes the subgraph of $G(V,E)$ induced by the nodes in $V-F$. Condition S below considers source components in $G_F$. This condition is named "S" because it is used in the context of {\bf S}ynchronous systems.

\begin{framed}
\begin{definition}[Condition S]
\label{def:cond-authByzSync}
$G(V,E)$ is said to satisfy Condition S if \\for any  $F \subset V$ with $|F| \leq f$, there is a unique source component in graph $G_F$, and the source component contains at least $f+1$ nodes.
\end{definition}

\begin{theorem}
\label{t_authByzSync}
In a synchronous system with message authentication, exact Byzantine consensus is possible in graph $G(V,E)$ if and only if graph $G(V,E)$ satisfies Condition S. 
\end{theorem}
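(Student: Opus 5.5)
We prove the two directions separately: necessity by indistinguishability arguments, sufficiency by an explicit algorithm in the style of Dolev--Strong.

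\emph{Necessity.} Suppose Condition S fails for some $F$ with $|F|\le f$. Then one of two things happens in $G_F$.

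Case (a): $G_F$ has two distinct source components $S_1,S_2$. Let all nodes of $F$ crash, i.e., stay silent. Nodes in $S_1$ receive no messages from $V-F-S_1$, since $S_1$ has no incoming edges in $G_F$. Give $S_1$ all inputs $0$ and $S_2$ all inputs $1$. Now compare with the execution where every node in $V-F$ has input $0$: there validity forces $S_1$ to output $0$. Run $S_1$'s view in both executions in lockstep; $S_1$ cannot distinguish them, because its view depends only on its own inputs. Hence $S_1$ outputs $0$, and symmetrically $S_2$ outputs $1$. This violates agreement. Crash behavior is a valid Byzantine behavior, and signatures do not help, since the faulty nodes send nothing.

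Case (b): $G_F$ has a unique source component $S$ with $|S|\le f$. Consider executions $\alpha_0$ and $\alpha_1$ in which $F$ is silent and all of $V-F$ has input $0$ (respectively $1$). Then build a third execution $\beta$ in which $S$ is Byzantine, which is allowed since $|S|\le f$. In $\beta$:
\begin{itemize}
\item the nodes of $F$ are correct with input $0$;
\item $S$ behaves toward $F$-side neighbors as in an execution where all nodes have input $0$;
\item $S$ ignores messages from $F$ and behaves toward $V-F-S$ exactly as in $\alpha_1$.
\end{itemize}
$S$ can do this because it only needs to forge its own signatures and to simulate its own behavior; it never needs to forge a correct node's signature. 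The obstacle is that $V-F-S$ may also receive edges from $F$. So we instead make $F$ faulty and $S$ correct in a chained argument. We compare $\alpha_1$ with an execution $\gamma$ in which $F$ is correct with input $0$, $S$ has input $1$, and $V-F-S$ has input $0$.

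I expect this case to be the main obstacle: carefully constructing a sequence of executions in which each pair is indistinguishable to some correct set, using the fact that every node of $V-F-S$ is reachable from $S$ in $G_F$, and the fact that $|S|\le f$, so that $S$ itself may play faulty. The standard route is a chain of executions in which $S$ alternates between correct and faulty. In each, the nodes of $V-F$ outside $S$ must follow $S$'s value, since their only unconstrained information source is $S$ (with $F$ silent). When $S$ is faulty, validity instead forces them to follow the other nodes. Equivocation is possible because $S$ signs only for itself.

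\emph{Sufficiency.} Assume Condition S. Each node runs a Dolev--Strong-style signed flooding for $f+1$ phases, each phase lasting enough rounds to traverse the graph, about $n$ rounds. Every node signs and floods its input along all paths. A value chain is accepted only if it carries distinct valid signatures and was relayed along actual edges. Let $F^*$ be the actual faulty set. By Condition S, $G_{F^*}$ has a unique source component $S^*$ with $|S^*|\ge f+1$, and $S^*$ reaches every non-faulty node through non-faulty paths. Since signatures cannot be forged, the standard Dolev--Strong argument shows the following. Any value accepted by a non-faulty node with $k$ signatures in phase $k$ is, by phase $k+1$, accepted by all nodes reachable from it in $G_{F^*}$. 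Moreover, any chain of length $f+1$ contains a non-faulty signer.

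Nodes then decide as follows. Each node determines, for every candidate set $F$, the source component of $G_F$. It outputs the value that is reported, by signed inputs, unanimously or by a deterministic rule from $f+1$ distinct source-component nodes. The precise rule is that every non-faulty node computes the same set of (signer, input) pairs, which Dolev--Strong style agreement guarantees; it then applies a common deterministic function that picks the input of some node certified non-faulty, using $|S^*|\ge f+1$ and the unforgeability of signatures. Finally, we verify termination, agreement from the identical accepted sets, and validity from the fact that an input signed consistently by $f+1$ distinct nodes of a source component must include a non-faulty node's input.
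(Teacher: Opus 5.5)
\textbf{Sufficiency.} Your necessity Case (a) is correct and matches the paper's Part~1. The sufficiency argument, however, has a genuine gap at the step ``every non-faulty node computes the same set of (signer, input) pairs, which Dolev--Strong style agreement guarantees.''

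Dolev--Strong agreement relies on every correct relayer reaching every correct node, and Condition S gives much less. Only nodes of $S_{F^*}$ are guaranteed paths to everyone with no faulty internal node (Lemma~\ref{l_sf_reach_Fstar}). A correct node outside $S_{F^*}$ has \emph{no} path into $S_{F^*}$ avoiding $F^*$, since every edge entering the source component $S_{F^*}$ of $G_{F^*}$ from outside comes from $F^*$. Concretely, for $f=1$ take a complete digraph on $\{a,b,c\}$ plus a sink $d$ with edges from $a,b,c$. Condition S holds, yet $d$'s input is never seen by anyone else. Moreover, a faulty $a$ can sign a value and send it only to $d$, so $d$'s accepted set differs permanently from those of $b$ and $c$.

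Your decision rule does not repair this. Signatures can certify that a node is faulty (equivocation) but never that it is non-faulty, and correct nodes know neither $F^*$ nor $S_{F^*}$. Also, when fewer than $2f+1$ values reach a node, no counting rule guarantees a correct majority.

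The missing ideas are those of Algorithm~\ref{a_sync_algo}:
\begin{enumerate}
\item Iterate over all $F$ with $|F|=f$, and in each iteration flood only from $S_F$, each $S_F$ node bundling the signed values it received from $I_F$.
\item Update a state variable by a rule that always keeps it equal to some correct node's current state. With at least $2f+1$ values this is plain majority. Otherwise, discard the $S_F$-endpoints of a size-$(k-p+1)$ matching from the unheard set $F_{1v}$ into $S_F$ (Lemma~\ref{l_match_size}, via Hall's theorem); every matched pair contains a faulty node, so a strict majority of the remaining values is correct.
\item Obtain agreement in the first iteration with $F^*\subseteq F$. There $S_F$ is entirely correct and all correct nodes end with identical $Y$ sets; validity preservation keeps them agreed afterwards.
\end{enumerate}

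\textbf{Necessity, Case (b).} For the size part of necessity you never construct the executions. You abandon $\beta$ and only describe a chain in words, and your premise that $F$ stays silent is exactly what blocks the indistinguishability you need. The missing idea is that $F$ must be Byzantine but \emph{active}: it simulates a correct $F$ with input $1$ that never exchanges messages with $S_F$. The paper uses three executions.
\begin{itemize}
\item $\mathsf{E}^a$: $S_F$ is faulty (allowed since $|S_F|\le f$) and runs the algorithm with input $0$ as if $F$ had crashed. $F\cup(V-F-S_F)$ is correct with input $1$, so $V-F-S_F$ outputs $1$.
\item $\mathsf{E}^b$: $F$ is faulty and cut off from $S_F$, and all of $V-F$ has input $0$, so $S_F$ outputs $0$.
\item $\mathsf{E}^c$: the same faulty behaviour of $F$, but $V-F-S_F$ has input $1$.
\end{itemize}
$S_F$ cannot distinguish $\mathsf{E}^c$ from $\mathsf{E}^b$, because it receives nothing from outside itself. $V-F-S_F$ cannot distinguish $\mathsf{E}^c$ from $\mathsf{E}^a$. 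Hence agreement fails among the correct nodes of $\mathsf{E}^c$.

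This argument needs $V-F-S_F\neq\emptyset$. The case $V=F\cup S_F$ (where $F\neq\emptyset$ because $n>f$) requires a separate two-execution argument, which your sketch does not address. Take everyone correct, with $S_F$ having input $0$ and $F$ having input $1$. Each side cannot exclude that the other side is the faulty one, so $S_F$ must output $0$ and $F$ must output $1$.
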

\end{framed} 
\begin{proof}
Theorem \ref{t_authByzSync} is proved in two parts.
The proof of the necessity of Condition S for Byzantine consensus is presented in Appendix \ref{a:sync:necessity}. To prove the sufficiency of Condition S, we present a synchronous algorithm in Section \ref{s:sync:algo} and show its correctness. Some key results required to show the correctness of this algorithm are proved in Sections \ref{s:sync:impli} and \ref{s:sync:algo}. The rest of the proof of correctness of the synchronous algorithm is in Appendix \ref{a:sync:algo:proof}.
\end{proof}
The correctness of Theorem \ref{t_reach_sync} follows from the correctness of Theorem \ref{t_authByzSync}, and the equivalence proved between the graph conditions in these theorems (proved in Appendix \ref{a:sync_equiv}).\\
\indent Now we introduce the notation $S_F$, which is often used in the discussion that follows. 
\begin{framed}
\begin{definition}[Unique source component $S_F$]
\label{def:SF}
Suppose that $G(V,E)$ satisfies Condition S. Then for any $F\subset V$ with $|F|\leq f$, we denote by $S_F$ the unique source component in graph $G_F$. Note that $|S_F|\geq f+1$ by Condition S. \\For convenience, we will often say that ``$S_F$ is the source component {\it corresponding to} set $F$''.
\end{definition}
\end{framed}
In the graph in Figure \ref{fig:1-reach}, corresponding to $F=\{6\}$, the unique source component $S_F$ in $G_F$ includes the nodes in $\{4,5\}$ (thus, $|S_F|=2$). This graph satisfies Condition S for $f=1$.

\subsection{Implications of Condition S}
\label{s:sync:impli}

Assuming that graph $G(V,E)$ satisfies Condition S, the four lemmas below prove some implications of Condition S. The first three lemmas in this section are used in proving the correctness of our synchronous consensus algorithm (presented in Section \ref{s:sync:algo}). These lemmas are proved in Appendices \ref{a_paths}, \ref{a_out_neighbors}, \ref{a_match_size}, and \ref{a_component_size}, respectively.

\begin{lemma}
\label{l_sf_reach_Fstar}
Suppose that $G(V,E)$ satisfies Condition S.
Given a set $F\subset V$ with $|F|\leq f$, from each node in the corresponding source component $S_F$ to each node in $V$, there exists a directed path that does not include any nodes in $F$ as internal nodes.
\end{lemma}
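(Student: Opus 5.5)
Fix $F$ with $|F|\le f$ and $s\in S_F$, and let $t\in V$ be the target. A path that avoids $F$ as internal nodes is one whose internal nodes all lie in $V-F$; the endpoints may lie in $F$. The plan is to first reach every node of $V-F$ inside $G_F$, and then extend by one edge to reach nodes of $F$.

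\emph{Case $t\in V-F$.} We show that $s$ has a path to $t$ entirely within $G_F$. Let $R$ be the set of nodes of $V-F$ that have a directed path to $t$ in $G_F$; this is $reach_t(F)$. The set $R$ is nonempty, since it contains $t$, and it has no incoming edges in $G_F$ from $(V-F)-R$: if $w\notin R$ had an edge to a node of $R$, then $w$ would itself reach $t$. By the observation in Section~\ref{s_terminology}, the subgraph of $G_F$ induced by $R$ therefore contains a strongly connected component $C$ of $G_F$ with no incoming edges from outside $R$. We claim $C$ is a source component of $G_F$.
\begin{itemize}
\item Any edge into $C$ from a node of $G_F$ outside $C$ must come from inside $R$, by the property of $R$.
\item If such an edge came from $R-C$, the source-component structure of the condensation restricted to $R$ lets us pick $C$ to be a source component of $G_F[R]$.
\end{itemize}
Concretely, we take the condensation DAG of $G_F$ restricted to $R$ and choose $C$ with in-degree $0$ within $R$; combined with the closure of $R$, $C$ has in-degree $0$ in the whole condensation, so it is a source component of $G_F$. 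By uniqueness in Condition S, $C=S_F$. Hence $s\in R$, i.e., $s$ has a path to $t$ in $G_F$, and that path uses no node of $F$ at all.

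\emph{Case $t\in F$.} Here we use the size bound $|S_F|\ge f+1$ together with Condition S applied to a different fault set. The hard part is to show that some node of $V-F$ has an edge into $t$. Consider $F'=F-\{t\}$, so $|F'|<f$. By Condition S, $G_{F'}$ has a unique source component, with at least $f+1$ nodes. Suppose for contradiction that $t$ has no incoming edge from $V-F$. Then every in-neighbor of $t$ lies in $F'$, so in $G_{F'}$ the node $t$ has no incoming edges. Thus $\{t\}$ is a strongly connected component of $G_{F'}$ with no incoming links, i.e., a source component of size $1<f+1$. This contradicts Condition S, since $G_{F'}$ has a unique source component of size at least $f+1$ (indeed either the uniqueness or the size requirement fails). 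So there is some $w\in V-F$ with $(w,t)\in E$. By the first case, there is a path from $s$ to $w$ within $G_F$; appending the edge $(w,t)$ gives a path from $s$ to $t$ whose only node in $F$ is the endpoint $t$. If $s=w$, the path is just the single edge.

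The main subtlety is this second case: recognizing that the right tool is Condition S applied to $F-\{t\}$ rather than to $F$. The first case is a routine argument about the condensation DAG and uniqueness of the source component.
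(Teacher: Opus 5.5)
Your proof is correct and follows essentially the same route as the paper. The paper first shows via uniqueness of the source component that every node of $V-F$ is reachable from $S_F$ inside $G_F$, and then applies Condition S to $F-\{t\}$ to show each $t\in F$ has an in-neighbor in $V-F$. Your variations are minor: you use the backward-closed set $reach_t(F)$ directly, where the paper uses the forward-reachable set of $S_F$ and argues by contradiction; and you note that $\{t\}$ would itself be a source component of $G_{F-\{t\}}$ of size $1<f+1$, where the paper reapplies the first step to get $S_{F-\{t\}}\subseteq\{t\}$.
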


Note that, with message authentication, a single path from a non-faulty sender to a destination node, which does not include any faulty internal nodes, suffices to deliver messages reliably from the non-faulty sender to the destination node. Thus,
Lemma \ref{l_sf_reach_Fstar} allows us to infer that, {\bf if} set $F$ happens to include all the faulty nodes in a given execution of our algorithm, any information from the source component $S_F$ can be disseminated reliably to all the nodes in the system.

In the next two lemmas, we  consider the special case when $|F|=f$, and $f+1\leq |S_F|\leq 2f$. Lemma \ref{l_out_neighbors} is useful in proving Lemma \ref{l_match_size}.\footnote{An alternate proof of Lemma \ref{l_match_size} without using Lemma \ref{l_out_neighbors} also exists. However, the result in Lemma \ref{l_out_neighbors} may be of independent interest.}

\begin{lemma}
\label{l_out_neighbors}
Suppose that $G(V,E)$ satisfies Condition S.
Suppose that for some set $F\subset V$ with $|F|=f$, the corresponding source component $S_F$ contains $f+p$ nodes, where $0< p\leq f$. Then, for $p \leq k\leq |F|= f$,
any size $k$ subset of set $F$ must have at least $k+1$ outgoing neighbors in $S_F$.
\end{lemma}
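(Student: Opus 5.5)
The plan is to argue by contradiction: we build a new fault set $F'$ with $|F'|\leq f$ whose graph $G_{F'}$ has a source component with at most $f$ nodes. That violates Condition S.

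Fix $F$ with $|F|=f$ and $|S_F|=f+p$, where $0<p\leq f$. Fix $K\subseteq F$ with $|K|=k$ and $p\leq k\leq f$. Let $N\subseteq S_F$ be the set of outgoing neighbors of $K$ in $S_F$, and suppose for contradiction that $|N|\leq k$. Since $|S_F|=f+p>f\geq k\geq |N|$, we can pick a set $A\subseteq S_F-N$ with $|A|=k-|N|$. Define
\[
F' = (F-K)\cup N\cup A .
\]
Then $|F'| \leq (f-k)+|N|+(k-|N|)=f$. Let $T=S_F-N-A$, so $|T|=f+p-k$. This gives $1\leq p\leq |T|\leq f$, where the lower bound uses $k\leq f$ and the upper bound is exactly where the hypothesis $k\geq p$ is used. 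Note that $T\subseteq V-F'$, because $S_F\cap F=\emptyset$.

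The key step is to show that $T$ has no incoming neighbors in $G_{F'}$. Take any edge $(w,t)\in E$ with $t\in T$ and $w\notin T$, and split on where $w$ lies.
\begin{itemize}
\item If $w\in V-F-S_F$, this edge would be an incoming edge of $S_F$ in $G_F$. That is impossible, since $S_F$ is a source component of $G_F$.
\item If $w\in F-K$, then $w\in F'$.
\item If $w\in K$, then $t$ is an outgoing neighbor of $K$ in $S_F$, so $t\in N$. This contradicts $t\in T$.
\item If $w\in S_F-T=N\cup A$, then $w\in F'$.
\end{itemize}
So every incoming edge to $T$ either is impossible or comes from a node of $F'$. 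Hence $T$ is a nonempty vertex set of $G_{F'}$ with no incoming edges from outside $T$ within $G_{F'}$.

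As noted in Section \ref{s_terminology}, the subgraph induced by such a set contains a source component of $G_{F'}$. By Condition S this must be the unique source component $S_{F'}$, so $S_{F'}\subseteq T$ and therefore $|S_{F'}|\leq f$. This contradicts the requirement in Condition S that $|S_{F'}|\geq f+1$. Hence $|N|\geq k+1$.

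I expect the main obstacle to be the choice of $F'$. The natural first attempt, $F'=(F-K)\cup N$, only shows that $S_F-N$ contains the source component, which gives no contradiction. The fix is to spend the leftover budget $k-|N|$ on extra nodes $A$ of $S_F$. This shrinks the closed set $T$ to at most $f$ nodes, and it is precisely here that $k\geq p$ is needed.
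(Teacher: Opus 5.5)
Your proof is correct and takes the same approach as the paper. Your padded set $N\cup A$ is the paper's $S_1$, your $F'=(F-K)\cup N\cup A$ is its $F^+$, and your set $T=S_F-N-A$ is its $S_F-S_1$, which must contain the unique source component of $G_{F'}$ and has size $f+p-k\le f$; your four-way case split on the tail of an incoming edge to $T$ just spells out the paper's observation that $(V-F-S_F)\cup F_1$ has no links into $S_F-S_1$.
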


For Lemma \ref{l_match_size} below, recall the definitions of $E_{P,Q}$ and bipartite graph $B(P,Q,L)$ from Section \ref{s_terminology}. Lemma \ref{l_match_size} facilitates a novel procedure in the ``state update'' step in our algorithm below.

\begin{lemma}
\label{l_match_size}
Suppose that $G(V,E)$ satisfies Condition S.
Suppose that for some set $F\subset V$ with $|F|=f$, the corresponding source component $S_F$ contains $f+p$ nodes, where $0<p\leq f$. Consider the bipartite graph $B(F,S_F,E_{F,S_F})$. Let $F_1$ be any size $k$ subset of set $F$, where $p\leq k\leq |F|=f$. There exists a matching of size at least $k-p+1$ consisting of edges from $F_1$ to $S_F$.
\end{lemma}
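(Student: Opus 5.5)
We plan to derive Lemma \ref{l_match_size} from Lemma \ref{l_out_neighbors} using the deficiency version of Hall's theorem, applied to the bipartite graph $B(F_1,S_F,E_{F_1,S_F})$. That version says the maximum matching saturating part of $F_1$ has size
$$k-\max_{A\subseteq F_1}\bigl(|A|-|N(A)|\bigr),$$
where $N(A)$ is the set of outgoing neighbors of $A$ in $S_F$. The empty set contributes $0$ to the maximum. So it suffices to show $|A|-|N(A)|\le p-1$ for every $A\subseteq F_1$; this gives a matching of size at least $k-(p-1)=k-p+1$.

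We bound the deficiency $|A|-|N(A)|$ by the size $j=|A|$, with $0\le j\le k$.
\begin{itemize}
\item \emph{Case $j\le p-1$:} trivially $|A|-|N(A)|\le j\le p-1$, since $|N(A)|\ge 0$.
\item \emph{Case $p\le j\le k\le f$:} $A$ is a size-$j$ subset of $F$ with $p\le j\le |F|$, so Lemma \ref{l_out_neighbors} applies. It gives $|N(A)|\ge j+1$, hence the deficiency is at most $-1\le p-1$.
\end{itemize}
Thus the maximum deficiency over $A\subseteq F_1$ is at most $p-1$, and the claimed bound follows.

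If a self-contained argument without invoking the deficiency form is preferred, we would instead add $p-1$ dummy vertices to $S_F$, each adjacent to every node of $F_1$. In the augmented graph every $A\subseteq F_1$ has at least $|A|$ neighbors: when $|A|\le p-1$ the dummies suffice (for $A\neq\emptyset$), and when $|A|\ge p$ Lemma \ref{l_out_neighbors} gives more than $|A|$ neighbors. Hall's theorem then yields a matching saturating $F_1$. Removing the at most $p-1$ edges that go to dummy vertices leaves at least $k-p+1$ edges of $E_{F_1,S_F}\subseteq E_{F,S_F}$, and these still form a matching.

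There is no serious obstacle, since the combinatorial work is done in Lemma \ref{l_out_neighbors}. The only point requiring care is that Lemma \ref{l_out_neighbors} is stated only for sizes $p\le j\le f$, so small subsets must be handled separately. That is exactly why the bound is $k-p+1$ rather than $k$.
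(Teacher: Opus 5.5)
Your proof is correct and is essentially the paper's argument: both combine Lemma~\ref{l_out_neighbors} for subsets of size at least $p$ with $p-1$ units of slack for smaller subsets. The paper adds $p-1$ dummy vertices, which is the standard proof of the deficiency form of Hall's theorem you invoke. The one difference is that the paper augments all of $F$, obtains a matching of size at least $f-p+1$ from $F$ to $S_F$, and then discards the at most $f-k$ matching edges at $F-F_1$, whereas you work with $F_1$ directly and so skip that final restriction step.
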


According to Condition S, for any set $F\subset V$, $|F|\leq f$, the source component $S_F$ is of size at least $f+1$. From this we can deduce the following claim.

\begin{lemma}
\label{l_component_size}
Consider $F\subset V$ such that $|F|\leq f$. Then $|S_F|\geq 2f+1-|F|$.
\end{lemma}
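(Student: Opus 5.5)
The plan is to enlarge $F$ to a set $F'$ of size exactly $f$ by adding nodes of $S_F$ only, and to compare the source components of $G_F$ and $G_{F'}$. Assume Condition S holds, since $S_F$ is defined only in that case. Write $|F|=f-j$ with $0\le j\le f$. The goal is $|S_F|\ge f+1+j$.

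Suppose, for contradiction, that $|S_F|\le f+j$. Since $|S_F|\ge f+1>j$, we can pick any set $D\subseteq S_F$ with $|D|=j$ and set $F'=F\cup D$, so $|F'|=f$. The remaining part $R=S_F-D$ is nonempty and satisfies $|R|\le f$. We then argue about the graph $G_{F'}$.

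The key observation is that $R$ has no incoming edges in $G_{F'}$ from outside $R$. Indeed, $S_F$ is a source component of $G_F$, so every edge of $G_F$ entering $S_F$ starts inside $S_F$. Every vertex of $G_{F'}$ outside $R$ either lies outside $S_F$ in $G_F$ or belongs to $D$, and $D$ has been removed. Hence $R$ is a nonempty vertex set of $G_{F'}$ with no incoming edges, and by the fact recalled in Section \ref{s_terminology}, the subgraph induced by $R$ contains a source component of $G_{F'}$. By Condition S, this must be the unique source component $S_{F'}$, so $S_{F'}\subseteq R$. This gives $|S_{F'}|\le |R|\le f$, which contradicts $|S_{F'}|\ge f+1$. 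Therefore $|S_F|\ge f+j+1=2f+1-|F|$.

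The only delicate step is justifying that a source component of the subgraph induced by $R$ is a source component of $G_{F'}$. Strong connectivity is inherited in one direction: a strongly connected component $C$ of $G_{F'}[R]$ that is a source there gains no incoming edges from $V-F'-R$. The maximality of $C$ in $G_{F'}$ also holds, because any vertex reaching $C$ must lie in $R$. If needed, I would spell out this check directly rather than rely on the informal remark in the paper.
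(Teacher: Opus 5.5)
Your proof is correct and follows essentially the same route as the paper. Both arguments extend $F$ to a size-$f$ set $F'$ by adding $f-|F|$ nodes of $S_F$, note that the remainder $S_F-(F'-F)$ still has no incoming edges from the rest of $G_{F'}$, and conclude that $S_{F'}\subseteq S_F-(F'-F)$, a set of at most $f$ nodes, contradicting Condition S. The differences are only cosmetic: you handle $|F|=f$ uniformly as the case $j=0$, and you check explicitly that a source component of the subgraph induced by $R$ is a source component of $G_{F'}$, a step the paper takes for granted.
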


\subsection{Synchronous Byzantine Consensus Algorithm}
\label{s:sync:algo}

Our goal here is only to prove that Condition S is sufficient for synchronous Byzantine consensus with message authentication. {\bf No attempt is made here to develop an efficient algorithm.} The proposed algorithm (Algorithm \ref{a_sync_algo}) is presented next. The subsequent discussion below will elaborate on the steps listed in Algorithm \ref{a_sync_algo}.

\begin{algorithm}[tbhp]
\caption{Synchronous Byzantine Consensus with Message Authentication}
\label{a_sync_algo}
\begin{algorithmic}[1]
 \State 
 % \textit{State variable initialization}:
 {\it Initialization}: Each node $v\in V$ initializes state variable $s_v$ to equal its input $x_v$, i.e., $s_v=x_v$

\vspace*{4pt}

\For{each $F\subset V$ such that $|F|=f$}
        \begin{list}{}{}
            \item[(i)] {\it Flooding of state variable values}: Let us denote by $I_{F}$  the set of all the incoming neighbors of $S_F$ in $F$ (thus, $I_F\subseteq F$). The goal in step (i) is to use a ``flooding'' procedure to deliver state variable values from the nodes in $I_{F}\cup S_F$ to all the non-faulty nodes. Section \ref{ss_flood} describes the flooding procedure used in this step.

            \item[(ii)] {\it Updating state variables}:  Each node $v\in V$ performs the procedure in Section \ref{ss_sf} to  update its state variable $s_v$.
            \end{list}   
\EndFor

\vspace*{4pt}

\State\textit{Output step}: $y_v = s_v$
\end{algorithmic}
\end{algorithm}

The algorithm assumes that each node in the system knows graph $G(V,E)$. At the start of the algorithm, each node $v\in V$ has an input from $\{0,1\}$, denoted as $x_v$. Each node $v$ maintains a binary state variable $s_v$ that is initialized in Step 1 of Algorithm \ref{a_sync_algo} to equal $x_v$. The value of the state variable $s_v$ at the end of the algorithm becomes node $v$'s output, as shown in Step 3 of Algorithm \ref{a_sync_algo}. Recall that the output of node $v$ is denoted as $y_v$.

In Step 2, Algorithm \ref{a_sync_algo} iterates through all sets $F\subset V$ such that $|F|=f$, and performs Steps (i) and (ii) in each such iteration (as elaborated in Sections \ref{ss_flood} and \ref{ss_sf}). In step (i), the term ``flooding'' is borrowed from prior literature on networking algorithms that disseminate information across a given network (e.g., \cite{DBLP:journals/corr/cs-NI-0311013,Yi-Gerla-Kwon-flooding-2003}) -- however, our flooding procedure differs from the flooding schemes used in the prior literature, as elaborated soon in Section \ref{ss_flood}.

We assume that digital signatures are used for the purpose of message authentication. We will denote by $M^u$ a message $M$ signed by node $u$. When using digital signatures for authentication, node $u$ would send message $M$ (often called cleartext) and a cryptographically generated signature $Sig(M,Key_u)$ both together, where $Key_u$ denotes node $u$'s private key.
For brevity, as a shorthand for $[M,Sig(M,Key_u)]$, we will represent this message simply as $M^u$. An important property we utilize is that any tampering of a  message signed by a non-faulty node can be detected by the message recipient.

\subsubsection{\bf Step 2(i) of Algorithm \ref{a_sync_algo}: Flooding of State Variable Values}
\label{ss_flood}

We denote by $I_{F}$ the set of all incoming neighbors of $S_F$ in $F$, i.e., $I_F=\{x~|~(x,y)\in E,~x\in F,~y\in S_F\}$. Thus, $I_F\subseteq F$, and each node in $I_F$ has a link to at least one node in $S_F$. Informally, the flooding procedure has two parts. First, the nodes in $I_{F}$ send their state variable values to their outgoing neighbors in $S_F$. Subsequently, each node in $S_F$ is responsible for disseminating the values received from $I_F$, {\bf along with} its own state variable value, to the other nodes.\footnote{We can simplify this step somewhat by requiring the value of a node in $I_F$ to be sent to its outgoing neighbor in $S_F$ only if the corresponding edge is included in the maximal matching identified by Lemma \ref{l_match_size}. For brevity, we do not further elaborate on this optimization.}
The exact method for forwarding the values is discussed below. We now define two sets used in the flooding procedure.
\begin{itemize}
    \item Set $X_v$: Each node $v\in S_F$ maintains a set $X_v$, which is initialized as empty at the start of the flooding procedure in step 2(i) in each iteration of  Algorithm \ref{a_sync_algo}.\footnote{Recall that each iteration of the algorithm corresponds to a different set $F$.} As elaborated below, set $X_v$ is used by node $v$ to collect values received from incoming neighbors in $I_F$. Each value stored in $X_v$ is signed by its sender, and has the form $(u,s)^u$, where $u$ is an incoming neighbor of $v$ in $I_F$, and $s\in\{0,1\}$.

    \item  Set $Y_v$: Each node $v\in V$ maintains a set $Y_v$ of tuples of the form $(u,s)$, where $u\in I_F\cup S_F$ and $s\in\{0,1\}$. Set $Y_v$ is initialized as empty at the start of the flooding procedure in step 2(i). Each tuple in $Y_v$ contains a value from some node in $I_F\cup S_F$, and the corresponding node's identifier.

\end{itemize}

The flooding procedure is performed in rounds. In round 0, the following steps are performed: 
\begin{itemize}

\item In round 0, each node $u\in I_{F}$ sends $(u,s_u)^u$ to each of its outgoing neighbors in $S_F$. Note that $(u,s_u)^u$ represents node $u$'s state variable value $s_u$ (along with the node identifier $u$) signed by node $u$. A faulty node may not follow the algorithm correctly. For instance, a faulty node may not send a value at all, or send an arbitrary value (in fact, a faulty node may also send different values to different neighbors).

\item In round 0, when a node $v\in S_F$ receives a message of the form $(u,s)^u$, with $s\in \{0,1\}$, from an incoming neighbor $u\in I_{F}$:

\hspace*{0.5in} node $v$ adds $(u,s)^u$ to the set $X_v$, and \\
\hspace*{0.5in}  adds $(u,s)$ to the set $Y_v$.

Note that a message of the form $(u,s)^r$, where $u\neq r$, would be ignored in this  step, since it is not signed by node $u$ whose identifier is included in the message. In our algorithm, in general, any message that appears tampered is discarded.
\end{itemize}

In round 0, as described above, set $X_v$ at node $v\in S_F$ collects the signed values from $v$'s incoming neighbors in $I_{F}$. If a certain node $v\in S_F$ has no incoming neighbors in $I_{F}$, then $X_v$ remains empty.

In round 1 of the flooding procedure, the following two steps are performed:
\begin{itemize}
\item Each node $v\in S_F$ adds $(v,s_v)$ to set $Y_v$.
\item Each node $v\in S_F$ sends the message $(v,s_v,X_v)^v$ to all its outgoing neighbors. Note that node $v$ signs the message $(v,s_v,X_v)$ as a whole (i.e., a single signature for the entire message). This property will be useful in proving the correctness of the algorithm.
\end{itemize}

As described soon, the messages sent by the nodes in $S_F$ in round 1 are forwarded by the nodes in $V-F$ during rounds 2 through $|V|$. Thus, each message received by any node in these rounds should have the form $(u,s,X)^u$ where $u\in S_F$, and the set $X$ possibly contains values from $u$'s incoming neighbors in $I_{F}$. As an example, the message may be $(u,1,X)^u$, where $X=\{(p,0)^p, (q,1)^q\}$, $u\in S_F$, and $p,q$ are $u$'s incoming neighbors in $I_{F}$. The procedure in rounds 2 through $|V|$ will use the following notion of a ``correct'' message.

\noindent{\bf ``Correct'' messages:}
Each node $v$, on receiving a message in rounds 1 to $|V|$, performs the following checks. The goal is to discard tampered messages, or messages that do not conform to the algorithm. Any message that passes the checks below is said to be ``correct''; other messages are discarded.
\begin{itemize}
\item Node $v$ checks if the message is of the form $(u,s,X)^u$, where $u\in S_F$ and $s\in\{0,1\}$. That is, the message must be signed by some node $u\in S_F$ with contents of the form $(u,s,X)$ and $s$ in $\{0,1\}$. $u$ may not necessarily be the node from which node $v$ directly received the message.
\item If the above check passes, then node $v$ checks the content of $X$ in the message. $X$ should contain only values in the form $(w,s)^w$, where $w$ is an incoming neighbor of $u$ in $I_{F}$, and $s\in\{0,1\}$. If this condition is also satisfied, then the message $(u,s,X)^u$ has passed the checks.
\end{itemize}

In round $r$ of the flooding procedure in step 2(i), $2\leq r\leq |V|$, the following steps are performed:
\begin{itemize}
\item If node $v\in V$ received a {\bf correct} message of the form $(u,s,X)^u$ in round $(r-1)$ from some incoming neighbor in $V-F$, then node $v$ performs the following steps in round $r$ (note that $u$ may not necessarily be the incoming neighbor from whom $v$ received this message.):
\begin{itemize}
 
    \item Node $v$ adds $(u,s)$ as well as all the values in set $X$ to set $Y_v$. For example, if the received message is $(u,1,X)^u$, where $X=\{(p,0)^p, (q,1)^q\}$, then node $v$ updates $Y_v$ as $Y_v = Y_v \cup \{(u,1)\}\cup \{(p,0),(q,1)\}$.
    \item If $v\in V-F$, then node $v$ forwards the message $(u,s,X)^u$ to all its outgoing neighbors. Note that this message retains the signature by node $u$.
\end{itemize}

In the first step above, only the messages received from the incoming neighbors in $V-F$ are considered (i.e., any messages from the incoming neighbors in $F$ are ignored). Therefore, the second step above needs to be performed only if $v\in V-F$. 
\end{itemize} 
The flooding procedure in step 2(i) is complete at the end of round $|V|$. Set $Y_v$ at node $v$ at the end of the flooding procedure in step 2(i) is used in step 2(ii) to update the state variable $s_v$ at node $v$.

\subsubsection{\bf Step 2(ii) of Algorithm \ref{a_sync_algo}: Updating State Variables}
\label{ss_sf}

In step 2(ii), each node $v\in V$ performs the following operations:
\begin{itemize}
    \item Node $v$ checks whether set $Y_v$ contains, corresponding to every node $w\in S_F$, a tuple of the form $(w,s)$, where $s\in\{0,1\}$. If this  is not true, then node $v$ skips the remaining operations in step 2(ii), and $v$'s state variable $s_v$ is unchanged in the current iteration. If the above condition holds, then $Y_v$ contains values from $\geq |S_F|$ nodes, and $v$ proceeds to the next step. % and node $v$ performs the steps below.
    \item For any node $w\in I_{F}\cup S_F$, if $Y_v$ contains $(w,0)$ and $(w,1)$ both, then node $v$ removes $(w,1)$ from set $Y_v$.\footnote{In this case, node $w$ must be faulty. However, we do not make explicit use of this knowledge in the steps described here. An alternate procedure that uses the knowledge can be easily developed.}.
    \item Using the updated set $Y_v$, node $v$ updates its state variable $s_v$ as described next.
\end{itemize}
Observe that the above procedure ensures that, at any fault-free node $v$, the set $Y_v$ can only include values for nodes in $I_F\cup S_F$ (i.e., values of nodes  {\bf outside} $I_F\cup S_F$ are {\bf not} added to set $Y_v$).
Depending on the number of distinct nodes whose value is included in set $Y_v$, node $v$ uses a different procedure to update its state variable, as described in Cases 1 and 2 below.
\begin{itemize}
\item {\bf Case 1: Set $Y_v$ contains values from at least $2f+1$ distinct nodes in $I_{F}\cup S_F$:}

In this case, node $v$ sets $s_v$ equal to the value in $Y_v$ from the largest number of nodes; a tie between values 0 and 1 is broken in favor of value 0. \vspace*{4pt} \\
{\it \bf An observation to be used to prove the correctness of Algorithm \ref{a_sync_algo}}: In Case 1, node $v$ has collected binary values from at least $2f+1$ nodes. At most $f$ of these nodes may be faulty; thus, a majority of these nodes must be non-faulty. This observation helps us prove the validity property, as discussed later in the proof of correctness. 

\item {\bf Case 2: Set $Y_v$ contains values from at most $2f$ distinct nodes in $I_{F}\cup S_F$:}
Recall that this step may be performed only if set $Y_v$ passed the check at the start of step 2(ii), as described above. Therefore, we know that $Y_v$ contains values from all the nodes in $S_F$. Therefore, in Case 2, we can conclude that $|S_F|\leq 2f$. Also, by Condition S, $|S_F|\geq f+1$. Let us define $p$, $0<p\leq f$, such that $|S_F|=f+p$. Now, we define the following:
\begin{itemize}

\item  $Z_v = \{ u~|~\mbox{$Y_v$ contains a value from node $u\in F$}\}$. Since values only from the nodes $I_F\cup S_F$ are included in $Y_v$, and $F\cap S_F=\emptyset$, $Z_v\subseteq I_F$.
\item  $F_{1v} = F - Z_v$
\item  Let $k=|F_{1v}|$. Thus, $|Z_v|=f-k$. Recall that $|F| = f$, as per Algorithm \ref{a_sync_algo}.
\end{itemize}
$F_{1v}$ contains those $k$ nodes from $F$ whose value is not included in $Y_v$. On the other hand, $Z_v=F-F_{1v}$ contains the $f-k$ nodes in $F$ from which node $v$ has received a value.

Now, $Y_v$ contains values from all the $f+p$ nodes in $S_F$, and by the definition of $k$ above, $Y_v$ also contains values from the $f-k$ nodes in $Z_v$, for a total of $2f+p-k$ nodes.
In case 2, by assumption, this number is no more than $2f$. That is, $2f+p-k\leq 2f$, which implies that $p\leq k$. Also, $k\leq |F|=f$.
 
Thus, in this case, we can apply Lemma~\ref{l_match_size}, and conclude that there exists a matching of size at least $k-p+1$ consisting of edges from $F_{1v}$ to $S_F$. So, let us consider a matching of size {\bf equal to} $k-p+1$ from $F_{1v}$ to $S_F$. Let $M_v$ denote the set of size $2(k-p+1)$ that contains $k-p+1$ nodes from $F_{1v}$ and $k-p+1$ nodes from $S_F$ such that the nodes in $M_v\cap F_{1v}$ are matched with the nodes in $M_v\cap S_F$ in the above matching.

Now consider any edge $(q,r)$ in this matching, where $q\in F_{1v}$ and $r\in S_F$. We show that at least one of these two nodes $q$ and $r$ must be faulty. The proof is by contradiction. Suppose that both $q$ and $r$ are non-faulty. Then in round 0 of flooding in step (i) in the given iteration of Algorithm \ref{a_sync_algo}, node $q$ should have sent $(q,s_q)^q$ to node $r$. Subsequently, in round 1, node $r$ should have sent the message $(r,s_r,X_r)^r$, where $X_r$  contained $(q,s_q)^q$. By the check performed at the start of step (ii), we know that $Y_v$ includes a value from node $r\in S_F$. Thus, node $v$ must have received the above message from node $r$, which also includes a value from node $q$ in set $X_r$. Then, it is not possible for node $q$ to be included in set $F_{1v}$, which is a contradiction. Thus, at least one of $q$ and $r$ must be faulty. Extending this argument to the entire matching above, we can conclude that at least $k-p+1$ nodes among the $2(k-p+1)$ nodes in $M_v$ must be faulty. 

{\bf Updating $s_v$ in Case 2}: From set $Y_v$, node $v$ discards the values from all the nodes in $M_v\cap S_F$. There are $k-p+1$ such nodes, as per the definition of $M_v$. Now node $v$ is left with the values from $(f+p) - (k-p+1) = f+2p-k-1$ nodes in $S_F$. Also, values were received by node $v$ from the $f-k$ nodes in $F-F_{1v}=Z_v$, and no values were received from the nodes in $F_{1v}$ (by the definition of $F_{1v}$). Thus, node $v$ now has values from $(f+2p-k-1)+(f-k)=2f+2p-2k-1$ nodes. Node $v$ sets $s_v$ equal to the value that is received from the largest number of nodes among these $2f+2p-2k-1$ nodes. (In this case, there should not be a tie between 0 and 1 because $2f+2p-2k-1$ is an odd number.)
\\

{\it\bf A useful observation:} We observed above that at least $k-p+1$ nodes among the $2(k-p+1)$ nodes in $M_v$ are faulty. As per the procedure in the previous paragraph, values from none of these nodes was used in computing the new value of $s_v$. Thus, of the $2f+2p-2k-1$ nodes whose value was used above, at most $f-(k-p+1)=f+p-k-1$ are faulty. Finally, observe that $2f+2p-2k-1 = 2(f+p-k-1)+1$. In other words, a majority of the $2f+2p-2k-1$ nodes whose values are used to update $s_v$ are non-faulty.
\end{itemize}

\subsubsection{\bf Correctness of Algorithm \ref{a_sync_algo}:}

Appendix \ref{a:sync:algo:proof} proves correctness of Algorithm \ref{a_sync_algo}. The  observations made in Cases 1 and 2 above ensure that a majority of the values used by node $v$ when updating its state variable $s_v$ come from non-faulty nodes. This helps prove the validity property, which subsequently helps prove the agreement property. As seen above, Lemma \ref{l_match_size} plays an important role in Case 2, and allows us to show that Condition S is sufficient for synchronous consensus.

\section{Asynchronous $\epsilon$-Approximate Byzantine Consensus with Message Authentication}
\label{s:async}

We now present an alternate version of the necessary and sufficient conditions for $\epsilon$-approximate consensus in asynchronous systems,\footnote{Although in Section \ref{s:intro}, we assumed that the inputs are discrete values (i.e., 0 or 1), our results for {\it asynchronous systems} also apply to the case when the inputs are real-valued scalars in a finite range, say, $[0,1]$.}
as stated in Theorem \ref{t_reach_async}. Equivalence of the conditions in Theorem \ref{t_authByzAsync} below and Theorem \ref{t_reach_async} is proved in Appendix \ref{a:async_equiv}. Recall the definition of graph $G_F$ from Section \ref{s:main}.

\begin{theorem}
\label{t_authByzAsync}
In an asynchronous system with message authentication, $\epsilon$-approximate Byzantine consensus in the presence of up to $f$ Byzantine faulty nodes, with any $\epsilon>0$, is possible if and only if graph $G(V,E)$ satisfies both the conditions below.
\begin{itemize}
\item For any $F \subset V$ such that $|F| \leq f$, there is a unique source component $S_F$ in graph $G_F$, and the source component contains at least $2f+1$ nodes. 
    \item For any $F \subset V$ and $F' \subset V$ such that $|F|\leq f$ and  $|F'| \leq f$, the unique source components in $G_F$ and $G_{F'}$ have at least $f+1$ nodes in common.
    That is, $|S_F \cap S_{F'}| \geq f+1$.
\end{itemize}
\end{theorem}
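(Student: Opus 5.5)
The proof has two directions: necessity via indistinguishability arguments, and sufficiency via an explicit asynchronous algorithm.

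\textbf{Necessity.} I treat each condition separately, in each case building executions that no algorithm can tell apart.

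First, suppose some $G_F$ with $|F|\le f$ has two distinct source components $A$ and $B$. Let the nodes of $F$ be faulty and silent, or simply slow, since asynchrony lets me delay them indefinitely. Give every node of $A$ input $0$ and every node of $B$ input $1$. The nodes of $A$ receive nothing from outside $A$ except from $F$. Validity together with indistinguishability from the execution in which all nodes have input $0$ forces $A$ to decide near $0$, and symmetrically $B$ decides near $1$. This contradicts $\epsilon$-agreement for $\epsilon<1$.

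Second, suppose $|S_F|\le 2f$. Split $S_F=P\cup Q$ with $|P|,|Q|\le f$. The only influence on $S_F$ from outside comes from $F$, which can be delayed. The nodes in $S_F$ must then terminate using only information from $S_F$ itself. Since each half has at most $f$ nodes, either half could be the Byzantine set, and it could sign values consistent with either input. Giving $P$ input $0$ and $Q$ input $1$, the standard $n\le 3f$-style scenario argument applies inside $S_F$, because signatures do not help when the faulty nodes sign their own fake inputs. Some node of $S_F$ must then decide a value outside the range of the non-faulty inputs in one of the executions.

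Third, suppose $|S_F\cap S_{F'}|\le f$. Let $T=S_F\cap S_{F'}$ be faulty, with $|T|\le f$. In execution $E_0$, delay $F$ and give $S_F-T$ input $0$. In execution $E_1$, delay $F'$ and give $S_{F'}-T$ input $1$. Faulty $T$ behaves as input-$0$ toward $S_F$ and as input-$1$ toward $S_{F'}$. By validity (compare with executions where all of $S_F$ is honest with input $0$), the nodes of $S_F-T$ must eventually decide $0$ without hearing from $F$. Likewise the nodes of $S_{F'}-T$ must decide $1$. Combining the two executions into one where both happen violates agreement. The delicate point is ensuring that both sets are non-empty and are truly isolated from the delayed sets.

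\textbf{Sufficiency.} I will run an iterative averaging algorithm in the style of Dolev et al. that exploits signatures.

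In each round $t$, every node signs its state $(v,t,s_v)^v$ and floods it. Forwarding nodes relay only correctly signed messages, so a message traversing a path with no faulty internal nodes arrives intact. A node $v$ waits until, for some set $F$ with $|F|\le f$, it has received round-$t$ values from all nodes in $S_F$. By Lemma~\ref{l_sf_reach_Fstar}-type reasoning (a path from $S_{F^*}$ avoiding faulty internal nodes, where $F^*$ is the actual faulty set), this wait terminates. Node $v$ then trims the $f$ smallest and $f$ largest values among those received from $S_F$ and takes the midpoint of the remainder. Because $|S_F|\ge 2f+1$, validity holds. To also block equivocation, a node accepts only one signed value per node per round, and detected conflicts mark the signer as faulty.

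For contraction, any two non-faulty nodes $u,v$ use sets $S_F$ and $S_{F'}$ with $|S_F\cap S_{F'}|\ge f+1$. The two nodes therefore share at least one common non-faulty sender. Since that sender signs a single value per round, both $u$ and $v$ see the same value from it. The standard trimmed-midpoint argument then shows the spread shrinks by a factor of $1/2$ per round.

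The step I expect to be the main obstacle is this contraction argument. Trimming $f$ from each side in sets that share $f+1$ members requires that the shared honest value lie within both trimmed ranges, and that the common node's value be identical at $u$ and at $v$. If plain trimming does not give this, I would instead have each node use the full received multiset from $S_F$ and rely on a witness technique, where nodes exchange the sets they used, to guarantee the overlap contains an honest common value.
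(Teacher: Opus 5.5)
\textbf{The gap is in sufficiency.} The algorithm you commit to does not contract, because one common honest sender does not make the two trimmed ranges intersect. That algorithm keeps one signed value per node, waits until some full $S_F$ has reported, drops the $f$ smallest and $f$ largest values, and takes the midpoint. For a counterexample, take the complete graph on $\{a,b,c,d\}$ with $f=1$; it satisfies Condition A since $S_F=V-F$. Let $b$ be faulty and let the honest states be $s_a=1$, $s_c=0$, $s_d=-1$. In every round $b$ signs $100$ for $a$ and $-100$ for $d$. The scheduler keeps every copy of the conflicting signature away from $a$ and $d$ until they terminate. Then:
\begin{itemize}
\item $a$ completes with $F=\{d\}$ on $\{1,100,0\}$ and keeps $1$;
\item $d$ completes with $F=\{a\}$ on $\{-100,-1,0\}$ and keeps $-1$;
\item $c$ completes with $F=\{b\}$ on $\{1,0,-1\}$ and keeps $0$.
\end{itemize}
The configuration repeats forever, so the honest spread stays $2$. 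Note that $c\in S_{\{d\}}\cap S_{\{a\}}$ is honest and its value $0$ is seen identically by $a$ and $d$. Yet it is trimmed at both nodes, and $[m_a,M_a]\cap[m_d,M_d]=\emptyset$.

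\textbf{The missing idea} is the witness mechanism, which you mention only as a fallback. Its role is not to supply an honest common value, since you already have one. Its role is to force the two multisets to share a whole block of at least $2f+1$ values from distinct nodes. In the paper, each node floods its accumulated signed set $X_{v,v}$ and stores the latest set $X_{v,w}$ reported by each $w$. Node $v$ proceeds only for an $F_v$ that contains every detected equivocator and such that, for every $w\in S_{F_v}$, $X_{v,w}$ holds exactly one value from each node of $S_{F_v}$. Take a common honest witness $w\in S_{F_u}\cap S_{F_v}$. Because $w$'s set grows monotonically, we may assume $X_{v,w}\subseteq X_{u,w}\subseteq X_{u,u}$. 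Since $|S_{F_v}|\ge 2f+1$, the rank-$(f+1)$ value $\psi$ among the $S_{F_v}$-values in $X_{v,w}$ then has $f$ values from distinct other nodes on each side, in both $X_{u,u}$ and $X_{v,v}$. The update discards all values of the $\phi$ detected equivocators and trims only $f-\phi$ per side. This leaves $\psi\in[m_u,M_u]\cap[m_v,M_v]$, even when the sender of $\psi$ is caught equivocating at $u$. Your rule of keeping one value per node and trimming $f$ does not give this.

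\textbf{Necessity.} Your constructions coincide with the paper's three parts. Non-emptiness in the third part is immediate once $|S_F|,|S_{F'}|\ge 2f+1$ is established. The one exception is that splitting $S_F$ into two non-empty halves needs $|S_F|\ge 2$, so the case $|S_F|=1$ needs a separate argument. The paper reuses its synchronous Part 2 for this. Alternatively, give $S_F$ input $0$ and $V-S_F$ input $1$, and delay $F$'s messages into $S_F$. Then $S_F$ cannot tell this run from one where $F$ is faulty and silent. Meanwhile $V-S_F$ cannot tell it from one where the single node of $S_F$ is Byzantine.
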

The proof of necessity of the graph conditions above is presented in Appendix \ref{a:async:necessity}. To prove their sufficiency, in Appendix  \ref{a:async:suff}, we present an asynchronous algorithm and prove its correctness. The second condition in Theorem \ref{t_authByzAsync} implies that $S_F \cap S_{F'}$ contains at least one non-faulty node. This observation is crucial in proving that $\epsilon$-approximate agreement is eventually achieved.

The correctness of Theorem \ref{t_reach_async} follows from the correctness of Theorem \ref{t_authByzAsync}, and the equivalence  between the two graph conditions, as proved in Appendix \ref{a:async_equiv}.

\section{Summary}
\label{s:conclu}

Under message authentication, this paper develops tight graph conditions for synchronous and asynchronous consensus in directed graphs in the presence of up to $f$ Byzantine faulty nodes. Further work is needed to develop efficient algorithms for these problems in directed graphs.

% \bibliographystyle{plain}
% \bibliography{paperlist,paperlist2,Tseng,new-bib}

\newpage

\appendix
\noindent {\bf APPENDICES}

\setlength{\parsep}{5pt}

\section{Equivalence of Graph Conditions for Synchronous Consensus}
\label{a:sync_equiv}

In this section, we prove Lemmas \ref{l:sync:1} and \ref{l:sync:2} to show the equivalence of the two graph conditions in Theorems \ref{t_reach_sync} and \ref{t_authByzSync} for synchronous consensus.

Recall the definition of graph $G_F$ from Section \ref{s:main}.

\begin{lemma}
\label{l:sync:1}
Suppose that graph $G(V,E)$ satisfies the 1-reach condition with parameter $\rho=f+1$. Then it satisfies Condition S.
\end{lemma}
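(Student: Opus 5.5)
Fix an arbitrary $F\subset V$ with $|F|\leq f$; we must show that $G_F$ has a unique source component and that it has at least $f+1$ nodes. The plan is to work in the condensation (DAG of strongly connected components) of $G_F$, using only the 1-reach condition with $\rho=f+1$ applied to this particular $F$. Since $n>f\geq|F|$, $V-F$ is nonempty, so the condensation of $G_F$ is a nonempty finite DAG and has at least one source component.

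\emph{Uniqueness.} Suppose, for contradiction, that $G_F$ has two distinct source components $S_1$ and $S_2$. Pick $u\in S_1$ and $v\in S_2$. A source component has no incoming edges from outside itself within $G_F$, so any node with a directed path to $u$ in $G_F$ lies in $S_1$; hence $reach_u(F)=S_1$, and likewise $reach_v(F)=S_2$. Distinct strongly connected components are disjoint, so $|reach_u(F)\cap reach_v(F)|=0<f+1$. This contradicts the 1-reach condition. Hence the source component is unique; call it $S_F$.

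\emph{Size.} Take $u=v\in S_F$. As above, $reach_u(F)=S_F$, so the 1-reach condition (with $u=v$) gives $|S_F|=|reach_u(F)\cap reach_u(F)|\geq f+1$.

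The only step needing care is the claim $reach_u(F)=S_F$ for $u$ in a source component. One inclusion holds because $S_F$ is strongly connected in $G_F$. For the other, take any $w$ with a path to $u$ in $G_F$. Consider the first node on that path that lies in $S_F$. If it is not $w$ itself, its predecessor on the path is a node outside $S_F$ with an edge into $S_F$. That contradicts the definition of a source component in $G_F$. All paths here are within $G_F$, so edges from $F$ are irrelevant. The rest is routine, and I expect no real obstacle.
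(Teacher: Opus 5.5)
Your proposal is correct and follows essentially the same argument as the paper. For $u$ in a source component, $reach_u(F)$ is confined to that component; this gives an empty intersection for two distinct source components and, with $u=v\in S_F$, the bound $|S_F|\geq f+1$. You also explicitly note that a source component exists and prove the equality $reach_u(F)=S_F$, whereas the paper uses only the inclusion $reach_u(F)\subseteq S_F$; this is harmless and slightly more complete.
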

\begin{proof}
Suppose that graph $G(V,E)$ satisfies the 1-reach condition with $\rho=f+1$. Consider any set $F\subset V$, $|F|\leq f$. 
\begin{itemize}
\item We will first show that $G_F$ contains a unique source component. The proof is by contradiction. Suppose that $G_F$ contains at least two source components, namely, $S_1$ and $S_2$. Consider $u_1\in S_1$ and $u_2\in S_2$.

By definition, $reach_{u_1}(F)$ does not contain any nodes from $F$. 

Also, $reach_{u_1}(F)$ cannot contain any node from $V-F-S_1$. To show this, observe the following: For a given node $w\in V-F-S_1$ to be included in $reach_{u_1}(F)$, there must be at least one node in $V-F-S_1$ (possibly distinct from $w$) that has an outgoing link to some node in $S_1$; but this is not possible by the definition of $S_1$. 

The above two observations imply that $reach_{u_1}(F)\subseteq S_1$. By a similar reasoning, $reach_{u_2}(F)\subseteq S_2$.
%
% \nitin{This is actually an equality since $S_2$ is strongly connected. I suppose we don't need the equality in the proof here.}
%
Thus, 
$$reach_{u_1}(F)\cap reach_{u_2}(F)\subseteq S_1\cap S_2$$
Since $S_1$ and $S_2$ are two distinct strongly connected components in $G_F$, $S_1\cap S_2=\emptyset$. This implies that 
$$|reach_{u_1}(F)\cap reach_{u_2}(F)|=0$$

The above violates 1-reach condition with $\rho=f+1$, which is a contradiction. Thus, we have proved that there must be a unique source component in $G_F$.
\item Denoting the unique source component in $G_F$ as $S_F$, we now show that $S_F$ must be of size at least $f+1$. By definition, $S_F\subseteq V-F$. Consider any node $v\in S_F$. By an argument similar to that used in the previous item, we know that
$$reach_v(F)\subseteq S_F$$
% \lewis{Should this be $u \neq v$?} 
Suppose that $u=v$. Then the above implies that
$reach_u(F)\cap reach_v(F)\subseteq S_F$,
which, in turn, implies that
$$|reach_u(F)\cap reach_v(F)|\leq |S_F|$$
Note that the 1-reach condition is applicable when $u=v$.
By the 1-reach condition with parameter $\rho=f+1$, we know that $$|reach_u(F)\cap reach_v(F)|\geq f+1$$
The above two inequalities together imply that $|S_F|\geq f+1$.
\end{itemize}
Thus, graph $G(V,E)$ satisfies Condition S.\\
\end{proof}

\begin{lemma}
\label{l:sync:2}
Suppose that graph $G(V,E)$ satisfies Condition S. Then it satisfies the 1-reach condition with parameter $\rho=f+1$. 
\end{lemma}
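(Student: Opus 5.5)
Fix any $F\subset V$ with $|F|\leq f$ and any $u,v\in V-F$. The plan is to show that $S_F\subseteq reach_u(F)\cap reach_v(F)$. Condition S gives $|S_F|\geq f+1$, and the 1-reach bound with $\rho=f+1$ follows at once. So everything reduces to one claim: every node of the unique source component $S_F$ of $G_F$ has a directed path in $G_F$ to every node $w\in V-F$. Applied with $w=u$ and $w=v$, this claim puts all of $S_F$ into both reach sets.

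To prove the claim, I would use the condensation of $G_F$. Its strongly connected components form a DAG, and a component with no incoming edges from other components is exactly a source component. Let $w\in V-F$ and let $C$ be the component of $G_F$ containing $w$. Starting at $C$, walk backwards in the DAG: whenever the current component has an incoming edge from another component, move to that predecessor component. Because the DAG is finite and acyclic, this walk ends at a component with no incoming edges from outside, which is a source component of $G_F$. By uniqueness in Condition S, that component is $S_F$.

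Reversing the walk gives a chain of components from $S_F$ to $C$ in which consecutive components are joined by edges of $G_F$. Strong connectivity inside each component lets us join these edges into a directed path in $G_F$ from any chosen node of $S_F$ to $w$. If $w\in S_F$, strong connectivity of $S_F$ gives the path directly. Every such path avoids $F$, so each node of $S_F$ belongs to $reach_w(F)$.

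I don't expect a real obstacle. The only point needing care is the claim that every node of $G_F$ can be reached from some source component. That is exactly where the uniqueness part of Condition S is used, and it is justified by the acyclicity of the condensation. I would also note explicitly that $u=v$ is allowed, and that $F$ is the same set in both reach sets, as the 1-reach condition requires.
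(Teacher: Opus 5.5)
Your proof is correct and follows the paper's route. The paper also shows $S_F\subseteq reach_u(F)\cap reach_v(F)$ and concludes from $|S_F|\geq f+1$, taking the fact that every node of $V-F$ is reachable from $S_F$ in $G_F$ from Lemma~\ref{l_sf_reach_Fstar} (via Lemma~\ref{l_sf_reach}). The only difference is how that reachability fact is proved: you use a backward walk in the condensation of $G_F$. The paper instead argues by contradiction: if the set $R$ of nodes reachable from $S_F$ missed some node, then $V-F-R$, having no incoming edges from $R$, would contain a second source component.
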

\begin{proof}
Suppose that graph $G(V,E)$ satisfies Condition S. Consider any set $F\subset V$, $|F|\leq f$. By Condition S, the unique source component $S_F$ corresponding to $F$ is of size at least $f+1$.

Consider nodes $u,v\in V-F$. ($u$ and $v$ may either be distinct or identical nodes.)

Lemma \ref{l_sf_reach_Fstar} holds since Condition S is satisfied. This lemma directly implies that, to node $u\in V-F$, there is a path in $G_F$ from every node $w\in S_F$. Therefore, $S_F\subseteq reach_u(F)$. Similarly, for node $v\in V-F$, $S_F\subseteq reach_v(F)$. Thus, $S_F\subseteq reach_u(F)\cap reach_v(F)$. This, in turn, implies that $$|reach_u(F)\cap reach_v(F)|\geq |S_F|\geq f+1$$ since $|S_F|\geq f+1$ by Condition S. Thus, the 1-reach condition with $\rho=f+1$ is satisfied. 
\end{proof}

\section{``Condition A'' for Asynchronous Consensus}
\label{a:cond_A}

For the convenience of presentation in  the appendices that follow, we now define a ``Condition A'', which consists of the conditions in Theorem \ref{t_authByzAsync}. (Here the letter A in "Condition A" stands for Asynchronous.)

Giving a name to the graph conditions in Theorem 
\ref{t_authByzAsync} makes it easier to refer to them concisely in the proofs presented later in the appendix.

\begin{framed}
\begin{definition}[Condition A]
\label{def:cond-authByzAsync}
Graph $G(V,E)$ is said to satisfy Condition A if it satisfies the two conditions below:

\begin{itemize}
    \item For any $F \subset V$ such that $|F| \leq f$, \\
    \hspace*{0.25in}(i) there is a unique source component $S_F$ in graph $G_F$, and\\
    \hspace*{0.25in}(ii) the size of the unique source component $S_F$ is at least $2f+1$.

    \item For any $F \subset V$ and $F' \subset V$ such that $|F|\leq f$ and  $|F'| \leq f$, the unique source components in $G_F$ and $G_{F'}$ have at least $f+1$ nodes in common.
    That is, $|S_F \cap S_{F'}| \geq f+1$.
\end{itemize}
\end{definition}
\end{framed}

Using the above\footnote{Note that the lower bound of $2f+1$ on the size of a source component is also implied by the lower bound $f+1$ on the size of the intersection of two source components. We include the $2f+1$ lower bound on the source component size separately in Condition A for clarity.} Condition A, we can re-state Theorem \ref{t_authByzAsync} as follows:

\begin{theorem}
\label{t_authByzAsync-restate}
In an asynchronous system with message authentication, $\epsilon$-approximate Byzantine consensus in the presence of up to $f$ Byzantine faulty nodes, with any $\epsilon>0$, is possible if and only if graph $G(V,E)$ satisfies Condition A. 
\end{theorem}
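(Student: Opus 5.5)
The restated theorem is literally Theorem \ref{t_authByzAsync} with its two graph conditions renamed Condition A. So I will prove Theorem \ref{t_authByzAsync} in two directions: necessity and sufficiency.

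\textbf{Necessity.} Suppose $\epsilon$-approximate consensus is possible. I argue by contradiction, using the standard partition/indistinguishability technique adapted to signatures. Since signatures prevent forging, the faulty nodes must be chosen so that each execution is realizable by genuinely faulty nodes.

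First suppose $G_F$ has two source components $S_1,S_2$ for some $|F|\le f$. Nodes in $F$ are faulty and crash (stay silent). Inputs are $0$ on $S_1$ and $1$ on $S_2$. Each source component receives no messages from outside itself except from $F$, which is silent. Asynchrony lets all other messages be delayed arbitrarily. Validity forces $S_1$ to output $0$ and $S_2$ to output $1$, violating $\epsilon$-agreement for $\epsilon<1$.

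Second, suppose some $|S_F|\le 2f$, or some $|S_F\cap S_{F'}|\le f$. I will treat the intersection bound as the main case, since it implies the size bound (take $F'=F$). Let $C=S_F\cap S_{F'}$, $|C|\le f$. I build three executions, where messages leaving $F\cup F'$ toward the relevant source components are delayed:
\begin{itemize}
\item In $\alpha$, the nodes of $F$ are silent crashed nodes. $S_F$ has input $0$ and must decide near $0$ using only messages among $V-F$.
\item Symmetrically, in $\beta$, the nodes of $F'$ are silent, $S_{F'}$ has input $1$, and it must decide near $1$.
\item In $\gamma$, the nodes of $C$ are Byzantine. Each behaves toward $S_F-C$ as in $\alpha$ and toward $S_{F'}-C$ as in $\beta$. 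This is possible with signatures because the nodes of $C$ sign their own messages. The nodes of $F$ and $F'$ are correct but slow.
\end{itemize}
Validity constraints must be checked carefully. The subtle part is choosing inputs so that $S_F-C$ and $S_{F'}-C$ see views that are also valid executions. Then in $\gamma$ some correct node in $S_F-C$ decides near $0$ and some correct node in $S_{F'}-C$ decides near $1$, contradicting agreement. Nonemptiness of $S_F-C$ uses $|S_F|\ge f+1$, which already follows from the synchronous argument of Appendix \ref{a:sync:necessity}.

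\textbf{Sufficiency.} I will design an iterative algorithm. In each round $t$, every node signs and floods its state $(v,t,s_v)$. Because signatures prevent tampering, a value travelling along any path avoiding faulty internal nodes arrives intact. For each candidate $F$, Lemma \ref{l_sf_reach_Fstar} gives paths from $S_F$ to every node avoiding $F$ internally.

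Node $v$ waits until, for some $|F|\le f$, it has round-$t$ values from all of $S_F$. This wait terminates: when $F$ equals the actual faulty set, all of $S_F$ is correct and their values arrive. Node $v$ then discards the $f$ lowest and $f$ highest collected values and takes the midpoint of the remaining range. Since $|S_F|\ge 2f+1$, the remaining values lie in the range of correct states, which gives validity.

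\textbf{Main obstacle: convergence.} Two correct nodes $u,v$ may use different sets $F,F'$. They share at least $f+1$ nodes of $S_F\cap S_{F'}$, hence at least one correct node $w$ whose signed value both received. The signature prevents equivocation, so $w$'s value is the same in both views. After trimming, both $u$ and $v$ have ranges that are within the correct range and related to $w$'s value. I would show the new spread is at most half the old one, so repeating $O(\log(1/\epsilon))$ rounds gives $\epsilon$-agreement. The delicate part is ensuring the trimmed ranges both contain or straddle the common value; I expect that to need a careful counting argument.
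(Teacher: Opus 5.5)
\textbf{Sufficiency has a genuine gap.} The step you flag cannot be closed. With your update rule, a single common correct value does not force the trimmed ranges of $u$ and $v$ to overlap, and the algorithm need not converge at all. The rule is: wait for round-$t$ values from all of some $S_F$, drop the $f$ lowest and $f$ highest, take the midpoint.

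Take $G=K_4$ on $\{a,b,c,d\}$ with $f=1$. Condition A holds, since $S_\emptyset=V$, $S_{\{x\}}=V-\{x\}$, and any two of these share at least two nodes. Let $b$ be faulty, with $s_c=0$, $s_a=\tfrac12$, $s_d=1$. In every round $b$ signs $0$ for $c$ and $1$ for $d$. The scheduler lets $c$ see only the values of $a,b,c$ before updating, $d$ only those of $a,b,d$, and $a$ only those of $a,c,d$. Then $c$ computes $0$ from $\{\tfrac12,0,0\}$, $d$ computes $1$ from $\{\tfrac12,1,1\}$, and $a$ computes $\tfrac12$. The configuration repeats forever. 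This happens even though the correct node $a$ lies in $S_{\{d\}}\cap S_{\{c\}}$ and contributes the same signed value to both views, which is exactly what your argument provides.

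The missing idea is the witness mechanism of Appendix \ref{a:async:suff}. Each node floods a signed copy of its \emph{accumulated set} of signed values. Node $v$ updates only when, for some $F_v$, every $w\in S_{F_v}$ has reported a set $X_{v,w}$ containing a value from every node of $S_{F_v}$. A correct $w\in S_{F_u}\cap S_{F_v}$ only enlarges its set, so the two reports are nested, say $X_{v,w}\subseteq X_{u,w}$. Hence $u$ and $v$ both hold the same value from each of the at least $2f+1$ nodes of $S_{F_v}$. The $(f+1)$-th smallest of these lies in both $[m_u,M_u]$ and $[m_v,M_v]$, which gives the factor-$\tfrac12$ contraction. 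A faulty node can now reach $u$ with two different signed values through different witnesses. So you also need the paper's equivocation rule: discard all values of such nodes and trim only $f-\phi$ values on each side.

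\textbf{Necessity is essentially sound but has a wrong reduction.} Your argument follows the paper's Parts 1 and 3. However, taking $F'=F$ only gives $|S_F|\ge f+1$, not $2f+1$, and your three executions become vacuous because then $C=S_F$ and $S_F-C=\emptyset$. This is repairable. If $|S_F|\le 2f$, pick $F'\subseteq S_F$ with $|F'|=\lceil|S_F|/2\rceil\le f$, as in the proof of Lemma \ref{l:async:1}. Then $C=S_F\cap S_{F'}\subseteq S_F-F'$ has at most $f$ nodes. Also $S_F-C\supseteq F'$ is nonempty, and $S_{F'}-C\neq\emptyset$ because $|S_{F'}|\ge f+1$. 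So your intersection argument applies. The paper instead proves the size bound directly, splitting $S_F$ into two nonempty parts $L,R$ of size at most $f$.

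Separately, ``validity forces $S_1$ to output $0$'' needs justifying, as the paper does, by indistinguishability from an execution in which all of $V-F$ has input $0$. When inputs are mixed, validity alone only confines outputs to $[0,1]$.
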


\section{Equivalence of Graph Conditions for Asynchronous Consensus}
\label{a:async_equiv}

In this section, we prove Lemmas \ref{l:async:1} and \ref{l:async:2} to show the equivalence of the graph conditions in Theorems \ref{t_reach_async} and \ref{t_authByzAsync} for asynchronous consensus. Recall that we defined Condition A in Appendix \ref{a:cond_A}, which is a re-statement of the graph conditions in Theorem \ref{t_authByzAsync}.

Recall the definition of graph $G_F$ from Section \ref{s:main}.

\begin{lemma}
\label{l:async:1}
Suppose that graph $G(V,E)$ satisfies the 2-reach condition with parameter $\rho=f+1$. Then it satisfies Condition A.

%\lewis{Should we mention Condition A?}
\end{lemma}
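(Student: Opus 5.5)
The plan is to derive each part of Condition A directly from the 2-reach condition with $\rho=f+1$. The approach mirrors the proof of Lemma \ref{l:sync:1}, but we also use Lemma \ref{l_sf_reach_Fstar} once Condition S is known to hold.

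\textbf{Step 1: reduction to Condition S.} Taking $F_1=F_2=F$ in the 2-reach condition gives the 1-reach condition with $\rho=f+1$. By Lemma \ref{l:sync:1}, Condition S holds. So for every $F$ with $|F|\le f$ there is a unique source component $S_F$ in $G_F$, and $|S_F|\ge f+1$. This settles part (i) of the first bullet of Condition A. From the proof of Lemma \ref{l:sync:1}, any $u\in S_F$ satisfies $reach_u(F)\subseteq S_F$. Since $S_F$ is strongly connected, in fact $reach_u(F)=S_F$.

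\textbf{Step 2: the intersection bound.} Take $F,F'$ with $|F|,|F'|\le f$, and pick $u\in S_F$ and $v\in S_{F'}$. Then $u\in V-F$ and $v\in V-F'$, so the 2-reach condition applies and gives
$$f+1\le |reach_u(F)\cap reach_v(F')| = |S_F\cap S_{F'}|.$$
This is the second bullet of Condition A.

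\textbf{Step 3: the size bound $|S_F|\ge 2f+1$.} The idea is to pick $F'$ so that $S_F$ and $S_{F'}$ overlap little. Suppose $|S_F|\le 2f$. Choose $F'\subseteq S_F$ with $|F'|=\min(f,|S_F|)$. Then $S_{F'}\cap F'=\emptyset$, so
$$|S_F\cap S_{F'}|\le |S_F|-|F'|\le f.$$
This contradicts Step 2. Note that $|F'|=f$ here, since $|S_F|\ge f+1$. This yields part (ii) of the first bullet, completing Condition A.

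I expect the only subtle point to be Step 1's claim $reach_u(F)=S_F$ for $u\in S_F$, which is what converts the reach-set bound into a bound on the source components. This is exactly the argument already given in Lemma \ref{l:sync:1}: nothing outside $S_F$ in $G_F$ has an edge into $S_F$. Alternatively, I could avoid Step 3 entirely by citing the paper's footnote that the intersection bound implies the size bound, which is what Step 3 proves.
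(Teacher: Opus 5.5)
Your proof is correct. Steps 1 and 2 essentially coincide with the paper's argument. The paper also gets uniqueness by specializing to $F_1=F_2=F$ and reusing the reasoning of Lemma~\ref{l:sync:1}. It gets $|S_F\cap S_{F'}|\ge f+1$ exactly as you do, needing only $reach_u(F)\subseteq S_F$ rather than equality.

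The real difference is the size bound $|S_F|\ge 2f+1$.

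\textbf{The paper's route.} It derives a contradiction with the 2-reach condition directly. Assuming $|S_F|\le 2f$, it takes $F'\subseteq S_F$ with $|F'|=\lceil |S_F|/2\rceil$ and an \emph{arbitrary} $v\in V-F'$. It then observes that $reach_u(F)\cap reach_v(F')\subseteq S_F-F'$, a set of at most $f$ nodes. This argument is self-contained. It needs neither the prior bound $|S_F|\ge f+1$ nor any source component for $F'$, and the balanced split covers every case $1\le |S_F|\le 2f$ at once.

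\textbf{Your route.} You first import $|S_F|\ge f+1$ via Condition S, which lets you take $|F'|=f$. You then read the size bound off the intersection bound already established in Step 2. This makes Step 3 a two-line corollary. In effect it proves the paper's footnote that the intersection condition implies the $2f+1$ size condition. The cost is that Step 3 now depends on the existence of $S_{F'}$ and on the earlier bound $|S_F|\ge f+1$. The paper's argument avoids both.

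Incidentally, the bound $|S_F|\ge f+1$ that you use in Step 3 also follows from Step 2 with $F'=F$. So, given uniqueness, the intersection bound alone yields the size bound.
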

\begin{proof}
Suppose that graph $G(V,E)$ satisfies the 2-reach condition with $\rho=f+1$. 
\begin{itemize}
\item Consider any set $F\subset V$, $|F|\leq f$.  We will show that $G_F$ contains a unique source component. 
The proof of this is very similar to the proof of the uniqueness of the source component in the proof of Lemma \ref{l:sync:1}. In particular, in the last part of that proof, we showed that $|reach_{u_1}(F)\cap reach_{u_2}(F)|=0$. Defining $F_1=F_2=F$, that implies $|reach_{u_1}(F_1)\cap reach_{u_2}(F_2)|=0$, which violates the 2-reach condition with $\rho=f+1$.

In the rest of this proof, as proved above, we will assume that the source component is unique. Corresponding to set $F$, the unique source component is named $S_F$.

\item Now we will show that for any given $F\subset V$, $|F|\leq f$, size of the corresponding unique source component $S_F$ is at least $2f+1$. The proof is by contradiction.

Suppose that for some set $F\subset V$, $|F|\leq f$, the source component size $|S_F|$ is at most $2f$. Also, because $n>f$, the source component is non-empty. Consider any node $u\in S_F$. Then, by the argument in the proof of Lemma \ref{l:sync:1}, $reach_u(F)\subseteq S_F$.

Let us choose $F'\subseteq S_F$ such that $|F'|=\left\lceil \frac{|S_F|}{2}\right\rceil$. Since $0<|S_F|\leq 2f$, it follows that $0<|F'|\leq f$, and $|S_F-F'|\leq |F'|\leq f$. Consider any node $v\in V-F'$. By the definition of the reach sets, $reach_v(F')$ does not contain any nodes in $F'$. Therefore, $reach_v(F')\subseteq V-F'$, and 
\begin{eqnarray}
    reach_u(F)\cap reach_v(F') & \subseteq &  S_F \cap (V-F') \\
    \Rightarrow reach_u(F)\cap reach_v(F') & \subseteq &  (F'\cup (S_F-F')) \cap (V-F') \\
    \Rightarrow reach_u(F)\cap reach_v(F') & \subseteq &  S_F-F' \\
    \Rightarrow |reach_u(F)\cap reach_v(F')| & \leq &  |S_F-F'| ~~\leq~f\\
\end{eqnarray}
This contradicts the 2-reach condition with $\rho=f+1$. Hence, source component $S_F$ must contain at least $2f+1$ nodes.

\item Now we will prove the following: For any $F \subset V$ and $F' \subset V$ such that $|F|\leq f$ and  $|F'| \leq f$, $|S_F \cap S_{F'}| \geq f+1$.

% This part uses some of the ideas used in the second part of the proof of Lemma \ref{l:sync:1}.

 Consider any node $u\in S_F$. Then, as also argued in the proof of Lemma \ref{l:sync:1},
$reach_u(F)\subseteq S_F$. Similarly, consider node $v\in S_{F'}$. Then $reach_v(F')\subseteq S_{F'}$. Thus, $reach_u(F)\cap reach_v(F')\subseteq S_F\cap S_{F'}$. Thus,
$$|reach_u(F)\cap reach_v(F')|\leq |S_F\cap S_{F'}|$$
Also, by the 2-reach condition with $\rho=f+1$, we know that
$$|reach_u(F)\cap reach_v(F')|\geq f+1$$
The above two inequalities together imply that $|S_F\cap S_{F'}|\geq f+1$.

\end{itemize}
Thus, graph $G(V,E)$ satisfies Condition A (and hence it also satisfies the graph conditions in Theorem \ref{t_authByzAsync}).

% \nitin{The above does NOT prove 2f+1 size for $S_F$. That is fine *ONLY IF* we remove that condition from Condition A. Otherwise, we need to add that too here.}

\end{proof}

\begin{lemma}
\label{l:async:2}
Suppose that graph $G(V,E)$ satisfies Condition A. Then it satisfies the 2-reach condition with parameter $\rho=f+1$. 
\end{lemma}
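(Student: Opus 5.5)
The plan mirrors the proof of Lemma \ref{l:sync:2}, with the single set $F$ replaced by the pair $F_1,F_2$. Fix arbitrary $F_1,F_2\subset V$ with $|F_1|,|F_2|\leq f$, a node $u\in V-F_1$ and a node $v\in V-F_2$. By the first part of Condition A, each of $G_{F_1}$ and $G_{F_2}$ has a unique source component, which we call $S_{F_1}$ and $S_{F_2}$.

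First we show that $S_{F_1}\subseteq reach_u(F_1)$. Condition A implies Condition S, since a unique source component of size at least $2f+1$ in particular has size at least $f+1$. Hence Lemma \ref{l_sf_reach_Fstar} applies. It gives, from every node $w\in S_{F_1}$ to $u$, a directed path that has no node of $F_1$ as an internal node. The endpoints $w\in S_{F_1}\subseteq V-F_1$ and $u\in V-F_1$ are also outside $F_1$, so the whole path lies in $G_{F_1}$. Therefore $w\in reach_u(F_1)$, and so $S_{F_1}\subseteq reach_u(F_1)$. By the same argument, $S_{F_2}\subseteq reach_v(F_2)$.

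If one prefers not to route through Condition S, the same inclusion follows directly. In the finite graph $G_{F_1}$, following incoming edges backward from $u$ must eventually reach a source component. That source component is unique, so it is $S_{F_1}$. Since $S_{F_1}$ is strongly connected, every node of $S_{F_1}$ then has a path to $u$ inside $G_{F_1}$.

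Combining the two inclusions gives $S_{F_1}\cap S_{F_2}\subseteq reach_u(F_1)\cap reach_v(F_2)$. By the second part of Condition A, $|S_{F_1}\cap S_{F_2}|\geq f+1$, so
\[
|reach_u(F_1)\cap reach_v(F_2)|\geq f+1,
\]
which is exactly the 2-reach condition with $\rho=f+1$.

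The only step needing any care is the inclusion of the source component in the reach set, that is, checking that the path supplied by Lemma \ref{l_sf_reach_Fstar} really lies in $G_{F_1}$. This holds because both endpoints are outside $F_1$ and the lemma excludes $F_1$ from the internal nodes. The rest is immediate.
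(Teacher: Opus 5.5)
Your proposal is correct and follows essentially the same route as the paper: observe that Condition A implies Condition S, invoke Lemma \ref{l_sf_reach_Fstar} to get $S_{F_1}\subseteq reach_u(F_1)$ and $S_{F_2}\subseteq reach_v(F_2)$, and conclude from $|S_{F_1}\cap S_{F_2}|\geq f+1$. Your check that the path actually lies in $G_{F_1}$ is sound, as is your direct alternative that uses only uniqueness of the source component; citing Lemma \ref{l_sf_reach}, which already gives paths inside $G_F$ to every node of $V-F$, would make that step immediate.
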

\begin{proof}
Suppose that graph $G(V,E)$ satisfies Condition A. Consider any sets $F_1,F_2\subset V$, $|F_1|,|F_2|\leq f$, and any node $u\in V-F_1$ and any node $v\in V-F_2$. Then, by the conditions in Theorem \ref{t_authByzAsync}, there are unique source components $S_{F_1}$ and $S_{F_2}$ in graphs $G_{F_1}$ and $G_{F_2}$, respectively, each source component containing at least $2f+1$ nodes, and
$$|S_{F_1}\cap S_{F_2}|\geq f+1$$

Now observe that any graph satisfying the Condition A also satisfies Condition S. This implies that Lemma \ref{l_sf_reach_Fstar} holds. This lemma directly implies that, for node $u\in V-F_1$ there is a path in $G_{F_1}$ from every node $w\in S_{F_1}$. Therefore, $S_{F_1}\subseteq reach_u(F_1)$. Similarly, for node $v\in V-F_2$, $S_{F_2}\subseteq reach_v(F_2)$.

Thus, $S_{F_1}\cap S_{F_2}\subseteq reach_u(F_1)\cap reach_v(F_2)$. This, in turn, implies that $$|reach_u(F_1)\cap reach_v(F_2)|\geq |S_{F_1}\cap S_{F_2}|\geq f+1$$ Thus, the 2-reach condition with $\rho=f+1$ is satisfied. 
\end{proof}

\section{Proof of Lemma \ref{l_sf_reach_Fstar}}
\label{a_paths}

We first introduce and prove Lemma \ref{l_sf_reach} below. Lemma \ref{l_sf_reach} will be used to prove Lemma \ref{l_sf_reach_Fstar}.

\begin{lemma}
\label{l_sf_reach}
Suppose that graph $G(V,E)$ satisfies Condition S.
Given a set $F\subset V$ with $|F|\leq f$, from each node in the corresponding source component $S_F$ to each node in $V-F$, there exists a directed path that does not include any nodes in $F$.
\end{lemma}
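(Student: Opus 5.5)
Fix $F$ with $|F|\leq f$ and write $S_F$ for the unique source component of $G_F$ given by Condition S. Since $S_F$ is strongly connected in $G_F$, any node of $S_F$ reaches any other node of $S_F$ by a path in $G_F$, and such a path uses no node of $F$. So the task reduces to showing that every node $w\in V-F$ is reachable in $G_F$ from at least one node of $S_F$. Composing that path with the internal path in $S_F$ then gives a path from any chosen node of $S_F$ to $w$. The composed walk can be shortened to a simple path, and it still avoids $F$.

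For the main step, define $R\subseteq V-F$ as the set of nodes reachable in $G_F$ from $S_F$. Suppose, for contradiction, that $W=V-F-R$ is nonempty. No node of $R$ has an edge in $G_F$ to a node of $W$, since otherwise that node of $W$ would lie in $R$. Hence the subgraph of $G_F$ induced by $W$ has no incoming edges from outside $W$ within $G_F$.

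By the observation in Section \ref{s_terminology}, this induced subgraph therefore contains a strongly connected component $C$ of $G_F$ with no incoming edges from outside $C$. Thus $C$ is a source component of $G_F$. Since $C\subseteq W$ and $S_F\subseteq R$, the components $C$ and $S_F$ are distinct source components of $G_F$. This contradicts the uniqueness part of Condition S, so $W=\emptyset$ and every node of $V-F$ is reachable from $S_F$ in $G_F$.

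The only step needing care is the claim that $C$ is a source component of $G_F$ itself, not merely of the induced subgraph on $W$. I would spell it out as follows. An edge into $C$ from outside $C$ would come either from $W-C$ or from $R$. Edges from $R$ into $W$ are excluded by the definition of $R$. Edges from $W-C$ into $C$ are excluded because $C$ is a source component of the subgraph induced by $W$. The size bound $|S_F|\geq f+1$ is not needed here; only uniqueness is used.
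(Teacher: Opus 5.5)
Your proof is correct and follows essentially the same route as the paper: you define $R$ as the set of nodes of $V-F$ reachable in $G_F$ from $S_F$, observe that there are no edges from $R$ into $V-F-R$, and extract from $V-F-R$ a second source component of $G_F$, which contradicts the uniqueness part of Condition S. Your explicit check that $C$ is a source component of $G_F$ itself, and not merely of the subgraph induced by $W$, spells out a step the paper covers only by its general remark in Section~\ref{s_terminology}.
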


\begin{proof}
The proof is by contradiction. Suppose that graph $G$ satisfies Condition S, but the claim in the lemma is not true. Then there exist nodes $u\in S_F$ and $v\in V-F$ such that $u$ does not have a path to $v$ that excludes the nodes in $F$. In other words, $u$ does not have a path to $v$ in $G_F$. This has two implications:
\begin{itemize}
    \item $S_F$ is a strongly connected component in $G_F$, and $u\in S_F$ does not have a path to $v$ in $G_F$. Therefore, no node in $S_F$ has a path to $v$ in $G_F$.
    \item Since $S_F$ is a strongly connected component in $G_F$, the above observation implies that $v\not\in S_F$. Thus, $v\in V-F-S_F$.
\end{itemize}

Define $R$ to be the subset of $V-F$ that contains all the nodes in $V-F$ to which the nodes in $S_F$ have a path in $G_F$. Observe that $S_F\subseteq R$. Also, by the definition of $R$, node $v$ above is not in set $R$. Thus, $v\in V-F-R$, and set $V-F-R$ is thus non-empty.

If any node in $R$ were to have a link to some node $w\in V-F-R$, then $w$ would be reachable in $G_F$ from the nodes in $S_F$, and thus $w$ must be in $R$, leading to a contradiction. Thus, there are no links from the nodes in $R$ to any node in $V-F-R$, i.e., $E_{R,(V-F-R)}=\emptyset$.

This, in turn, implies that $V-F-R$ contains a strongly connected component  of $G_F$, say $T$, such that the nodes in $T$ have no incoming neighbors in $V-F-T$. Thus, $T$ is a source component of $G_F$ by definition. Also, $S_F\subseteq R$ and $T\subseteq V-F-R$. Thus, $T$ is distinct from $S_F$, which violates the uniqueness of the source component in $G_F$, as stipulated in Condition S, leading to a contradiction.

This completes the proof.
\end{proof}

% ++++++++++++++++++++++++++

We now use the above lemma to prove Lemma \ref{l_sf_reach_Fstar}. We state Lemma \ref{l_sf_reach_Fstar} here again.

LEMMA \ref{l_sf_reach_Fstar}.
{\it Suppose that graph $G(V,E)$ satisfies Condition S. Given a set $F\subset V$ with $|F|\leq f$, from each node in the corresponding source component $S_F$ to each node in $V$, there exists a directed path that does not include any nodes in $F$ as internal nodes.}

\begin{proof}

If $F=\emptyset$, then the above lemma immediately follows from Lemma \ref{l_sf_reach}.

Now assume that $F$ is non-empty. By Lemma \ref{l_sf_reach}, we know that each node in $S_F$ has a path to each node in $V-F$ that does not include any node in $F$ (thus, no node in $F$ is an internal node on these paths).

We now show the following: to each node in $F$, there exists a link from at least one node in $V-F$.

The proof is by contradiction. 
\begin{itemize}
    \item Suppose that there is no link to some node $w\in F$ from any node in $V-F$.
    
    Let us define $F_0 = F-\{w\}$ Thus, $V-F_0=(V-F)\cup {w}$.

    \item Since there is no link to $w$ from any node in $V-F$, there is no path in $G_{F_0}$ from any node in $V-F$ to $w$.

    \item Consider $S_{F_0}$, the source component in $G_{F_0}$ corresponding to $F_0$. Now, $w\in V-F_0$. Then by Lemma \ref{l_sf_reach}, each node in $S_{F_0}$ must have a path to node $w$ that includes only the nodes from $V-F_0=(V-F)\cup \{w\}$.
    \item The previous two observations imply that $S_{F_0}\subseteq \{w\}$. 
    
    \item However, by condition S, $|S_{F_0}|$ must be at least $f+1$, and $f+1\geq 2$ (because $f>0$). This  contradicts with $S_{F_0}\subseteq \{w\}$.
    \end{itemize}
    This contradiction derived above proves that at least one node, say node $u\in V-F$ has a link to node $w\in F$. That is, $(u,w)\in E$.

    Now, by Lemma \ref{l_sf_reach}, every node in $S_F$ has a path to the node $u$ above that does not include any nodes in $F$. This, and the existence of the link $(u,w)$ implies that there is a path to $w$ from each node in $S_F$ that does not include any node in $F$ as an internal node. This proves the lemma.
\end{proof}

% +++++++++++++++++++++++++++++++++++++++++++

\section{Proof of Lemma \ref{l_out_neighbors}}
\label{a_out_neighbors}

We state the lemma here again.

LEMMA \ref{l_out_neighbors}.
{\it Suppose that graph $G(V,E)$ satisfies Condition S.
Suppose that for some set $F\subset V$ with $|F|=f$, the corresponding source component $S_F$ contains $f+p$ nodes, where $0< p\leq f$. Then, for $p \leq k\leq |F|= f$,
any size $k$ subset of set $F$ must have at least $k+1$ outgoing neighbors in $S_F$.
}

\begin{proof}
The proof is by contradiction.
Suppose that there exists $F_1 \subseteq F$, where $|F_1|=k$ and $p\leq k\leq f$, such that the nodes in $F_1$ have at most $k$ outgoing neighbors in $S_F$ (i.e., the nodes in $F_1$ collectively have outgoing links to at most $k$ nodes in $S_F$).

Consider a set $S_1 \subset S_F$ such that $|S_1|=k$ and $S_1$ contains all the outgoing neighbors of set $F_1$ in $S_F$. If the number of outgoing neighbors of $F_1$ in $S_F$ is strictly less than $k$, then, to  ensure that $|S_1|=k$, set $S_1$ would contain some nodes from $S_F$ that are not outgoing neighbors of $F_1$. This is possible, because by assumption, $S_F$ contains at least $f+1$ nodes, and $k\leq f$. 

Since $|S_F|=f+p$, $|S_1|=k$, and $0<p\leq k \leq f$, it follows that $S_F-S_1$ is non-empty. 

Let us define $F^+ = (F - F_1)\cup S_1$. Observe that $|F^+| = (f - k) + k= f$. $S_{F^+}$ denotes the unique source component corresponding to $F^+$.

Figure \ref{fig:l_out_neighbors} illustrates some of these sets. In the figure, $F_1$ is the white colored subset of $F$, and $S_1$ is the shaded subset of $S_F$. $F^+$ is the union of the shaded subsets of $F$ and $S_F$, i.e., $F^+=(F-F_1)\cup S_1$.

 Now, $V-F^+=(V-F-S_F)\cup F_1\cup (S_F-S_1)$, and $(V-F-S_F) \cup F_1=(V-F^+)-(S_F-S_1)$. The first equality implies that $F^+$ and $S_F-S_1$ are disjoint.

\begin{figure}
    \centering
    \includegraphics[width=0.8\linewidth]{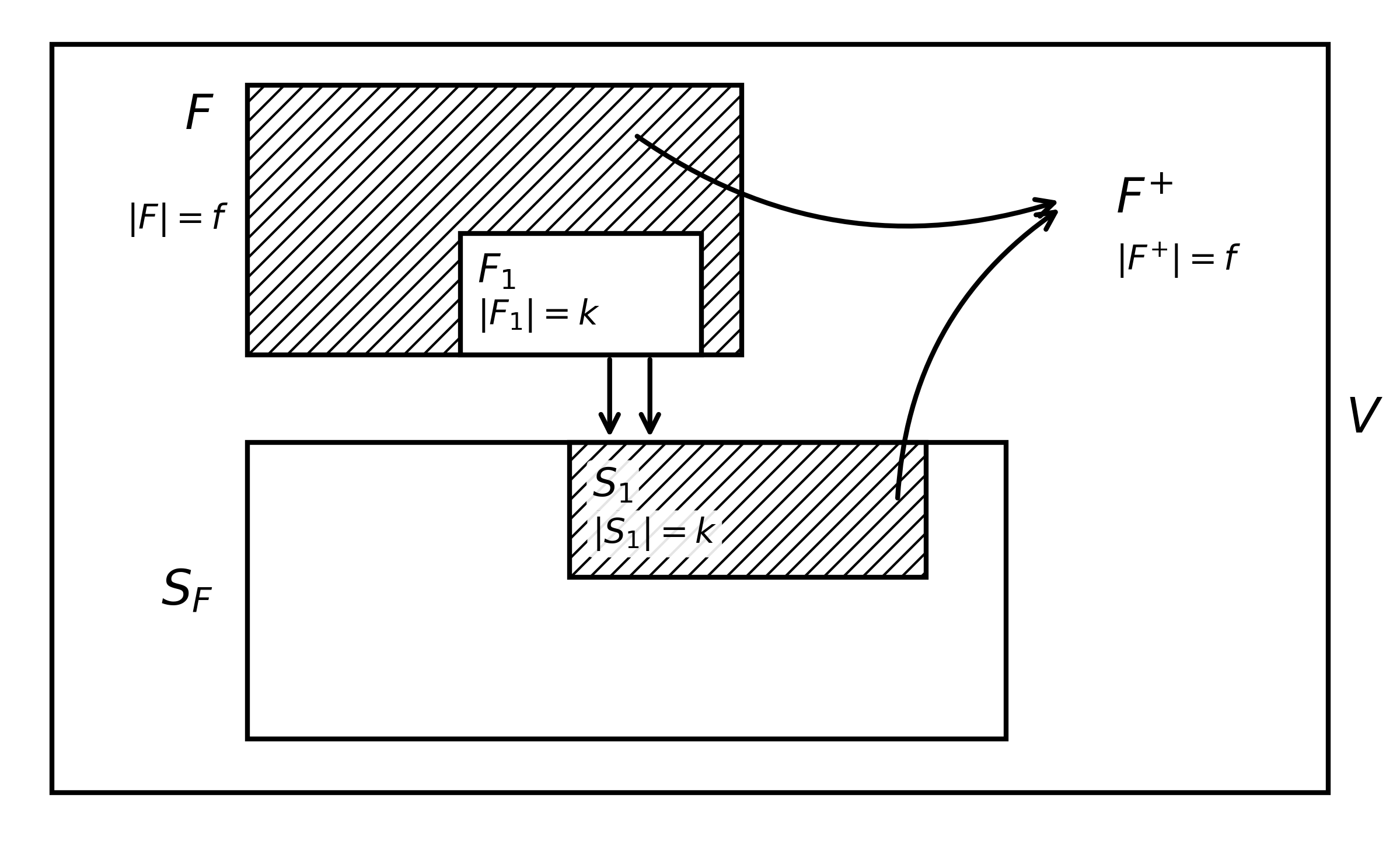}
    \caption{Illustration of sets used in the proof of Lemma \ref{l_out_neighbors}. $F_1$ is the white subset of $F$. $S_1$ is the shaded subset of $S_F$. $F^+$ is the union of the shaded subsets of $F$ and $S_F$, i.e., $F^+=(F-F_1)\cup S_1$.}
    \label{fig:l_out_neighbors}
\end{figure}

Now, by the definition of source component $S_F$, there are no links from $V-F-S_F$ to $S_F-S_1$. Also, since $S_1$ contains all the outgoing neighbors of $F_1$ in $S_F$, there are no links from $F_1$ to $S_F-S_1$. As noted earlier, $(V-F-S_F) \cup F_1=(V-F^+)-(S_F-S_1)$. These three observations together imply that there are no links from the nodes in $(V-F^+)-(S_F-S_1)$ to the nodes in $S_F-S_1$.
This, together with the definition of the source component, implies that $(S_F - S_1)$ must contain a source component in $V-F^+$. Recall that Condition S requires a unique source component. Consequently, $S_{F^+}\subseteq S_F-S_1$.
(Recall that $S_F-S_1$ is non-empty, and $F^+$ and $S_F-S_1$ are disjoint.) 

Therefore, $|S_{F^+}|\leq |S_F-S_1|= (f+p)-k = f+(p-k) \leq f$, since $p\leq k$. However, this contradicts Condition S, which requires each source component to be of size at least $f+1$.

This concludes the proof.
\end{proof}

\section{Proof of Lemma \ref{l_match_size}}
\label{a_match_size}

We state the lemma here again.

LEMMA \ref{l_match_size}.
{\it Suppose that graph $G(V,E)$ satisfies Condition S.
Suppose that for some set $F\subset V$ with $|F|=f$, the corresponding source component $S_F$ contains $f+p$ nodes, where $0<p\leq f$. Consider the bipartite graph $B(F,S_F,E_{F,S_F})$. Let $F_1$ be any size $k$ subset of set $F$, where $p\leq k\leq |F|=f$. There exists a matching of size at least $k-p+1$ consisting of edges from $F_1$ to $S_F$.
}
\begin{proof}

Note that in the bipartite graph $B$ in the lemma, only the edges from $F$ to $S_F$ are included.

Let us define set $N$ consisting of $p-1$ {\bf new} nodes -- thus, the nodes in $N$ do not belong to $V$. Let us also create a set $L$ of {\bf new} edges that contains an edge from each node in $F$ to each node in $N$. Now, we define a new bipartite graph $B^*(F,S_F\cup N, E_{F,S_F}\cup L)$. Effectively, $B^*$ is obtained by augmenting $B$ with the nodes in $N$ and the edges in $L$. Figure \ref{fig:l_match_size} illustrates $B$ and $B^*$.

\begin{figure}
    \centering
    \includegraphics[width=0.8\linewidth]{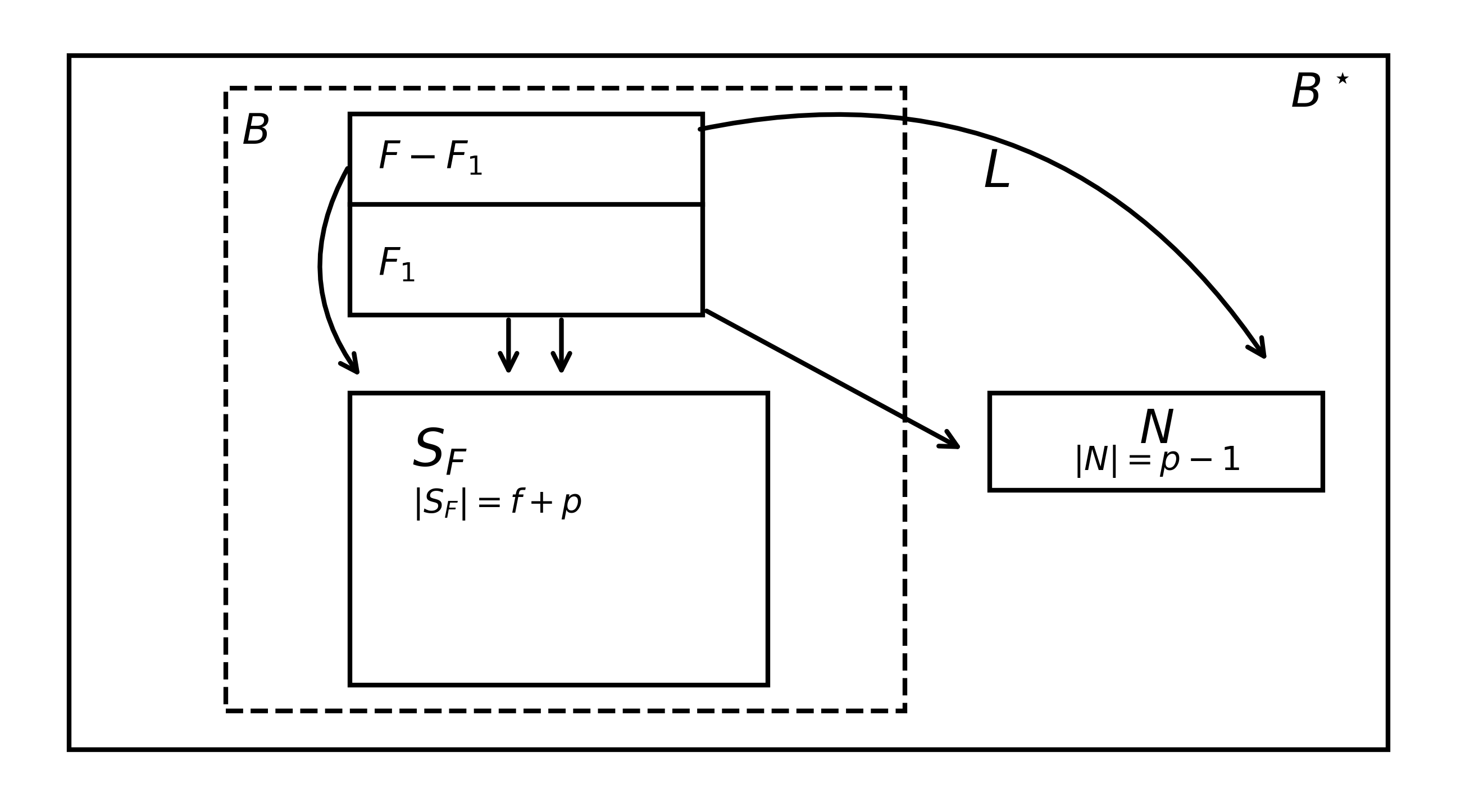}
    \caption{Illustration of sets used in the proof of Lemma \ref{l_match_size}.}
    \label{fig:l_match_size}
\end{figure}

Consider a non-empty subset $R$ of $F$. Let size $|R|$ be denoted as $r$. There are two cases, depending on the value of $r$, recalling that $|F|=f$.
Now we consider each of the two cases:
\begin{itemize}
\item Case 1: $p\leq r\leq |F|=f$: Due to Lemma \ref{l_out_neighbors}, $R$ has at least $r+1$ outgoing neighbors in bipartite graph $B$, and hence also in $B^*$.
\item Case 2: $1\leq r\leq p-1$: Due to the augmentation procedure above (to obtain $B^*$ from $B$), nodes in $R$ have all the nodes in $N$ as the outgoing neighbors. Thus, $R$ has at least $p-1$ outgoing neighbors in bipartite graph $B^*$. Since, in case 2, $r\leq p-1$, this implies that $R$ in this case has at least $r$ outgoing neighbors in bipartite graph $B^*$.
\end{itemize}
The above two observations together imply that every $R$, where $R\subseteq F$, has at least $|R|$ outgoing neighbors in $B^*$.
Thus, we can apply Hall's Theorem to bipartite graph $B^*$, which implies that there is a matching of size $|F|=f$ from $F$ to $S_F\cup N$. Of the $f$ links in this matching, at most $p-1$ links may be to the new nodes in $N$ (because $|N|=p-1$). So, there is a matching of size $\geq f-(p-1)=f-p+1$ from $F$ to $S_F$.

Now consider any $F_1\subseteq F$ of size $k$, $p\leq k\leq |F|=f$. Observe that $F_1$ is obtained by removing $|F-F_1|=f-k$ nodes from $F$. Then, since there is a matching of size at least $f-p+1$ from $F$ to $S_F$, there must be a matching of size at least $(f-p+1)-(f-k)=k-p+1$ from $F_1$ to $S_F$. 

This proves the lemma.
\end{proof}

\section{Proof of Lemma \ref{l_component_size}}
\label{a_component_size}
We first re-state Lemma \ref{l_component_size}.

LEMMA \ref{l_component_size}
{\it Suppose that graph $G(V,E)$ satisfies Condition S. Consider $F\subset V$ such that $|F|\leq f$. Then $|S_F|\geq 2f+1-|F|$.}
\begin{proof}
By Condition S, $|S_F|\geq f+1$. Therefore, the claim in the lemma is true when $|F|=f$.

For $|F|<f$, the proof is by contradiction. Suppose that the claim is not true for some set $F\subset V$ with $|F|<f$. Thus, $|S_F|< 2f+1 - |F|$.

Now consider a set $F'\subset V$ of size equal to $f$ such that $F \subset F'$ and $F'-F\subset S_F$. Such a set $F'$ can surely be formed because $|S_F|\geq f+1$. Let $S_{F'}$ be the source component for $F'$.

Now, by the definition of a source component, the nodes in $S_F$ do not have any incoming links from the the nodes in $V-F-S_F$. Since $F\subset F'$, it follows that the nodes in $S_F$ do not have any incoming links from the nodes in $V-F'-S_F$. Consequently, the source component $S_{F'}$ corresponding to $F'$ should be contained within $S_F$. That is, $$S_{F'}\subseteq S_F$$
Now, by the definition of a source component, $S_{F'}$ does not contain any nodes from $F'$. Combining the above two observations, we obtain that
 $$S_{F'} \subseteq S_F - (F'-F)$$
The above observation along with the fact that $F'-F\subset S_F$ and $F\subset F'$, implies that
\begin{eqnarray*}
|S_{F'}| & \leq & |S_F|-(|F'|-|F|)\\
\Rightarrow |S_{F'}| & < & (2f+1 - |F|) - (f-|F|) ~~\mbox{~~because $|S_F|<2f+1 - |F|$ and $|F'|=f$}\\
\Rightarrow |S_{F'}| & < & f+1
\end{eqnarray*}

This contradicts  Condition S. This completes the proof of the lemma.
\end{proof}

\section{Correctness of the Synchronous Consensus Algorithm (Algorithm \ref{a_sync_algo})}
\label{a:sync:algo:proof}

Assume that graph $G(V,E)$ satisfies Condition S.

Let $F^*$ be the actual set of Byzantine faulty nodes in a given execution of the synchronous consensus algorithm, Algorithm \ref{a_sync_algo}. In this section, we prove that the execution satisfies the termination, validity, and agreement properties specified in Section \ref{s:intro} for consensus in synchronous systems.

\subsection*{Proof of the Termination Property}

Termination in bounded time occurs because the algorithm performs $n\choose f$ iterations (for different $F\subset V$, each of size $f$), with each such iteration taking a bounded amount of time.

\subsection*{Proof of the Validity Property}

To show validity, we first prove Lemma \ref{l_validity} below. Each iteration of Algorithm \ref{a_sync_algo} uses a different set $F$. Lemma \ref{l_validity} below applies to each such iteration. The claim in the lemma only applies to the non-faulty nodes, that is, the nodes in $V-F^*$.

\begin{lemma}
\label{l_validity}
At the end of any given iteration of Algorithm \ref{a_sync_algo}, the state variable $s_v$ of any node $v\in V-F^*$ is equal to the state variable of some node in $V-F^*$ at the start of that iteration.
\end{lemma}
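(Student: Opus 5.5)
We fix an iteration with set $F$ and a non-faulty node $v\in V-F^*$, and do a case analysis on what $v$ does in step 2(ii). The tool throughout is message authentication. Any tuple $(w,s)$ in $Y_v$ with $w\in V-F^*$ must satisfy $s=s_w$, where $s_w$ is $w$'s state at the start of the iteration. Such a tuple enters $Y_v$ in one of three ways: directly as $(v,s_v)$ when $w=v$; from a correct message $(w,s,X)^w$ signed by $w\in S_F$; or as a signed entry $(w,s)^w$ inside some $X$. A non-faulty $w$ signs only its true current value, and state variables change only in step 2(ii), after flooding ends. So every value attributed to a non-faulty node in $Y_v$ is genuine. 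The duplicate-removal step deletes $(w,1)$ only when $(w,0)$ is also present, which cannot happen for non-faulty $w$. Hence non-faulty nodes keep exactly their one true value in $Y_v$.

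\textbf{Case 0: the check at the start of step 2(ii) fails.} Then $s_v$ is unchanged, and $v$ itself is in $V-F^*$, so the claim holds with the witness node being $v$.

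\textbf{Case 1: $Y_v$ holds values from at least $2f+1$ distinct nodes.} At most $f$ of these nodes are faulty, and each node contributes exactly one value after duplicate removal. So a strict majority of the contributing nodes are non-faulty. If the chosen value $b$ (the plurality, with ties broken to 0) were held by no non-faulty contributor, then every node contributing $b$ would be faulty, giving at most $f$ of them. Meanwhile at least $f+1$ non-faulty contributors would hold $1-b$, so $b$ could not be the plurality, nor win a tie. Therefore some non-faulty node $w$ with $s_w=b$ exists.

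\textbf{Case 2: $Y_v$ holds values from at most $2f$ distinct nodes.} Here we use the observation already argued in the text. Every matched pair $(q,r)$ from the matching given by Lemma \ref{l_match_size} contains a faulty node; this is shown via the signed $X_r$ inside $r$'s signed message. Hence, among the $2f+2p-2k-1=2(f+p-k-1)+1$ remaining nodes, at most $f+p-k-1$ are faulty, so non-faulty nodes are a strict majority. The same majority argument as in Case 1 then shows the majority value is some non-faulty node's starting state.

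The main point needing care is making rigorous the claim that a faulty node cannot inject a wrong value for a non-faulty node. This covers values inside $X$ sets, which carry the original signature, and values relayed by faulty nodes. The claim is also what justifies the matching argument's use of $X_r$: if $r$ is non-faulty and $v$ received any correct message signed by $r$, then that message is the unique one $r$ signed, so it contains $(q,s_q)^q$ whenever $q$ is non-faulty. With that in hand, the counting in each case is routine.
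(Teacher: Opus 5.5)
Your proposal is correct and follows the paper's proof. It uses the same split into the no-update case and the two update cases, and the same observation that a strict majority of the contributing nodes is non-faulty, so the chosen value is some non-faulty node's start-of-iteration state. You also make explicit what the paper leaves implicit: the plurality counting, and why every value attributed to a non-faulty node in $Y_v$ is genuine. Two small notes on the latter. Your list of entry routes should include the round-0 receipt of $(u,s)^u$ by nodes of $S_F$, which the same signature argument covers. And, like the paper, this argument tacitly assumes signed messages are tagged with the iteration, so stale signatures from earlier iterations cannot be replayed.
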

\begin{proof}
We consider two possibilities for any node $v\in V-F^*$:
\begin{itemize}
\item Node $v$ does not update $s_v$ in the current iteration: Then the claim is trivially true, since the value of $s_v$ at the end of the iteration equals its own  value at the start of the iteration.

\item Node $v$ updates its state variable in step (ii): From the observations made in Cases 1 and 2 both in Section \ref{ss_sf}, we know that a majority of the nodes whose binary values are used to update $s_v$ in Step (ii) are non-faulty. Since only a minority of these nodes are faulty, the new value of $s_v$ is guaranteed to be equal to the value received from a non-faulty node which, in turn, equals that node's state variable value at the start of the iteration.
\end{itemize}
This concludes the proof of Lemma \ref{l_validity}.
\end{proof}

Now, we will show validity inductively. At the start of the first iteration in step 2 of Algorithm \ref{a_sync_algo}, the state variable value of each non-faulty node equals its own input. Now suppose that at the start of the $i$-th iteration, $i\geq 1$, in step 2 of Algorithm \ref{a_sync_algo}, the state variable value of each non-faulty node equals the input of a non-faulty node. Then, Lemma \ref{l_validity} implies that the state of each non-faulty node at the end of the $i$-th iteration, and hence at the start of the $(i+1)$-iteration, equals the input of a non-faulty node. By induction, this implies that the state variable of a non-faulty node after the last iteration of the algorithm, and hence the non-faulty node's output, equals the input of a non-faulty node. Thus, the algorithm achieves the validity property.

\noindent{\bf Proof of the Agreement Property}

$F^*$ denotes the actual set of faulty nodes in a given execution of the algorithm. In this section, let us consider the earliest iteration in which $F^*\subseteq F$. We will refer to this iteration as the ``{\bf deciding iteration}''. In the deciding iteration, all the nodes in $V-F$ are non-faulty, because $F^*\subseteq F$. The nodes in $F-F^*$ are also non-faulty. Also, all the nodes in $S_F$ are non-faulty, since $S_F\subseteq V-F$.

We now show that for any two non-faulty nodes $u,v\in V-F^*$, at the end of step 2(i) in the deciding iteration, $Y_u=Y_v$.
\begin{itemize}
\item Suppose that, for $w\in S_F$, value $s$ of node $w$ is included in $Y_u$ (i.e., $(w,s)\in Y_u)$ at the end of step 2(i) in the deciding iteration. Due to the correctness checks applied during flooding to detect message tampering, this implies that node $w$ must have sent value $s$ in round 1 of flooding. This message is forwarded by the non-faulty nodes in $V-F$ in round 2 through round $|V|$. By Lemma \ref{l_sf_reach_Fstar}, there exists a path from $w$ to every node in $V$ (and hence to every non-faulty node in $V-F^*$) that does not contain any node in $F$ as an internal node. Any such loop-free path must be of length at most $|V|-1$. Thus, the faulty nodes in $F^*\subseteq F$ cannot prevent node $w$'s message from reaching the nodes in $V-F^*$ by the end of round $|V|$ of flooding. Therefore, the message sent by node $w$ in round 1 will be received by all the nodes in $V-F^*$ by the end of round $|V|$ of flooding. Thus, $(w,s)$ will be included by node $v$ in set $Y_v$ by the end of step 2(i) in the deciding iteration.
\item Suppose that, for $w\in I_{F}$, and value $s$ of node $w$ is included in $Y_u$ (i.e., $(w,s)\in Y_u$) at the end of step 2(i) in the deciding iteration. Then there must exist a node $q\in S_F$ that included $(w,s)^w$ in its set $X_q$ in round 1 of flooding in the deciding iteration. Now, due to the check performed at the start of step (ii), we know that node $v$ must have received $(q,s_q,X_q)^q$ by the end of step (i), and thus, $v$ has received $(w,s)^w$, and $(w,s)\in Y_v$ at the end of step 2(i) in the deciding iteration.

\item Finally, observe that $Y_u$ and $Y_v$ can only contain values for the nodes in $I_{F}\cup S_F$.
\end{itemize}
The above three observations prove that $Y_u=Y_v$ at the end of step 2(i) in the deciding iteration. Therefore, following the deterministic procedure in step 2(ii), nodes $u$ and $v$ will both update their state variables to an identical value. That is, $s_u=s_v$ at the end of step 2(ii) in the deciding iteration, for any $u,v\in V-F^*$. This, in conjunction with Lemma \ref{l_validity} being applied to all the subsequent iterations of Algorithm \ref{a_sync_algo}, implies that the outputs of the non-faulty nodes satisfy the agreement property. \\

This proves the correctness of Algorithm \ref{a_sync_algo} for synchronous Byzantine consensus with message authentication. In other words, this proves that Condition S is sufficient for synchronous consensus. Since we have also proved (in Appendix \ref{a:sync_equiv}) that Condition S is equivalent to the reach condition in Theorem \ref{t_reach_sync}, the above also proves that the condition in Theorem \ref{t_reach_sync} is sufficient for synchronous consensus.

\section{Necessity of Condition S in Synchronous Systems}
\label{a:sync:necessity}

Recall that ``Condition S'' was defined in Definition \ref{def:cond-authByzSync}. 
Our goal in this appendix is to prove the necessity of Condition S for synchronous consensus. Necessity of Condition S would also imply the necessity of the condition in Theorem \ref{t_reach_sync}, due to the equivalence proved in Appendix \ref{a:sync_equiv}.

This proof is by contradiction. Suppose that a given graph $G(V,E)$ does \textit{not} satisfy Condition S, and there exists a synchronous algorithm $\mathcal A$ that achieves exact Byzantine consensus on graph $G(V,E)$ under message authentication in the presence of up to $f$ Byzantine faults.

The proof of necessity is in two parts, depending on which part of Condition S is violated by graph $G(V,E)$.

\noindent\textbf{Part 1: Uniqueness of the source component}

Suppose that there exists $F\subseteq V$ with $|F|\leq f$ such that $G_F$ contains at least two distinct
source components $S_0$ and $S_1$.

Recall that each source component $S_i$ ($i=1,2$) is a strongly connected component in $G_F$ that has no incoming links from any node in $V-F-S_i$.
Consider the following three executions of Algorithm $\mathcal A$, in each of which, all the nodes in $F$ are Byzantine faulty, and simply crash before the start of the execution (thus, the nodes in $F$ do not send any messages in any of these three executions):

\begin{itemize}
  \item Execution $\mathsf{E}^{(0)}$: Nodes in $F$ crash at the start of the execution. Each non-faulty node in $V-F$ has input $0$.
  By the validity condition, since each non-faulty node's input is $0$, all the nodes in $S_0\subseteq V-F$ output $0$. Suppose that the last of the nodes in $S_0$ decides its output by time $\tau_0$.

  \item Execution $\mathsf{E}^{(1)}$: Nodes in $F$ crash at the start of the execution. Each non-faulty node in $V-F$ has input equal to $1$. By the validity condition, since each non-faulty node's input is 1, the nodes in $S_1\subseteq V-F$ output $1$. Suppose that the last of the nodes in $S_1$ decides its output by time $\tau_1$.

  \item Execution $\mathsf{E}$: Nodes in $F$ crash at the start of the execution. The nodes in $S_0$ have input $0$, the nodes in $S_1$ have input $1$, and all the other non-faulty nodes have input 0.
\end{itemize}

Now, consider execution $E$. By definition, $S_0$ has no incoming neighbors in $V-F-S_0$, and nodes in $F$ send no messages;
thus the nodes in $S_0$ have the same local histories in $\mathsf{E}$ and $\mathsf{E}^{(0)}$ up to time $\tau_0$,
so they will choose output $0$ in execution $\mathsf{E}$ too (as they do in $\mathsf{E}^{(0)}$). Similarly, nodes in $S_1$ choose output 1 in $\mathsf{E}$.
Hence, the agreement condition for synchronous consensus is violated, resulting in  a contradiction. This proves that there must be a unique source component corresponding to any $F\subset V$, $|F|\leq f$.

\noindent\textbf{Part 2: Size of the unique source component $S_F$  must be at least $f+1$}

Part 1 above proved that graph $G_F$ must contain a unique source component corresponding to set $F$, which we refer to as $S_F$. In the proof in Part 2, we assume that the source component is unique.

We now prove by contradiction that the size of the unique source component for any given $F\subset V$, $|F|\leq f$, must be at least $f+1$.
Assume that there exists $F\subset V$ with $|F|\leq f$ such that the size of the unique source component $S_F$ in $G_F$ is at most $f$. Since $n>f$, we know that $|S_F|\geq 1$. Thus, $1\leq |S_F|\leq f$.

We consider two possibilities separately: (i) $V-F-S_F=\emptyset$, and (ii) $V-F-S_F\neq\emptyset$.

Let us first consider the case when $V-F-S_F=\emptyset$. Thus, $V=F\cup S_F$.
In this case, since $|S_F|\leq f$ and $|V|=n>f$, $F$ cannot be empty.
We now show an execution in which consensus is violated. 
In particular, consider an execution $\mathsf{E}$ of Algorithm $\mathcal A$ in which the nodes in $S_F$ have input 0, the nodes in $F$ have input 1, and all the nodes in $V=F\cup S_F$ are non-faulty.
\begin{itemize}
    \item Execution $\mathsf{E}$ is indistinguishable to the nodes in $S_F$ from another execution, say $\mathsf{E}^{(0)}$, in which the nodes in $F$ are faulty, but follow the algorithm correctly with input 1, and the nodes in $S_F$ are non-faulty with input 0. In $\mathsf{E}^{(0)}$, due to the validity condition, the nodes in $S_F$ must choose output 0. Therefore, the nodes in $S_F$ must choose output 0 in execution $\mathsf{E}$ too.
    
    \item Similarly, execution $\mathsf{E}$ is indistinguishable to the nodes in $F$ from another execution, say $\mathsf{E}^{(1)}$, in which the nodes in $S_F$ are faulty, but follow the algorithm correctly with input 0, and the nodes in $F$ are non-faulty with input 1. In $\mathsf{E}^{(1)}$, due to the validity condition, the nodes in $F$ must choose output 1. Therefore, the nodes in $F$ must choose output 1 in execution $\mathsf{E}$ too.

    \item Thus, as noted above in execution $\mathsf{E}$, the nodes in $S_F$ choose output 0, and the nodes in $F$ choose output 1. Since the nodes in $S_F\cup F$ are non-faulty in execution $\mathsf{E}$, the agreement property is violated.  
\end{itemize}

Now, we consider the case when $V-F-S_F\neq\emptyset$. Consider the following executions of  algorithm $\mathcal A$ when $V-F-S_F\neq\emptyset$.

\begin{itemize}
\item Execution $\mathsf{E}^{a}$: The nodes in $S_F$ have input 0, the nodes in $F$ have input 1, and all the other nodes in $V-F-S_F$ have input 1. The nodes in $S_F$ are faulty.
The nodes in $S_F$ follow algorithm $\mathcal A$ but with one caveat: the nodes in $S_F$ pretend that the nodes in $F$ have crashed before the start of the execution (thus, the messages from $F$ are ignored by the nodes in $S_F$, and similarly, any messages that would be sent to $F$ from $S_F$ are simply discarded before actually being sent to $F$).

In this case, due to the validity condition, the nodes in $V-F-S_F$ must choose output 1. Note that $V-F-S_F$ is non-empty.

\item Execution $\mathsf{E}^{b}$: The nodes in $S_F$ have input 0, the nodes in $F$ have input 1, and all the other nodes in $V-F-S_F$ have input 0. The nodes in $F$ are faulty, but follow the algorithm $\mathcal A$ correctly with two  caveats: (i) the nodes in $F$ do not send messages to the nodes in $S_F$, and discard any message received from $S_F$; and (ii) the nodes in $F$ behave to nodes in $V-F-S_F$ as if the  nodes in $S_F$ have discarded any message received from $F$ and have  not sent any message to $F$. 

In this case, due to the validity condition, the nodes in $S_F$ must choose output 0.

\item Execution $\mathsf{E}^{c}$: The nodes in $S_F$ have input 0, the nodes in $F$ have input 1, and all the other nodes in $V-F-S_F$ have input 1. The nodes in $F$ are faulty, but follow the algorithm $\mathcal A$ correctly with two caveats: 
(i) the nodes in $F$ do not send messages to the nodes in $S_F$, and discard any message received from $S_F$; and (ii) the nodes in $F$ behave to nodes in $V-F-S_F$ as if the  nodes in $S_F$ have discarded any message received from $F$ and have  not sent any message to $F$. Note that these two caveats are identical in both $E^b$ and $E^c$; however, the behavior of faulty nodes in $F$ may not be entirely identical in the two executions, because the nodes in $V-F-S_F$ have different inputs in $E^b$ and $E^c$ (and these nodes nodes may send messages to the nodes in $F$).

To the nodes in $V-F-S_F$, execution $\mathsf{E}^{c}$ is indistinguishable from execution $\mathsf{E}^{a}$. 
This is because of the following three reasons: (i) The nodes in $V-F-S_F$ have input $1$ in both executions; (ii) In execution $\mathsf{E}^{c}$, the nodes in $F$ behave to nodes in  $V-F-S_F$ in a way that the messages from $F$ are ignored by $S_F$, and any messages that would be sent to $F$ from $S_F$ are discarded. (iii) In execution $\mathsf{E}^{a}$, nodes in $S_F$ behave to nodes in $V-F-S_F$ as nodes in $F$ have crashed before the start of the execution. (Note that the effect of this  behavior of $S_F$ in $\mathsf{E}^{a}$ is that the messages from $F$ are ignored by the nodes in $S_F$, and any messages that would be sent to $F$ from $S_F$ are simply discarded before actually being sent to $F$.) So, to the nodes in $V-F-S_F$, execution $\mathsf{E}^{c}$ is indistinguishable from execution $\mathsf{E}^{a}$. Since they choose output 1 in execution $\mathsf{E}^{a}$, in execution $\mathsf{E}^{c}$ also the nodes in $V-F-S_F$ must choose output 1.

To the nodes in $S_F$, execution $\mathsf{E}^{c}$ is indistinguishable from execution $\mathsf{E}^{b}$ (recall that there are no links from $V-F-S_F$ to $S_F$, by the definition of a source component). Therefore, in execution $\mathsf{E}^{c}$, the nodes in $S_F$ must choose output 0. 

Thus, in execution $\mathsf{E}^{c}$, the nodes in $S_F$ choose output 0, and the nodes in $V-F-S_F$ choose output 1. Since the nodes in $V-F=(V-F-S_F)\cup S_F$ are non-faulty in execution $\mathsf{E}^{c}$, the above violates the agreement property.
\end{itemize} 

The contradiction above proves that the source component size must be at least $f+1$.

\section{Sufficient Condition for Asynchronous Consensus} 
\label{a:async:suff}

The goal in this section is to prove the sufficiency of the conditions for asynchronous consensus stated in Theorem \ref{t_authByzAsync}. Recall that, for convenience of presentation, we stated these conditions as Condition A in Definition \ref{def:cond-authByzAsync} (in Appendix \ref{a:cond_A}). Thus, we now  show the sufficiency of Condition A to achieve consensus in an asynchronous system with message authentication. 

To prove the sufficiency of Condition A, let us assume that graph $G(V,E)$ satisfies Condition A. We now present an asynchronous Byzantine consensus algorithm, and prove its correctness.

Although in Section \ref{s:intro},
 we assumed that the inputs are discrete values (i.e., 0 or 1), our results for {\it asynchronous systems} also apply to the case when the inputs are real-valued scalars in a finite range, say, $[0,1]$.\\

\noindent{\bf Asynchronous Byzantine Consensus Algorithm}

The algorithm proceeds in asynchronous rounds, with rounds numbered 1, 2, 3, etc.  Each node $v\in V$ maintains a real-valued state variable $s_v$, which is initialized to equal its input $x_v$ at the start of the algorithm.

In each round $r$, each node $v\in V$ also maintains set $X_{v,u}$ for each $u\in V$. The algorithm proceeds in asynchronous rounds. Messages belonging to round $r$ include the round number, so that they can be identified as round $r$ messages. For simplicity, we will not include the round number explicitly in the messages in our description below.

The algorithm description below refers to three procedures: (a) Correctness check for a message, (b) Completeness check for set $X_{v,v}$, and (c) Update procedure for state variable $s_v$. We will present the details of these three procedures {\it after} presenting the rest of the algorithm. \\

\hrule

The algorithm proceeds in asynchronous rounds.
Each node $v\in V$ performs the following two steps, namely steps (i) and (ii), in each asynchronous round $r$, $1\leq r\leq R_{max}$, where  $R_{max}$ is the total number of rounds performed. We will discuss how $R_{max}$ is determined later.
\begin{list}{}{}
 \item[(i)] At the start of round $r$, node $v$ initializes set $X_{v,v}$ to be equal to $\{(v,s_v)^v\}$, and for each $u\in V$ such that $u\neq v$, it initializes $X_{v,u}$ to be an empty set. Then, node $v$ sends a round $r$ message containing $(v,X_{v,v})^v$ to all its outgoing neighbors.
 \item[(ii)] On receipt of a round $r$ message of the form $(u,X)^u$ from an incoming neighbor $u$, node $v$ performs the following steps if the message passes the ``correctness check'' as described below:
 \begin{list}{}{}
    \item[(a)] Update $X_{v,u}$: If $X_{v,u}\subset X$, then $X_{v,u} = X$
     \item[(b)] Update $X_{v,v}$:~~ $X_{v,v} = X_{v,v}\cup X$
     \item[(c)] Node $v$ sends a round-$r$ message $(v,X_{v,v})^v$ to all its outgoing neighbors.
     \item [(d)] If set $X_{v,v}$ passes the {\it completeness check} (as described below) corresponding to some set $F_v$, $|F_v|\leq f$, then node $v$ performs an {\it update procedure} (as described below) to update its state variable $s_v$ using the set $X_{v,v}$ that satisfied the {\it completeness check}, and then begins asynchronous round $r+1$ of the algorithm.
 \end{list}
\end{list} 

When node $v$ has completed $R_{max}$ asynchronous rounds, it sets its output $y_v=s_v$.

In addition to the above steps, each node $v$ performs the following step, independent of which round it is in itself:
\begin{itemize}
\item On receipt of a message (irrespective of its round number) of the form $(u,X)^u$ from an incoming neighbor $u$, if the message passes the ``correctness check'', node $v$ forwards the message $(u,X)^u$ to all its outgoing neighbors. (To prevent the same information from circulating in the network indefinitely, we can add a mechanism that limits the number of times the same information is forwarded 
to $|V|$. We omit the details of such a mechanism here, since it is peripheral to the {\it correctness} of the algorithm, and pertains to its overhead. It should suffice here to note that such a mechanism indeed exists.)  \\
\end{itemize}

\hrule

~

Now we elaborate on three procedures used in the above algorithm.
\begin{itemize}
\item {\bf Correctness check of message $(u,X)^u$}: This check is designed to ensure that the information in the message is correctly signed. Since the message is of the form $(u,X)^u$, we already know that the message was originally sent by node $u$ and bears the signature of node $u$. The ``correctness'' of set $X$ remains to be checked. The message passes the correctness check if $X$ only contains values in the form $(w,s)^w$, where $w\in V$ and $s$ is a real number.\\

(Note that if X contains a value of the form $(p,s)^q$ where $p\neq q$, it will fail the above check. In general, any tampered value in set $X$ will fail the above check.)  \\

\item {\bf Completeness check of $X_{v,v}$} in step ii(c): Set $X_{v,v}$ passes the completeness check if there exists a set $F_v\subset V$, $|F_v|\leq f$, such that the following properties are satisfied:
\begin{itemize}
    \item If there exist multiple (i.e., distinct) values in $X_{v,v}$ from any node $q$, then node $q\in F_v$. Note that in this case, node $q$ must be surely faulty.
    \item For each node $w$ in the source component $S_{F_v}$ corresponding to $F_v$, set $X_{v,w}$ at node $v$ contains exactly one value for each node in $S_{F_v}$.
    \item Set $X_{v,v}$ at node $v$ contains exactly one value for each node in $S_{F_v}$. This condition is actually implied by the above two conditions. However, we include it as a separate condition check for clarity. \\
\end{itemize}

\item {\bf Update procedure for $s_v$}: Recall that the values in sets $X_{v,v}$ and $X_{v,w}$ above are of the form $(x,s)^x$ : In this example, the value $s$ corresponds to node $x$. If node $x$ is non-faulty, then node $x$ must have indeed sent this value in the current round (because faulty nodes cannot tamper messages signed by non-faulty nodes).

From step ii(c) recall that, to perform the update of $s_v$, node $v$ uses the same instance of its set $X_{v,v}$ that satisfied the completeness check (i.e., no more values are added to set $X_{v,v}$ before the update of $s_v$). Note that the set $X_{v,v}$ may potentially contain an identical value received from multiple nodes -- each such instance of a value in $X_{v,v}$ is treated as a distinct value (i.e., we consider the multiset formed by the values in $X_{v,v}$). Secondly, if for any node $x$, there are multiple distinct values in $X_{v,v}$, then node $x$ must be faulty. The following steps are performed to update $s_v$:
\begin{list}{}{}
    \item[(I)]  If there is any node $q$ for which there are multiple (i.e., distinct) values in $X_{v,v}$, then all such values are removed from $X_{v,v}$. In this case,  node $q$ must be faulty. Suppose that values from $\phi$ nodes are removed in this step.
    \item[(II)] The remaining values in $X_{v,v}$ are sorted in an increasing order. If the values from two different nodes are identical, the value from the node with the smaller identifier is considered smaller for the purpose of sorting. Using this sorted order, and $\phi$ from the previous step, node $v$ removes the smallest $f-\phi$ values, and the largest $f-\phi$ values, from among the remaining values in set $X_{v,v}$.
    Set $X_{v,v}$ after removing these values is used in step III.
    \item[(III)] Let $M_v$ and $m_v$ denote the maximum and the minimum values among the remaining values in $X_{v,v}$. The new value of $s_v$ is computed\footnote{It is possible to use other methods for updating $s_v$ in this step. For example, we may compute the new value of $s_v$ as the average of all the remaining values in $X_{v,v}$. The choice of the update method affects the proof of correctness somewhat.} as $\frac{M_v+m_v}{2}$.
    
\end{list}
 
\end{itemize}

\subsection*{Correctness of the Asynchronous Consensus Algorithm}

Assume that graph $G(V,E)$ satisfies Condition A described in Definition \ref{def:cond-authByzAsync}.

Let $F^*$ denote the actual set of faulty nodes in a given execution of the asynchronous algorithm. We will show that the asynchronous algorithm satisfies the termination, validity and $\epsilon$-agreement properties described in Section \ref{s:intro} for asynchronous algorithms.

{\bf Proof of the Termination Property:} We first show that the progress of a non-faulty node through each asynchronous round cannot be impeded by the faulty nodes. In particular, we show that, for each non-faulty node $v$, the completeness check in each round will  be satisfied eventually.
\begin{itemize}
    \item Note that $S_{F^*}$ contains only non-faulty nodes, since $S_{F^*}\subseteq V-F^*$ by definition.
    \item Now observe that Condition A is stronger than Condition S, and therefore, Lemma \ref{l_sf_reach_Fstar} holds under Condition A too. Lemma \ref{l_sf_reach_Fstar} implies that, each node in $S_{F^*}$ has a path to every non-faulty node (including
    the nodes in $S_{F^*}$), such that these paths do not include any faulty nodes as internal nodes.
    \item The above property ensures that  the completeness check can eventually pass at any non-faulty node $v$ for $F_v=F^*$. In general, the completeness check may be satisfied sooner for some $F_v$ different from $F^*$. Therefore, each non-faulty node will eventually complete any given asynchronous round. Thus, it will eventually complete $R_{max}$ rounds, and select an output. \\
\end{itemize}

{\bf Proof of the Validity Property:} Consider any given round, and consider the update procedure used by a non-faulty node $v$ for updating its state variable $s_v$ in that round. Observe that the $\phi$ nodes whose values are dropped in step I of the update procedure are necessarily faulty (because a non-faulty node sends only one state variable value in each round, and a faulty node cannot tamper messages signed by a non-faulty node).

After the completion of step I of the update procedure, each remaining value in $X_{v,v}$ comes from a distinct node. If the $f-\phi$ smallest values that are dropped from $X_{v,v}$ in step II all come from faulty nodes, then the remaining values in $X_{v,v}$ must all come from non-faulty nodes. Similar argument holds if the largest $f-\phi$ values dropped in step II all come from faulty nodes.

On the other hand, in step II, if at least one of the smallest $f-\phi$ values dropped comes from some non-faulty node, and at least one of the largest $f-\phi$ values dropped comes from some non-faulty node, then every value remaining in $X_{v,v}$ is within the range of these two values from non-faulty nodes.

In each case, the values used in computing new $s_v$ in step III are within the range of the values received from non-faulty nodes in that round. Therefore, the new value of $s_v$ is also within the range of the values received from non-faulty nodes in that round.

Now recall that the nodes send their current state variable values at the start of each given round. Also, at the start of round 1,  the state variable value at each node is equal to the node's input. Then applying the above argument inductively, we conclude that the validity property is satisfied by the outputs of the non-faulty nodes. \\
 
{\bf Proof of the $\epsilon$-Approximate Agreement Property:} The agreement property can be proved analogous to the proofs of approximate consensus in some of the prior work (e.g., \cite{DBLP:conf/podc/GhineaLW22,vaidya_PODC12}). The argument below is similar to that used in \cite{DBLP:conf/podc/GhineaLW22}.

Consider any two non-faulty nodes $u,v\in V-F^*$ that have both completed round $r$. Each node updates its state variable, as described in the update procedure. Recall the maximum and minimum values $M_v$ and $m_v$ at node $v$, as defined in the update procedure's step III in round $r$, and analogously, the values $M_u$ and $m_u$ at node $u$ in its round $r$.
\begin{itemize}
\item By graph condition A, $S_{F_u}\cap S_{F_v}$ contains at least $f+1$ nodes, which must include at least one non-faulty node. In the rest of this proof of the {\it $\epsilon$-approximate agreement} property, let us consider one such {\bf non-faulty} node $w \in S_{F_u}\cap S_{F_v}$. In the following, we consider sets $X_{u,w}$ and $X_{v,w}$ at nodes $u$ and $v$, respectively, at the time when their sets $X_{u,u}$ and $X_{v,v}$ pass the completeness check with respect to $F_u$ and $F_v$, respectively, in round $r$.

Due to steps ii(b) and ii(c) of the algorithm, we have $X_{u,w}\subseteq X_{u,u}$ and $X_{v,w}\subseteq X_{v,v}$.

\item $X_{u,w}$ and $X_{v,w}$ each equal some instance of set $X_{w,w}$ sent by node $w$. Since $w$ is non-faulty, we know that the set $X_{w,w}$ at node $w$ grows monotonically. Therefore, without loss of generality, hereafter let us assume that $$X_{v,w}\subseteq X_{u,w}$$

\item By the completeness check, we know that set $X_{v,v}$ that passed the completeness check at node $v$ contains exactly one value for each node in $S_{F_v}$. Note that this set $X_{v,v}$ is then used by node $v$ to update $s_v$. This implies that none of the values from the nodes in $S_{F_v}$ are dropped from $X_{v,v}$ in step I of the update procedure at node $v$ in round $r$. This observation, and the fact that $|S_{F_v}|\geq 2f+1$ by Assumption A, in turn, imply that, after step II, the rank-$(f+1)$ value among the values received by node $v$ from the nodes in $S_{F_v}$ is still included in set $X_{v,v}$ when node $v$ performs step III in round $r$. Suppose that this rank-$(f+1)$ value is $\psi_{f+1}$, and it is from some node $p\in S_{F_v}$. For future reference, note that the value $\psi_{f+1}$ from node $p$ is included in the interval $[m_v,M_v]$ because that value from node $p$ is still included in $X_{v,v}$ when node $v$ performs step III in round $r$.

Also, value $\psi_{f+1}$ for node $p$ must also be included in $X_{v,w}$, since (i) by the completeness check, $X_{v,w}$ must include a value for node $p\in S_{F_v}$, and (ii) because $X_{v,w}\subseteq X_{v,v}$, and by the completeness check, $X_{v,v}$ contains exactly one value for $p\in S_{F_v}$, the value in $X_{v,w}$ for node $p\in S_{F_v}$ cannot be different from the value for node $p$ in $X_{v,v}$.

\item Since $X_{v,w}\subseteq X_{u,w}$, we can infer that value $\psi_{f+1}$ from node $p$ is also included in $X_{u,w}$, and hence in $X_{u,u}$, when node $u$ performs step I of the update procedure in round $r$. However, if node $p$ happens to be faulty, it is possible that $X_{u,u}$ also includes additional values for node $p$. (This is unlike $X_{v,v}$, which includes exactly one value for node $p$, because $p\in S_{F_v}$.) Now, there are two cases possible at node $u$:
\begin{itemize}
\item Value $\psi_{f+1}$ from node $p$ is {\bf not} removed from $X_{u,u}$ in step I of the update procedure in round $r$ at node $u$: In this case, we now show that this value will not be removed from $X_{u,u}$ in step II either.

Recall that $X_{v,w}$ contains exactly one value for each node in $S_{F_v}$, and value $\psi_{f+1}$ from node $p$ has rank $(f+1)$ among these values from $S_{F_v}$. Since $|S_{F_v}|\geq 2f+1$, this implies that $X_{v,w}$ contains values from $2f$ other nodes that are {\bf distinct from node $p$}, $f$ of which are ranked lower than $\psi_{f+1}$ from node $p$, and $f$ of which are ranked higher than $\psi_{f+1}$ from node $p$. Also, when $u$ passes the completeness check at node $u$ for $F_u$, $X_{v,w}\subseteq X_{u,w}\subseteq X_{u,u}$. This implies that the value $\psi_{f+1}$ from node $p$ will not be removed from $X_{u,u}$ in step II of the update procedure at node $u$.

Thus, the value $\psi_{f+1}$ is in $X_{u,u}$ when node $u$ performs step III of the update procedure in round $r$. Thus, $\psi_{f+1}\in [m_u,M_u]$. As observed earlier, we also have  $\psi_{f+1}\in [m_v,M_v]$. Hence the intersection of $[m_u,M_u]$ and $[m_v,M_v]$ is non-empty.

\item Value $\psi_{f+1}$ from node $p$ is removed from $X_{u,u}$ in step I of the state update procedure in round $r$ at node $u$: Thus, value $\psi_{f+1}$ from node $p$ is among the values from $\phi$ nodes that are removed in step I at node $u$ in round $r$. Therefore, $\phi\geq 1$ at node $u$ when it performs the state update in round $r$. Additionally, node $p$ must be faulty.

By the completeness check, as also observed in the previous case, $X_{u,u}$ contains values from $2f$ distinct nodes, which are also {\bf distinct from node $p$}, $f$ of which are ranked lower than $\psi_{f+1}$ from node $p$, and $f$ of which are ranked higher than $\psi_{f+1}$ from node $p$. This implies that, in step III, at least one value ranked lower and at least one value ranked higher than $\psi_{f+1}$ from node $p$ remains in $X_{u,u}$. This, in turn, means that $[m_u,M_u]$ contains $\psi_{f+1}$. Thus, the intersection of the intervals $[m_u,M_u]$ and $[m_v,M_v]$ contains $\psi_{f+1}$, and hence the intersection is non-empty.
\end{itemize}
\end{itemize}
Thus, in both cases, we have shown that the intervals $[m_u,M_u]$ and $[m_v,M_v]$ have a non-empty intersection.

The rest of the proof is relatively straightforward. Let us define $s_{min}(r)$ and $s_{max}(r)$ as the smallest and the largest values among the state variables of non-faulty nodes (i.e., nodes in $V-F^*$) at the start of the $r$-th iteration. Now, the argument presented in proving the {\it validity} property implies that $[m_u,M_u]\subseteq [s_{min}(r),s_{max}(r)]$, and $[m_v,M_v]\subseteq [s_{min}(r),s_{max}(r)]$. 

For clarity in the proof below, let us denote by $s_q(r)$ the value of the state variable at any node $q$ at the start of round $r$. Then, we have the following:
\begin{eqnarray}
    s_u(r+1) = \frac{M_u+m_u}{2} \\
    s_v(r+1) = \frac{M_v+m_v}{2}
\end{eqnarray}

As shown earlier, $[m_u,M_u]\cap [m_v,M_v]\neq\emptyset$. This implies that $M_v\geq m_u$ and $M_u\geq m_v$. We will use this implication soon.

Now, without loss of generality, suppose that $s_v(r+1)\leq s_u(r+1)$. Then, 
\begin{eqnarray*}
|s_u(r+1)-s_v(r+1)|& = & s_u(r+1)-s_v(r+1) \mbox{~~because $s_v(r+1)\leq s_u(r+1)$} \\
& = & \frac{M_u+m_u}{2}-\frac{M_v+m_v}{2} \\
& = & \frac{M_u-m_v}{2}-\frac{M_v-m_u}{2} \\
& \leq & \frac{M_u-m_v}{2} \mbox{\hspace*{0.1in}because  $M_v\geq m_u$, as noted above} \nonumber \\
& \leq & \frac{s_{max}(r)-s_{min}(r)}{2} \\
&& \hspace*{0.1in} \mbox{since $[m_u,M_u]\subseteq [s_{min}(r),s_{max}(r)]$ \& $[m_v,M_v]\subseteq [s_{min}(r),s_{max}(r)]$}
\end{eqnarray*}
Since the above claim is true for any two non-faulty nodes $u,v\in V-F^*$, we get
\[
s_{max}(r+1)-s_{min}(r+1)\leq \frac{s_{max}(r)-s_{min}(r)}{2}
\]
Thus, the range of the state variable values of the non-faulty nodes shrinks at least in half after each asynchronous round. Thus, after a sufficiently large number of rounds (which we call $R_{max}$), the range of values will shrink to a length smaller than $\epsilon$. Note that an upper bound on $R_{max}$ can be computed as a function of $\epsilon$ and the length of the range of the input values of the non-faulty nodes. In particular, if the input values are restricted to the interval $[0,1]$, then we can choose $R_{max} = \left\lceil log_2 \frac{1}{\epsilon}\right\rceil$, for $0<\epsilon< 1$. (When inputs are restricted in $[0,1]$, for $\epsilon\geq 1$, the input values of non-faulty nodes already satisfy $\epsilon$-approximate agreement.)

\section{Necessary Condition for Asynchronous Consensus}
\label{a:async:necessity}

Recall that, for the convenience of presentation, we defined ``Condition A''  (in Definition \ref{def:cond-authByzAsync}) that includes the conditions from Theorem \ref{t_authByzAsync}.
Our goal in this section is to prove the necessity of Condition A for asynchronous consensus.

The proof is by contradiction. Suppose that a given graph $G(V,E)$ does not satisfy Condition A, and there exists an asynchronous algorithm $\mathcal A$ that achieves $\epsilon$-approximate Byzantine consensus on graph $G(V,E)$ under message authentication for any $\epsilon>0$ in the presence of up to $f$ Byzantine faults.

The proof of necessity is in three parts, depending on which part of Condition A is violated by graph $G(V,E)$.

\noindent\textbf{Part 1: Uniqueness of the source component}

Suppose that there exists $F\subseteq V$ with $|F|\leq f$ such that $G_F$ contains at least two distinct
source components $S_0$ and $S_1$.

Recall that each source component $S_i$ ($i=1,2$) is a strongly connected component in $G_F$ that has no incoming links from any node in $V-F-S_i$.
Consider the following three executions of Algorithm $\mathcal A$, in each of which, all the nodes in $F$ are Byzantine faulty, and simply crash before the start of the execution (thus, the nodes in $F$ do not send any messages in any of these three executions):

\begin{itemize}
  \item Execution $\mathsf{E}^{(0)}$: Nodes in $F$ crash at the start of the execution. Each non-faulty node in $V-F$ has input $0$.
  By the validity condition, since each non-faulty node's input is $0$, all the nodes in $S_0\subseteq V-F$ output $0$. Suppose that the last of the nodes in $S_0$ decides its output by time $\tau_0$.

  \item Execution $\mathsf{E}^{(1)}$: Nodes in $F$ crash at the start of the execution. Each non-faulty node in $V-F$ has input $1$. By the validity condition, since each non-faulty node's input is 1, the nodes in $S_1\subseteq V-F$ output $1$. Suppose that the last of the nodes in $S_1$ decides its output by time $\tau_1$.

  \item Execution $\mathsf{E}$: Nodes in $F$ crash at the start of the execution. The nodes in $S_0$ have input $0$, the nodes in $S_1$ have input $1$, and all the other non-faulty nodes have input 0.
\end{itemize}

Now, consider execution $E$. By definition, $S_0$ has no incoming neighbors in $V-F-S_0$, and nodes in $F$ send no messages;
thus the nodes in $S_0$ have the same local histories in $\mathsf{E}$ and $\mathsf{E}^{(0)}$ up to time $\tau_0$,
so they will choose output $0$ in execution $\mathsf{E}$ too (as they do in $\mathsf{E}^{(0)}$). Similarly, nodes in $S_1$ choose output 1 in $\mathsf{E}$.
Hence, for any $\epsilon<1$, the $\epsilon$-approximate agreement condition is violated, resulting in  a contradiction. This proves that there must be a unique source component corresponding to any $F\subset V$, $|F|\leq f$.

\noindent\textbf{Part 2: Size of the unique source component $S_F$  must be at least $2f+1$}

Part 1 above proved that graph $G_F$ must contain a unique source component, which we refer to as $S_F$. In the proof in Parts 2 and 3, we assume that the source component is unique.

We now prove by contradiction that the size of the unique source component for any given $F\subset V$, $|F|\leq f$, must be at least $2f+1$.
Assume that there exists $F\subset V$ with $|F|\leq f$ such that the size of the unique source component $S_F$ in $G_F$ is at most $2f$. Since $n>f$, we know that $|S_F|\geq 1$.
We consider two cases:
\begin{itemize}
\item $|S_F|=1$: Since $f>0$, $|S_F|=1$ implies that $|S_F|\leq f$. We can prove that $|S_F|$ must be at least $f+1$. The proof of this is essentially identical to that in Part 2 of the proof in Appendix \ref{a:sync:necessity}, with the only difference being that, in the asynchronous case, the $\epsilon$-approximate agreement property is violated for $\epsilon<1$, when inputs are in $[0,1]$. (The proof in Appendix \ref{a:sync:necessity} instead addresses the violation of exact agreement property, because that proof is for a synchronous system.) We do not repeat the entire proof here, for the sake of brevity. Thus, we have shown that $|S_F|$ cannot equal 1.

\item $|S_F| \geq 2$: The rest of the proof below assumes that $|S_F|\geq 2$.
\end{itemize}

Partition $S_F$ into two non-empty disjoint sets $L$ and $R$ (i.e., $S_F = L \cup R$  and $L\cap R=\emptyset$), such that $0<|L|\leq f$ and $0<|R|\leq f$. Such a partition exists because $2f\geq |S_F|\geq 2$. Now consider the following three executions of Algorithm $\mathcal A$:

\begin{itemize}
    \item Execution $\mathsf{E}$: The nodes in $V-R$ (including the nodes in $L$) have input 0, the nodes in $R$ have input 1. The nodes in $F$ are faulty, and crash at the start of the execution (i.e., do not send any messages during the execution). In this case, the nodes in $L\cup R$ do not receive any messages from the nodes in $V-L-R=V-S_F$ (recall that the nodes in $V-F-S_F$ do not have links to the nodes in $S_F$).

     This execution is admissible because (i) $|F|\leq f$, (ii) up to $f$ Byzantine faulty nodes are allowed, and (iii) Byzantine faulty nodes can potentially crash at any time.

    Now, the nodes in $L$ and $R$ must decide their output eventually.
 \begin{itemize}
      \item Let $\tau$ denote the time by which the last of the nodes in $L\cup R$ chooses its output.
        \item    To satisfy the agreement condition, all the nodes in $L\cup R$ must choose the same output. Suppose that the nodes in $L\cup R$ choose their output equal to some value $\alpha$.\\
    \end{itemize}

    \item Execution $\mathsf{E}^{(0)}$: Now we construct an execution $\mathsf{E}^{(0)}$ which, up to time $\tau$, will appear to the nodes in $L$ indistinguishable from execution $\mathsf{E}$. Suppose that the nodes in $V-R$ (including $L$) have input 0, and the nodes in $R$ have input 1. The nodes in $R$ are Byzantine faulty, but they \textit{follow the algorithm specification} for algorithm $\mathcal{A}$ with their input being 1. The nodes in $V-R$ are non-faulty with input $0$. The asynchronous scheduler delays every message sent from the nodes in $F$ to the nodes in $L\cup R=S_F$ until after time $\tau$. Thus, until time $\tau$, the nodes in $L\cup R$ will not receive any messages from the nodes in $V-L-R=V-S_F$ (recall that the nodes in $V-F-S_F$ cannot directly send messages to $S_F$).
    
    This execution is admissible because (i) $|R|\leq f$, (ii) up to $f$ Byzantine faulty nodes are allowed, and (iii) the asynchronous model allows arbitrarily long message delays.
  
  Now, to the nodes in $L$, up to time $\tau$, execution $\mathsf{E}^{(0)}$ appears identical to execution $\mathsf{E}$. Therefore, the nodes in $L$ will choose an output equal to $\alpha$ by time $\tau$ (as they do in Execution $\mathsf{E}$).
  
  However, in execution $\mathsf{E}^{(0)}$, all the non-faulty nodes (i.e., the nodes in $V-R$, including $L$) have input 0; therefore, by the validity condition, the nodes in $L$ must choose output 0. This implies that $\alpha$ must be equal to 0. \\

  \item Execution $\mathsf{E}^{(1)}$: Now we construct an execution $\mathsf{E}^{(1)}$ which, up to time $\tau$, will appear to the nodes in $R$ indistinguishable from execution $\mathsf{E}$. Suppose that the nodes in $V-L$ (including $R$) have input 1, and the nodes in $L$ have input 0. The nodes in $L$ are Byzantine faulty, but they \textit{follow the algorithm
  specification} for algorithm $\mathcal{A}$ with their input being 0.
  The asynchronous scheduler delays every message sent from nodes in $F$ to the nodes in $L\cup R=S_F$ until after time $\tau$.
  
  This execution is admissible because (i) $|F|\leq f$, (ii) up to $f$ Byzantine faulty nodes are allowed, and (iii) the asynchronous model allows arbitrarily long message delays.
  
  Now, to the nodes in $R$, up to time $\tau$, execution $\mathsf{E}^{(1)}$ appears identical to execution $\mathsf{E}$. Therefore, the nodes in $R$ will choose output equal to $\alpha$ by time $\tau$ (as they do in Execution $\mathsf{E}$). As determined in the discussion of execution $\mathsf{E}^{(0)}$ above, $\alpha=0$. Thus, the nodes in $R$ choose their output equal to 0 in execution  $\mathsf{E}^{(1)}$ also.
  
  However, in execution $\mathsf{E}^{(1)}$, all the non-faulty nodes (i.e., the nodes in $V-L$, including $R$) have input 1; therefore, by the validity condition, the nodes in $R$ must choose output 1.
  
  This contradicts the conclusion in the previous paragraph that the nodes in $R$ choose output 0.
\end{itemize}
The contradiction derived above proves the necessity of $|S_F|\geq 2f+1$.\\

\noindent\textbf{Part 3: The intersection of two source components must contain at least $f+1$ nodes}

In Parts 1 and 2 above, we have already proved the necessity of two aspects of  Condition A. Therefore, in Part 3, we now assume that the source component $S_F$ corresponding to any $F\subset V$, $|F|\geq f$, is unique and $|S_F|\geq 2f+1$. (Note that if these conditions do not hold then we can derive a contradiction using the proofs in Parts 1 and 2.)

We now prove by contradiction that the intersection of any two source components must contain at least $f+1$ nodes. So, suppose that the graph does not satisfy this condition, and there exist $F,F'\subseteq V$ with $|F|,|F'|\leq f$,  such that $|S_F|,|S_{F'}|\geq 2f+1$, and
$$|S_F\cap S_{F'}|\leq f$$

Let $X=S_F\cap S_{F'}$; thus, $|X|\le f$. Define
$A=S_F- X$ and $B=S_{F'}- X$. Thus, $S_F=A\cup X$ and $S_{F'}=B\cup X$. Since $|S_F|\geq 2f+1$, $|S_{F'}|\geq 2f+1$,  $|X|\leq f$ and $f>0$, both $A$ and $B$
are nonempty. The following observations are used in the proof below:
\begin{itemize}
\item There are no links from the nodes in $V-F-S_F$ to the nodes in $S_F=A\cup X$.

\item $B \cap S_F=\emptyset$ (by the definition of $X$), and $B-F\subseteq V-F-S_F$. Therefore, the previous observation implies that there are no links from the nodes in $B-F$ to the nodes in $S_F=A\cup X$.

\item There are no links from the nodes in $V-F-S_{F'}$ to the nodes in $S_{F'}=B\cup X$.

\item $A\cap S_{F'}=\emptyset$ (by the definition of $X$), and $A-F'\subseteq V-F'-S_{F'}$. Therefore, the previous observation implies that there are no links from the nodes in $A-F'$ to the nodes in $S_{F'}=B\cup X$.
\end{itemize}

Now consider the following three executions of Algorithm $\mathcal A$:
 \begin{itemize}
    \item Execution $\mathsf{E}^{(0)}$: The nodes in $F$ are faulty, and crash at the start of the execution (i.e., do not send any messages during the execution). All the nodes in $V-F$ have input 0.
    
    In this case, the nodes in $S_F=A\cup X$ do not receive any messages from the nodes in $V-S_F=V-A-X$, either directly from those nodes or via any other nodes (Recall that the nodes in $V-F-S_F$ do not have links to the nodes in $S_F$. While the nodes in  $V-F-S_F$ may have links to the nodes in $F$, the nodes in $F$ crash at the start of the execution.) Thus, in execution $\mathsf{E}^{(0)}$, the nodes in $A$ do not receive any authenticated  information that originated at the nodes in $V-F-S_F$. (We will also use this fact in the analysis of execution $\mathsf{E}$ below.)

     This execution is admissible because (i) $|F|\leq f$, (ii) up to $f$ Byzantine faulty nodes are allowed, and (iii) Byzantine faulty nodes can potentially crash at any time.

    Now, the nodes in $A$ must decide their output eventually.
 \begin{itemize}
      \item Let $\tau_0$ denote the time by which the last of the nodes in $A$ chooses its output.
        \item  To satisfy the validity condition, all the nodes in $A$ must choose output 0, because all the non-faulty nodes in $V-F$ have input 0.\\
    \end{itemize}

       \item Execution $\mathsf{E}^{(1)}$: The nodes in $F'$ are faulty, and crash at the start of the execution (i.e., do not send any messages during the execution). All the nodes in $V-F'$ have input 1.
    
    In this case, the nodes in $S_{F'}=B\cup X$ do not receive any messages from the nodes in $V-S_{F'}=V-B-X$, either directly from those nodes or via any other nodes (the reasoning is analogous to that in the previous execution). Thus, in execution $\mathsf{E}^{(1)}$, the nodes in $B$ do not receive any authenticated  information that originated at the nodes in $V-F'-S_{F'}$. (We will also use this fact in the analysis of execution $\mathsf{E}$ below.)
    
    Now, the nodes in $B$ must decide their output eventually.
 \begin{itemize}
      \item Let $\tau_1$ denote the time by which the last of the nodes in $B$ chooses its output.
        \item  To satisfy the validity condition, all the nodes in $B$ must choose output 1, because all the non-faulty nodes in $V-{F'}$ have input 1.\\
    \end{itemize}

    \item Execution $\mathsf{E}$: Now we construct an execution $\mathsf{E}$ such that, (i) up to time $\tau_0$, $\mathsf{E}$ will appear to the nodes in $A$ indistinguishable from execution $\mathsf{E^{(0)}}$, and (ii) up to time $\tau_1$, $\mathsf{E}$ will appear to the nodes in $B$ indistinguishable from execution $\mathsf{E^{(1)}}$.

    In execution $\mathsf{E}$, the nodes in $X$ are Byzantine faulty, and the nodes in $V-X$ are non-faulty. The nodes in $A$ have input 0, and the nodes in $B$ have input 1. The remaining nodes in $V-X-A-B$ may have arbitrary input value (say, 0).

    In execution $\mathsf{E}$, suppose that any messages from the nodes in $F$ to the nodes in $A$ are not delivered to the nodes in $A$ until after time $\tau_0$. Similarly, any messages from the nodes in $F$ to the nodes in $B$ are not delivered to the nodes in $B$ until after time $\tau_1$.
    
    Finally, in execution $\mathsf{E}$, up to time $\tau_0$, suppose that the nodes in $X$ send the same messages to the nodes in $A$ that they send to $A$ in execution $\mathsf{E^{(0)}}$, following the same schedule as in $\mathsf{E^{(0)}}$. Similarly, up to time $\tau_1$, suppose that the nodes in $X$ send the same messages to the nodes in $B$ that they send to $B$ in execution $\mathsf{E^{(1)}}$, following the same schedule as in $\mathsf{E^{(1)}}$. Note that this behavior by set $X$ is possible for two reasons: (i) As noted earlier, there are no links from $B-F$ to $S_F=A\cup X$, and no links from $A-F'$ to $S_{F'}=B\cup X$. (ii) In execution $\mathsf{E}$ and $\mathsf{E^{(0)}}$ both, the nodes in $A$ do not receive any messages from $F$ until time $\tau_0$, and similarly, in executions $\mathsf{E}$ and $\mathsf{E^{(1)}}$ both, the nodes in $B$ do not receive any messages from $F'$ until time $\tau_1$.

    Thus, due to the behavior of the nodes in $X$, and the delayed messages from $F$, to the nodes in $A$, execution $\mathsf{E}$ appears indistinguishable from execution $\mathsf{E^{(0)}}$ up to time $\tau_0$. Thus, as in execution $\mathsf{E^{(0)}}$, the nodes in $A$ choose output 0 by time $\tau_0$.

    Similarly, due to the behavior of the nodes in $X$, and the delayed messages from $F'$, to the nodes in $B$, this execution appears indistinguishable from execution $\mathsf{E^{(1)}}$ up to time $\tau_1$. Thus, as in execution $\mathsf{E^{(1)}}$, the nodes in $B$ choose output 1 by time $\tau_0$.

    Since $A$ and $B$ are both non-empty, the above violates the $\epsilon$-approximate agreement property for $\epsilon<1$. Thus, algorithm $\mathcal A$ is incorrect.
    
    This is a contradiction. Therefore, we have proved that $S_{F}\cap S_{F'}$ must contain at least $f+1$ nodes.
  
\end{itemize}

In parts 1, 2 and 3 together, we have proved the necessity of Condition A for asynchronous consensus.

\section{Total Order on Reach Conditions in Table \ref{t:directed}}
\label{a:reach}

Recall that reach conditions in Definition \ref{def:reach1}. It is easy to prove the order below on the reach conditions. However, we include the proofs here for completeness. The reader may omit Appendix \ref{a:reach} without a loss of continuity.

\begin{tabular}{p{50pt}p{10pt}p{60pt}p{10pt}p{50pt}p{10pt}p{60pt}p{10pt}p{50pt}}
1-reach \newline with $\rho=1$ &
$\Leftarrow$ & 1-reach\newline with $\rho=f+1$ &
$\Leftarrow$ & 2-reach \newline with $\rho=1$ & 
$\Leftarrow$ & 2-reach \newline with $\rho=f+1$ &
$\Leftarrow$ & 3-reach \newline with $\rho=1$ 
\end{tabular}

Note that $P\Leftarrow Q$ means that condition $Q$ implies condition $P$ (or, condition $Q$ is stronger than condition $P$).

There are at least two different ways to construct the proofs. The approach we take uses only graph theoretic properties. An alternative proof method (which we do not use) would be based on the following observation: If any system that can solve problem A can also necessarily solve problem B, then a sufficient graph condition for problem A implies a sufficient graph condition for problem B. In any case, we do not pursue the latter approach here.

\subsection{Proof of ``1-reach with $\rho=1$ $\Leftarrow$ 1-reach with $\rho=f+1$''}
\label{a.1}

    Recall that we assume $f>0$. Now, for some set $F$ and nodes $u,v$, if $|reach_u(F)\cap reach_v(F)|\geq f+1$, then 
    $|reach_u(F)\cap reach_v(F)|\geq 1$.
    Therefore, the definition of the 1-reach condition trivially implies the following order:
    
    \hspace*{1in} 1-reach with $\rho=1$ $\Leftarrow$ 1-reach with $\rho=f+1$\\

\subsection{Proof of ``1-reach with $\rho=f+1$ $\Leftarrow$ 2-reach with $\rho=1$''}

 We will prove the contrapositive. Suppose that the graph $G(V,E)$ does not satisfy 1-reach with $\rho=f+1$. Then there exists $F\subset V$, $|F|\leq f$ and nodes $u,v\in V-F$ such that $|reach_u(F)\cap reach_v(F)|\leq f$. We consider two cases:
 \begin{itemize}
     \item (Case 1) $|reach_u(F)\cap reach_v(F)|=0$: Let us choose $F_1=F_2=F$. Then $u\in V-F_1$, $v\in V-F_2$, and $|reach_u(F_1)\cap reach_v(F_1)|=0$.
     This violates the 2-reach condition with $\rho=1$.
     
     \item (Case 2) $1\leq |reach_u(F)\cap reach_v(F)|\leq f$: 
     Let $X=reach_u(F)\cap reach_v(F)$. Then $X$ is non-empty. Also, by the definition of the reach sets $X\subseteq V-F$.
     
     Let us consider some node $w\in X$. If there is some node $z\in V-F-X$ in $reach_w(F)$, then $z$ has a path to $w$ in $G_F$. This, in conjunction with the fact that $w\in X$, implies that $z$ also has paths $u$ and $v$ both in $G_F$. Therefore, $z$ must be in $reach_u(F)$ and $reach_v(F)$ both. Hence $z\in X$, which is a contradiction with $z\in V-F-X$. Thus, we can infer that $reach_w(F)\subseteq X$ for $w\in X$, and $|reach_w(F)|\leq |X|\leq f$.
     
     Let us define $F_1=F$ and $F_2=reach_w(F)$. We can choose $F_2=reach_w(F)$ because $|reach_w(F)|\leq f$. Observe that $w\in X\subseteq V-F=V-F_1$. Consider any node $y\in V-F_2$. Then $reach_y(F_2)\subseteq V-F_2$, by the definition of a reach set. Since $F_2=reach_w(F)=reach_w(F_1)$, we get $reach_w(F_1)\cap reach_y(F_2)\subseteq F_2\cap (V-F_2)=\emptyset$. This violates the 2-reach condition with $\rho=1$.
 
  \end{itemize}
In both cases above, the 2-reach condition with $\rho=1$ is violated, proving the contrapositive.

\subsection{Proof of ``2-reach with $\rho=1$ $\Leftarrow$ 2-reach with $\rho=f+1$''}

Recall that we assume $f>0$. Now, for some sets $F_1$ and $F_2$, and nodes $u,v$, if $|reach_u(F_1)\cap reach_v(F_2)|\geq f+1$, then 
 $|reach_u(F_1)\cap reach_v(F_2)|\geq 1$.
    Therefore, the definition of the 2-reach condition trivially implies the following order:

\hspace*{1in} 2-reach with $\rho=1$ $\Leftarrow$ 2-reach with $\rho=f+1$\\

\subsection{Proof of ``2-reach with $\rho=f+1$ $\Leftarrow$ 3-reach with $\rho=1$''}

We will prove the contrapositive.
 To prove this let us assume that the given graph violates the 2-reach condition with $\rho=f+1$. Therefore, there exist sets $F_1,F_2\subseteq V$, $|F_1|,|F_2|\leq f$, and $u\in V-F_1$, $v\in V-F_2$ such that $|reach_u(F_1)\cap reach_v(F_2)|\leq f$. For brevity, let us define  $X=reach_u(F_1)\cap reach_v(F_2)$. Thus, $|X|\leq f$.

To simplify the presentation here, when a set contains a single element (e.g., set $\{u\}$), we will not write the curly brackets. For example, $X-\{u\}-\{v\}$ will be written as $X-u-v$ in the simplified notation, and $X\cup \{w\}$ will be written as $X\cup w$.
In other words, when using the simplified notation, $X-u=\{x~|~x\in X \mbox{~and~} x\neq u\}$.
 
Recall that we assume $f>0$. We will consider four cases below, and in each case, define new sets $\FF$, $\FF_1$ and $\FF_2$, such that the newly defined sets will satisfy the following properties: 
    \begin{itemize}
    \item (Property 1) $|\FF|, |\FF_1|,|\FF_2|\leq f$ 
    \item (Property 2) $u\in V-\FF-\FF_1$ and $v\in V-\FF-\FF_2$
     \item (Property 3) $F_1 \subseteq \FF\cup \FF_1$ and $F_2\subseteq \FF\cup \FF_2$
    \item (Property 4) $\FF\cup \FF_1\cup\FF_2=X\cup F_1\cup F_2$
    \item (Property 5) $reach_u(\FF\cup\FF_1)\cap reach_v(\FF\cup\FF_2)=\emptyset$
    \end{itemize}

Now we enumerate the four cases (and sub-cases, when needed) and define the new sets. Note that, by definition, $u \in reach_u(F_1)$ and $v \in reach_v(F_2)$. In each case, it is easy to see that the new sets satisfy properties 1 through 4 above. Later we will show that they satisfy property 5 as well. Recall that we assume $f>0$.
\begin{itemize}
    \item (Case 1) $u,v\not\in X$: Define $\FF=X$, $\FF_1=F_1$, $\FF_2=F_2$.
     \item (Case 2) $u \in X$ and $v\not\in X$: Thus, $u\neq v$. There are two possibilities in this case.
    \begin{list}{}{}
        \item[(i)] $F_2=\emptyset$: Define $\FF=X-u$, $\FF_1=F_1$, and $\FF_2=\{u\}$. 
        
        \item[(ii)] $F_2\neq\emptyset$: Since $u\in X$ and $X=reach_u(F_1)\cap reach_v(F_2)$, $u\in reach_v(F_2)$. Recall that $v\in V-F_2$. Also, since $u\in reach_v(F_2)$, $u\in V-F_2$. So $u,v\not\in F_2$, and since $F_2
        \neq\emptyset$, there must be some other node $w\in F_2$. Define $\FF=(X-u)\cup w$, $\FF_1=F_1$, $\FF_2=(F_2-w)\cup u$.
     
     \end{list}
  
    \item (Case 3) $u \not\in X$ and $v\in X$: This case is handled analogous to Case 2.

    \item (Case 4) $u \in X$ and $v\in X$: Here, similar to Case 2 above, we need to consider several possibilities.
    \begin{list}{}{}
        \item[(i)] $F_1=F_2=\emptyset$: Define $\FF=X-u-v$, $\FF_1=\{v\}$, and $\FF_2=\{u\}$.

        \item[(ii)] $F_1=\emptyset$ and $F_2\neq\emptyset$: Following an argument similar to that in Case 2(ii), there must be some node $w\in F_2$ that is distinct from $u$ and $v$. Define $\FF=(X-u-v)\cup w$, $\FF_1=\{v\}$, and $\FF_2=(F_2-w)\cup u$.
  
        \item [(iii)] $F_1\neq\emptyset$ and $F_2=\emptyset$: This is handled analogous to case (ii) above.
        
        \item [(iv)] $F_1\neq\emptyset$ and $F_2\neq\emptyset$: Then, similar to case (ii), there must be nodes $w_1\in F_1$ and $w_2\in F_2$ such that $w_1$ and $w_2$ are both distinct from $u$ and $v$ (however, it is possible that $w_1=w_2$). Define $\FF=(X-u-v)\cup w_1\cup w_2$, $\FF_1=(F_1-w_1)\cup v$ and $\FF_2=(F_2-w_2)\cup u$.
    \end{list}
    \end{itemize}

 It is straightforward to verify that the sets $\FF,\FF_1,\FF_2$ defined in each case satisfy properties 1, 2, 3 and 4 listed above. Using properties 1 through 4, we now show that these sets satisfy property 5 as well.

\begin{itemize}
    \item Properties 1 and 2 allow us to define  $reach_u(\FF\cup\FF_1)$ and $reach_v(\FF\cup\FF_2)$. By Property 3, $F_1\subseteq \FF\cup\FF_1$, therefore, $reach_u(\FF\cup\FF_1)\subseteq reach_u(F_1)$. Similarly, $reach_v(\FF\cup\FF_2)\subseteq reach_u(F_2)$. Therefore,
    \begin{eqnarray}
    reach_u(\FF\cup\FF_1)\cap reach_v(\FF\cup\FF_2)& \subseteq & reach_u(F_1)\cap reach_v(F_2) \\
    \Rightarrow reach_u(\FF\cup\FF_1)\cap reach_v(\FF\cup\FF_2) & \subseteq & X 
    \label{e_case4_1}
    \end{eqnarray}
    \item Now, by the definition of the reach sets, $reach_u(\FF\cup\FF_1)$ is disjoint with $\FF\cup \FF_1$. Similarly, $reach_v(\FF\cup\FF_2)$ is disjoint with $\FF\cup \FF_2$. This implies that $reach_u(\FF\cup\FF_1)\cap reach_v(\FF\cup\FF_2)$ is disjoint with $\FF\cup\FF_1\cup\FF_2$. Thus, by property 4, $reach_u(\FF\cup\FF_1)\cap reach_v(\FF\cup\FF_2)$ does not contain any nodes in $X\cup F_1\cup F_2$. Therefore,
    \begin{eqnarray}
    (reach_u(\FF\cup\FF_1)\cap reach_v(\FF\cup\FF_2))\cap X&=&\emptyset  \label{e_case4_2}
    \end{eqnarray}
\end{itemize}
Equations \ref{e_case4_1} and \ref{e_case4_2} together imply that
$reach_u(\FF\cup\FF_1)\cap reach_v(\FF\cup\FF_2)=\emptyset$, proving Property 5. Since properties 1 through 4 hold in all the four cases enumerated earlier, Property 5 is also proved in all the cases. Property 5 implies that 
$$|reach_u(\FF\cup\FF_1)\cap reach_v(\FF\cup\FF_2)|=0$$
Therefore, the 3-reach condition with $\rho=1$ is violated, proving the contrapositive.

\end{document}